\numberwithin{equation}{section}
\theoremstyle{plain}
\newtheorem{theorem}{Theorem}
\numberwithin{theorem}{section}
\newtheorem{lemma}[theorem]{Lemma}       	% [theorem] ==> theorems and lemmas will share a counter
\newtheorem{proposition}[theorem]{Proposition}
\newtheorem{corollary}[theorem]{Corollary}
\theoremstyle{definition}
\newtheorem{remark}[theorem]{Remark}
\newtheorem*{remark*}{Remark}
\newcommand{\E}{\mathbb{E}}
\renewcommand{\(}{\left(}
\renewcommand{\)}{\right)}
\renewcommand{\[}{\left[}
\renewcommand{\]}{\right]}
\newcommand\Cb{\mathds{C}}
\newcommand\Eb{\mathds{E}}
\newcommand\Pb{\mathds{P}}
\newcommand\Rb{\mathds{R}}
\newcommand\Dc{\mathscr{D}}
\newcommand \al {\alpha}
\newcommand\eps{\varepsilon}
\newcommand\sig{\sigma}
\newcommand\del{\delta}
\newcommand\kap{\kappa}
\newcommand \cb {c_0}
\newcommand \pib {\pi_0}
\newcommand\Fv{\textbf{F}} %<--- with concrete fonts there is no bolded math font so you changed to bold text font
\newcommand\Gv{\textbf{G}} %<--- with concrete fonts there is no bolded math font so you changed to bold text font
\newcommand\xv{\mathbf{x}}
\newcommand\yv{\mathbf{y}}
\newcommand\pih{\hat{\pi}}
\newcommand \Uh {\widehat{U}}
\newcommand\ch{\hat{c}}
\newcommand\tauh{\hat{\tau}}
\newcommand\Vt{\widetilde{V}}
\newcommand\dd{\mathrm{d}}
\newcommand\ee{\mathrm{e}}
\providecommand{\keywords}[1]{\textbf{\textit{Keywords: }} #1}
\newcommand \ws {w^*}
\newcommand \wa {w_\al}
\newcommand \wo {w_1}
\newcommand \ys {y^*}
\newcommand \etas {\eta^*}
\newcommand \pis {\pi^*}
\newcommand \cs {c^*}
\newif\ifcomment
	\newcommand{\mcomment}[1]{\marginpar{\tiny\textbf{\color{blue} #1}}}
	\newcommand{\mcomment}[1]{}
\begin{document}

\title{Optimal Dividend Distribution Under Drawdown and Ratcheting Constraints on Dividend Rates}

\author{
Bahman Angoshtari
\thanks{Department of Applied Mathematics, University of Washington.  \textbf{e-mail}: \url{bahmang@uw.edu}}
\and
Erhan Bayraktar
\thanks{Department of Mathematics, University of Michigan.  \textbf{e-mail}: \url{erhan@umich.edu} E. Bayraktar is supported in part by the National Science Foundation under grant DMS-1613170 and by the Susan M. Smith Professorship.}
\and
Virginia R. Young
\thanks{Department of Mathematics, University of Michigan.  \textbf{e-mail}: \url{vryoung@umich.edu} V. R. Young is supported in part by the Cecil J. and Ethel M. Nesbitt Professorship.}
}

\date{This version: \today}

\maketitle

\begin{abstract}
We consider the optimal dividend problem under a habit-formation constraint that prevents the dividend rate to fall below a certain proportion of its historical maximum, a so-called {\it drawdown} constraint.  Our problem is an extension of Duesenberry's optimal-consumption problem under a ratcheting constraint, studied by \cite{Dybvig1995}, in which consumption is restrained to be nondecreasing.  Our problem also differs from Dybvig's in that the time of ruin could be finite in our setting, whereas ruin is impossible in Dybvig's work.  We formulate our problem as a stochastic control problem with the objective of maximizing the expected discounted utility of the dividend stream until bankruptcy, in which risk preferences are embodied by power utility.  We write the corresponding Hamilton-Jacobi-Bellman variational inequality as a nonlinear, free-boundary problem and solve it semi-explicitly via the Legendre transform.  The optimal (excess) dividend rate $c^*_t$ - as a function of the company's current surplus $X_t$ and its historical running maximum of the (excess) dividend rate $z_t$ - is as follows: There are constants $0 < \wa < \wo < \ws$ such that (1) for $0 < X_t \le \wa z_t$, it is optimal to pay dividends at the lowest rate $\al z_t$, (2) for $\wa z_t < X_t < \wo z_t$, it is optimal to distribute dividends at an intermediate rate $c^*_t \in (\al z_t, z_t)$, (3) for $\wo z_t < X_t < \ws z_t$, it is optimal to distribute dividends at the historical peak rate $z_t$, (4) for $X_t > \ws z_t$, it is optimal to increase the dividend rate above $z_t$, and (5) it is optimal to increase $z_t$ via singular control as needed to keep $X_t \le \ws z_t$.  Because, the maximum (excess) dividend rate will eventually be proportional to the running maximum of the surplus, ``mountains will have to move'' before we increase the dividend rate beyond its historical maximum.
\end{abstract}

\keywords{Optimal dividend, drawdown constraint, ratcheting, stochastic control, optimal control, variational inequality, free-boundary problem.}

%-----------------------------------------------------------------------------------
%
%       SECTION: 		Introduction
%
%-----------------------------------------------------------------------------------

\section{Introduction}

One of the fundamental goals of risk managers is to improve the stability of companies that operate in risky environments. This goal can be reached through the choice of the dividend policy, that is, how much of a company's surplus to pay out (or, equivalently, retain). There is a tradeoff embedded in this decision. Paying out more dividends would increase a company's worth in the eyes of its shareholders, while doing so would also reduce future reserves that are essential for the survival of the company through financial hardships.  Since the seminal work of \cite{DeFinetti1957}, there has been an active area of research devoted to finding optimal payment of dividends under various criteria.

Shareholders and analysts react negatively (and arguably overreact) when the rate of dividend payment decreases. The existing literature on optimal dividend policies, however, mainly ignores shareholder's averseness to decreases in dividend payments. Our main goal in this paper is to address this issue by considering a so-called \emph{drawdown constraint} on the rate of dividend payments, that is, we demand that the future rate of dividend payments cannot go below a fixed proportion of the historical maximum of the dividend rate to date.  Interestingly, we find that the running maximum of the optimal dividend rate is (eventually) proportional to the running maximum of the surplus process. In other words, ``mountains will have to move'' before increasing the dividend rate beyond it historical peak.

% \mcomment{Erhan: we should in the first two paragraphs tell that surprisingly it turns out that the optimal running maximum dividend rate turns out to be proportional to the running maximum of the surplus process X. Which tells the story that the mountains need to move before we increase the dividend rate. In fact, this could be tied with Albrecher et al., since they actually take their strategy of this form (function of the running maximum), we show that the optimal strategy turns out to be a function of the maximum wealth. (Their functions are too trivial as you said, because they allow for one time change only.)\vspace{1em}}

Many of the existing results in optimal dividends show that to maximize the expected discounted dividends until bankruptcy, it is optimal to pay dividends according to a band strategy or its special case of a barrier strategy; see \cite{Avanzi2009} for a survey.  For example, \cite{AsmussenTaksar1997} consider the optimal dividend strategy for an insurance company whose surplus process follows Brownian motion with drift.  They assume that the insurer pays dividends to maximize the expectation of discounted dividends paid between now and ruin, and they consider two cases. In the first case, \cite{AsmussenTaksar1997} constrain the dividend rate to lie in an interval $[0, C_0]$.  If $C_0$ is less than or equal to a critical value $C^*$, then it is optimal to pay dividends at the rate $C_0$ at all levels of the surplus.  If $C_0 > C^*$, then it is optimal to pay dividends at rate $0$ if surplus is lower than some value $X^*$ and at rate $C_0$ if surplus is greater than $X^*$.  In the second case, \cite{AsmussenTaksar1997} do not restrict the rate of dividend payments, and it is optimal to pay dividends via a barrier strategy, under which all the surplus in excess of a barrier $b$ are paid out as dividends.  Also, see \cite{AsmussenHojgaardTaksar2000} for an extension of \cite{AsmussenTaksar1997} in which the authors allow the insurer to control its surplus via reinsurance.  Finally, see  \cite{GerberShiu2006} for explicit calculations related to \cite{AsmussenTaksar1997}.

Note that paying dividends at rate $C_0$ if surplus is greater than $X^*$ and at rate $0$ if surplus is less than $X^*$ results in a volatile all-or-nothing path.  Incorporating habit formation is a way to smooth the rate of dividend payments; see \cite{Constantinides1990} for seminal work in habit formation.  We model habit formation via a drawdown constraint on the rate of excess dividend payments; by ``excess'' we mean in \textit{excess} of the interest paid if the surplus were invested completely at the riskless rate of return.  We require that the rate of excess dividend payments never falls below some given fraction of the historical maximum rate of excess dividend payments.  This requirement is in contrast to most drawdown constraints in the literature.  Specifically, most drawdown constraints apply to the surplus or wealth, not to rates of payout or consumption; see, for example, \cite{GrossmanZhou1993}, \cite{CvitanicKaratzas1995}, and \cite{ElieTouzi2008}.

That said, there are two important papers that do impose drawdown constraints on consumption. The first is \cite{Dybvig1995}, who imposes a ratcheting constraint on the rate of consumption and finds the optimal investment and consumption policies for an investor in a Black-Scholes financial market who seeks to maximize discounted utility of consumption, in which risk preferences exhibit constant relative risk aversion, as we assume in this paper.  The second is \cite{Arun2012}, who extends \cite{Dybvig1995} by allowing the rate of consumption to decrease, but not below a fraction of its maximum rate.  This constraint is identical to the drawdown constraint that we apply to the excess dividend rate.  What distinguishes our work from \cite{Dybvig1995} and \cite{Arun2012} is that we do \textit{not} impose the additional requirement that wealth remain non-negative; instead, we allow bankruptcy, which occurs with positive probability, and which more closely follows the dividend models used in the literature (see, for example, \cite{GerberShiu2004}).  We further discuss the difference between our work and theirs in Remark \ref{rem:DybvigArun}.

\cite{ABB2018} study a problem related to the one considered by \cite{Dybvig1995}, except they allow a one-time only increase in the dividend rate.  They pre-specify the two dividend rates and determine the optimal level of surplus above which the company pays out at the higher rate.  On the other hand, we allow the dividend rate to increase without restriction, other than it may not drop below a fixed proportion of its historical maximum.  Therefore, our work is much more general than that of \cite{ABB2018}.

The remainder of the paper is organized as follows.  In Section \ref{sec:2}, we define our optimal dividend problem.
In Section \ref{sec:4}, we hypothesize that the value function is
a solution of a free-boundary problem with free-boundary conditions arising from smooth-fit and super-contact conditions and with two state variables.  In Section \ref{sec:41}, we reduce the dimension of our free-boundary problem from two state variables to one and use the convex Legendre transform to solve the dual of the resulting free-boundary problem; then, in Section \ref{sec:42}, we reverse the Legendre transform to obtain our value function. 
In Section \ref{sec:Prop}, we prove further properties of the optimal investment and dividend policies, and we consider limiting cases of our problem. In Sections \ref{sec:4} and \ref{sec:Prop}, we also present some numerical examples to demonstrate our results. Appendix \ref{app:Verify} provides the verification argument for our stochastic control problem, namely, that the solution of the free-boundary problem obtained in Section \ref{sec:4} is the value function.

%-----------------------------------------------------------------------------------
%
%       SECTION: 		Optimal dividend problem
%
%-----------------------------------------------------------------------------------

\section{Preliminary definitions and the problem setup}\label{sec:2}

Consider a company that has to decide on its investment and dividend policies. For simplicity, we assume that the number of shares of the company is fixed. Thus, its investment policy is dictated by its debt policy, that is, how much bond it issues or buys back. The bonds are issued at a fixed interest rate $r \ge 0$.  We represent the investment policy via the value of the total assets of the company at time $t$, denoted by $\pi_t$.  We assume that the company can instantly increase its total assets by issuing bonds and buying new assets.  Similarly, it can instantly reduce its total assets by selling existing assets and using the proceeds to buy back its bonds.

The company also chooses how to pay dividends to its shareholders. Let $C_t$ denote the rate at which the company pays dividends at time $t$; therefore, the total amount of dividends paid over $[t, t + \eps]$ is $\int_t^{t+\eps}C_u \dd u$.

We assume that the company is subject to two constraints when devising its dividend policy.  First, shareholders expect a risk premium for investing in the firm. Therefore, the dividend rate must be at least as high as the interest rate, that is, $C_t \ge rX_t$ for all $t \ge 0$, in which $X_t$ is the company's surplus at time $t$.  We denote the \textit{excess dividend rate} by
\begin{align}
	c_t = C_t - rX_t; \quad t \ge 0.
\end{align}

Second, because shareholders and analysts react negatively (and arguably overreact) when the rate of dividend payment decreases, we assume that the excess dividend rate cannot go below a fraction $\al \in (0, 1)$ of its past maximum, a so-called \textit{drawdown} constraint on the excess dividend rate.  Specifically, we impose the drawdown constraint
\begin{align}\label{eq:DDConst}
	c_t \ge \al z_t, \;\; \Pb \text{-almost surely}; \quad t \ge0,
\end{align}
in which $(z_t)_{t \ge0}$ is the historical peak of the dividend process, given by
\begin{equation}\label{eq:z}
	z_t = \max \big\{z, \sup_{0 \le s < t} c_s \big\}.
\end{equation}
Here, the constant $z > 0$ represents the historical maximum of the excess dividend rate strictly before time $0$ and is included so that the problem has a financial past. In particular, the drawdown constraint yields
\begin{align}
	c_t \ge \al z_t \ge \al z > 0 \quad \Longrightarrow \quad C_t > rX_t,
\end{align}
$\Pb$-almost surely, for all $t \ge 0$.  In other words, we assume that shareholders do not accept dividend rates at or below the amount they could earn at the risk-free rate. 

\begin{remark}
At any time $t \ge 0$, we allow the excess dividend rate to increase beyond its historical peak. If that were the case, we would have $c_t > z_t$. \qed
\end{remark}

Let $(I_t)_{t \ge 0}$ denote the intrinsic value of the company's total assets. Specifically, $I_t$ is the total assets of the company at $t$ assuming that it it has the total assets of \$1 at $t = 0$, all in equity with no debt, and assuming it does not pay dividends during $[0, t]$. We assume that the intrinsic value follows a geometric Brownian motion; specifically,
\begin{equation}
	\frac{\dd I_t}{I_t} = (\mu + r)dt + \sig \dd W_t,
\end{equation}
for some constants $\mu > 0$ and $\sig > 0$.  Here, $(W_t)_{t \ge 0}$ is a standard Brownian motion on a filtered probability space, $(\Omega, \mathcal{F}, (\mathcal{F}_t)_{t \ge 0}, \Pb)$, in which $(\mathcal{F}_t)_{t \ge 0}$ is the filtration generated by the Brownian motion and satisfies the usual conditions.

Given an investment policy $(\pi_t)_{t \ge 0}$ and an excess dividend policy $(c_t = C_t - rX_t)_{t\ge0}$, the net surplus process $(X_t)_{t \ge 0}$ is given by
\begin{equation}\label{eq:X}
\dd X_t = \pi_t \, \frac{\dd I_t}{I_t} + \Big(r(X_t-\pi_t)- C_t\Big) \dd t = \Big(\mu \pi_t - c_t\Big) \dd t + \sig \pi_t \, \dd W_t,
\end{equation}
with $X_0 = x \ge 0$. A pair of investment and dividend policies $(\pi_t, c_t)_{t \ge 0}$ is \emph{admissible} if it satisfies the following conditions,
\begin{enumerate}
	\item[(i)] $(\pi_t)_{t\ge0}$ is $(\mathcal{F}_t)$-progressively measurable, $\pi_t \ge 0$ for all $t \ge 0$, and $\int_0^\infty \pi_t^2 \dd t < \infty$, $\Pb$-almost surely,
	
	\item[(ii)] $(c_t)_{t \ge 0}$ is $(\mathcal{F}_t)$-adapted, non-negative, and right-continuous with left limits; and,
	
	\item[(iii)] $(c_t)_{t \ge 0}$ satisfies the drawdown constraint \eqref{eq:DDConst}.

\end{enumerate}
Let $\Cb(\al, z)$ denote the set of all admissible investment and dividend policies.

\smallskip

For future reference, we also introduce one additional set of policies. The set of \emph{unconstrained policies} $\Cb_0$ is the set of all investment and dividend policies $(\pi_t,c_t)$ that satisfy Conditions (i) and (ii) above, that is, we do not enforce the drawdown constraint (iii). Note that $\Cb_0 = \lim_{\al \to 0+} \Cb(\al, z)$.

We assume the company wishes to maximize the expectation of the discounted utility of the dividends it pays, in excess of the risk-free interest, between now and when the net surplus reaches bankruptcy. In particular, let $\tau$ denote the time of bankruptcy, that is,
\begin{equation}\label{eq:tau}
	\tau = \inf \big\{ t \ge 0 : X_t \le 0 \big\}.
\end{equation}
The time of ruin $\tau$ in \eqref{eq:tau} depends on the pair of admissible investment and dividend policies used to control $X$ in \eqref{eq:X}, but for simplicity of notation, we write $\tau$ instead of $\tau^{X^{(\pi_t, c_t)}}$.  We consider the following objective for the company,
\begin{equation}\label{eq:Objective}
\sup_{(\pi_t,c_t) \in \Cb(\al, z)} \Eb \[ \int_0^\tau \ee^{-\del t} \, \frac{c_t^{1-p}}{1-p} \, \dd t \].
\end{equation}
The constant $\del > 0$ denotes the subjective time preference parameter that represents the desirability of paying dividends sooner rather than later. The constant $p$ represents the shareholders' constant relative risk aversion, and we assume that
\begin{equation}\label{eq:p_constraint}
	 \frac{1}{1 + \kap \del} < p < 1,
\end{equation}
in which $\kap$ is defined by
\begin{equation}\label{eq:kap}
\kap = \frac{2 \sig^2}{\mu^2}  .
\end{equation}
The feasibility constraint on $p$ in \eqref{eq:p_constraint}, namely, $p > \frac{1}{1 + \kap \del}$, is common in the literature and dates back to \cite{Merton1969}.   Furthermore, if $p > 1$, then $\frac{c^{1 - p}}{1 - p}$ is negative and increasing in $c$; thus, \eqref{eq:Objective} favors large values of $c$ and small values of $\tau$.  It follows that immediate liquidation of the firm is the optimal policy when $p > 1$.

Note also, that in the objective \eqref{eq:Objective}, we essentially assume that, when the net surplus reaches $0$ at time $\tau$, then the (excess) consumption rate $c_t = C_t = 0$ with probability one for all $t > \tau$. This form of the objective function is consistent with $\int_0^\infty \ee^{-\del t} \, \frac{c_t^{1-p}}{1-p} \, \dd t = \int_0^\tau \ee^{-\del t} \, \frac{c_t^{1-p}}{1-p} \, \dd t$ for $p$ in the range given by \eqref{eq:p_constraint}. However, it implies that the drawdown constraint is violated after bankruptcy.

\begin{remark}\label{rem:DybvigArun}
It is not possible to compare our optimal strategy with that of \cite{Dybvig1995} or \cite{Arun2012} because there is no common ground between our model and theirs.  Since we are working with excess dividend rates, such common ground would be the case of zero interest rate, that is, $r = 0$. However, \cite{Dybvig1995} and \cite{Arun2012} explicitly exclude the case of $r = 0$. The reason is that both studies \emph{mandate} that the wealth process $X$ must always be positive (that is, $\tau = \infty$ with probability $1$) by making sure that there is always the possibility of funding the dividend payment through the risk-free investment. To enforce this assumption, they imposed an upper bound on feasible consumption processes of the form $\sup_t C_t = Z_t\le \frac{r}{\alpha} X_t$, such that consumption can always be funded by risk-free investment. This constraint is clearly unreasonable for $r=0$ because it would mean that the only feasible consumption is $C\equiv0$.
	
By contrast with \cite{Dybvig1995} and \cite{Arun2012}, we do not impose such upper bound on the consumption process. Instead, we stop the problem when $X$ hits 0. Indeed, since we work with excess dividend processes, our result does not depend on $r$ at all and would remain valid and unchanged for $r = 0$.  \qed
\end{remark}

\section{Optimal dividend policy under the drawdown constraint}\label{sec:4}

The value function corresponding to the stochastic control problem \eqref{eq:Objective} is
\begin{equation}\label{eq:V}
V(x, z) = \sup_{(\pi_t, c_t) \in \Cb(\al, z)} \E^x \[ \int_0^\tau \ee^{-\del t} \, \frac{c_t^{1-p}}{1-p} \, \dd t \],
\end{equation}
for $(x, z) \in \Rb_+^2$, in which $\Eb^x$ denotes expectation conditional on $X_0 = x$.

Our main goal in this section is to identify $V(x,z)$ and the corresponding optimal investment and dividend policies.
% In this section, we identify the optimal investment and (excess) dividend policies under the drawdown constraint.
The solution relies upon the relationship between the surplus at time $t$, namely, $X_t$, and the historical peak of the dividend process, namely, $z_t = \max \big\{z, \, \sup_{0 \le s < t} c_s \big\}$.
 
Each time the surplus level reaches a new maximum, the company faces a decision whether or not to increase the excess dividend rate beyond its historical peak.  Increasing the excess dividend rate might add value by increasing the expected value of utility of the excess dividend stream. But, doing so raises the bar for the compulsory minimum excess dividend rate and decreases the time to bankruptcy.

We hypothesize that there exists a critical surplus-to-historical peak ratio $\ws$ such that if $X_t > \ws z_t$, then the company will immediately raise its excess dividend rate to $c_t = X_t/\ws$.  Otherwise, if $X_t \le \ws z_t$, then the company will only increase its excess dividend rate to maintain $X_s \le \ws z_s$ for all $s \ge t$.  Additionally, if $0 < X_t < \ws z_t$, then the company will pay dividends at a rate lying in the interval $[\al z_t, z_t]$.

% From Theorem \ref{lem:verif} and Corollary \ref{cor:Optimal} (and as we will show in the proof of Theorem \ref{thm:V})
As we will verify in the proof of Theorem \ref{thm:V} below, if we find a classical solution $v(x,z)$ of the following free-boundary problem (FBP) on $\Dc = \{ (x, z) \in \Rb_+^2: x \le z \ws, z > 0 \}$, that is increasing and concave in $x$, then $v$ equals the value function $V$:
\begin{equation}\label{eq:HJB-FB}
\begin{cases}
\del v = \max \limits_{\pi \in \Rb} \[ \mu \pi v_x + \dfrac{1}{2} \, \sig^2 \pi^2 v_{xx} \] + \max \limits_{\al z \le c \le z} \[ \dfrac{c^{1-p}}{1-p} - c v_x \], \\% \qquad 0 < x < \ws z, \\
v(0, z) = 0, \\
v_z(\ws z, z) = 0 = v_{xz}(\ws z, z).
\end{cases}
\end{equation}
The additional requirement that $v_{xz}(\ws z, z) = 0$ is a so-called {\it super-contact} condition. This condition ensures the optimality of the boundary $\ws$.  Indeed, from \cite{Dixit1991} and \cite{Dumas1991}, we have the smooth-pasting condition $v_{z}(\hat w z, z) = 0$ for the value function that corresponds to the barrier strategy defined by {\it any} value of $\hat w > 0$ in place of $\ws$, but for a barrier strategy to be {\it optimal}, we must impose the higher-order condition $v_{xz}(\ws z, z) = 0$.  From this higher-order condition, we deduce that the Hamilton-Jacobi-Bellman (HJB) equation in \eqref{eq:HJB-FB} equals 0 along the ray $(x, z) = (\ws z, z)$ for $z > 0$.  See Dixit (1991) and Dumas (1991) for further discussion of value-matching, smooth-pasting (such as $v_z (\ws z, z) = 0$), and super-contact conditions.

\begin{figure}[t!]
\centerline{
% \fbox{
\adjustbox{trim={0.05\width} {0.05\height} {0.05\width} {0.2\height},clip}{\includegraphics[scale=.4,page=1]{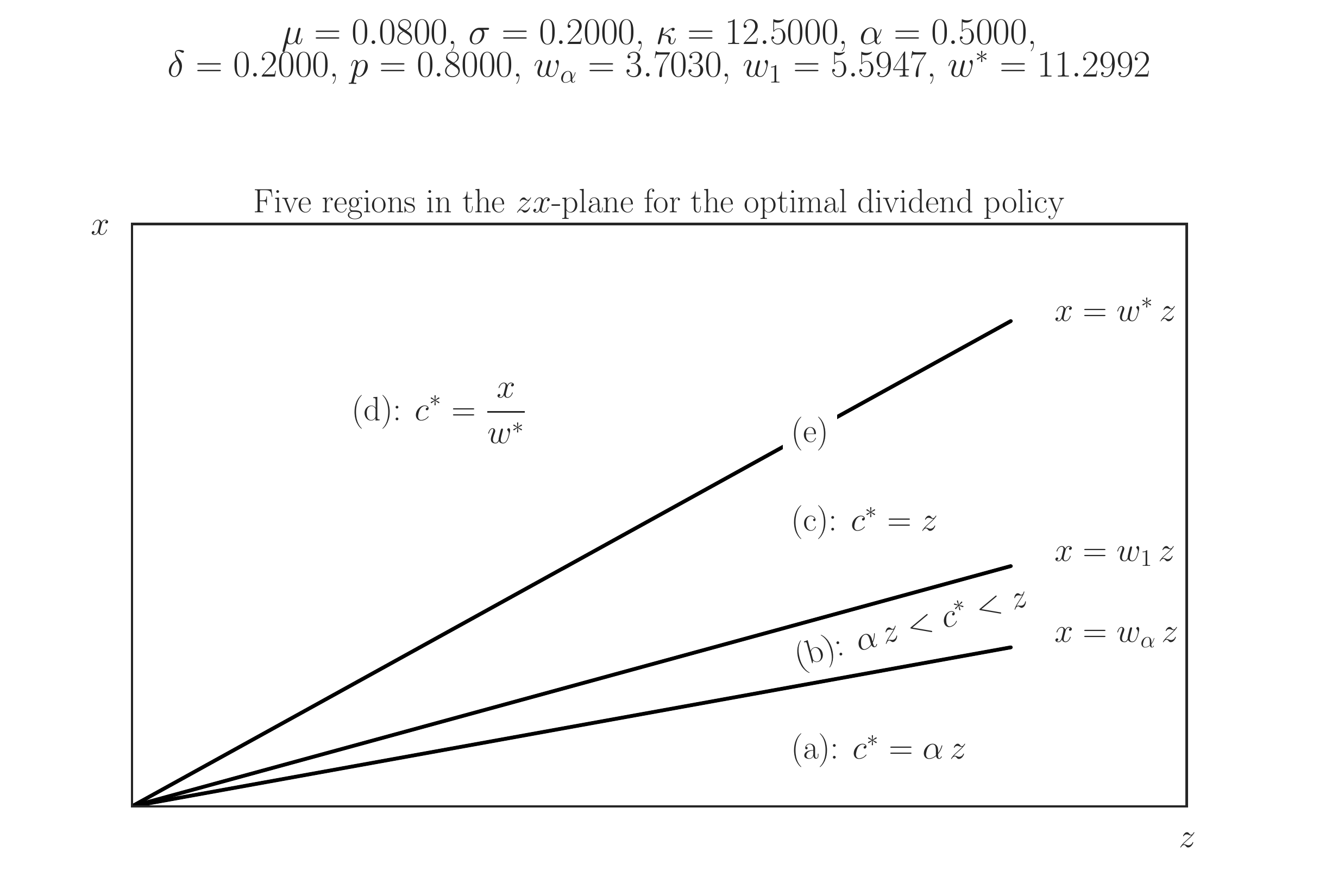}}
% }
}
\caption{A representative plot for the regions (a)-(e) in the $z-x$ plane showing different behavior of the optimal dividend policy. The optimally controlled process $(z_t^*, X_t^*)$ can only be in region (d) at $t = 0$, in which case it will be immediately moved to line (e). Thereafter, the optimally controlled process stays in regions (a), (b), (c), and (e).  Region (e) acts as a reflective barrier, in the sense that the optimal controlled processes are stopped from passing through by singular control at (e).}
\label{fig:XYregion}
\end{figure}

Furthermore, we hypothesize that there exist two other important surplus-to-historical peak ratios, namely $\wa$ and $\wo$ such that $0 < \wa < \wo < \ws$.  Based on these critical values, we hypothesize that the optimal dividend policy has the following structure:\footnote{For simplicity of notation, we omit the superscript $*$ that indicates the processes are the optimally controlled ones.  So, throughout this ansatz, $X_t = X^*_t$, $z_t = z^*_t$, and $c_t = c^*_t$.  At times, we also omit the word ``excess'' when we refer to the dividend rate; however, for the remainder of the paper, ``dividend rate'' means $c_t$, the ``\textit{excess} dividend rate.''}
\begin{enumerate}
	\item[(a)] If $X_t < \wa z_t$, then $c_t = \al z_t$. In other words, if the surplus level is ``very low,'' it is optimal for the firm to pay dividends at the lowest rate permitted by the drawdown constraint \eqref{eq:DDConst}.
	
	\item[(b)] If $\wa z_t < X_t < \wo z_t$, then $c_t = c^*(X_t, z_t) \in (\al z_t, z_t)$, for some function $c^*(x, z)$. In this case, the surplus is at an ``intermediate level,'' and the company distributes dividends at a rate greater than that of the minimum possible rate, but lower than the historical peak.
	
	\item[(c)] If $\wo z \le X_t < \ws z$, then $c_t = z_t$. In this case, surplus is large enough so that it is optimal to pay dividends at the historical peak but not large enough to raise the excess dividend rate above this value.
	
	\item[(d)] If $ X_t > \ws z_t$, then $c_t = \frac{X_t}{\ws} > z_t$.  In other words, if the surplus level is ``very large,'' the company will pay dividends at a rate greater than the historical peak.  Note that, in this case, the historical peak has a jump at $t$, that is, $\lim_{s \to t^+} z_s = \frac{X_t}{\ws} > z_t$.
	
	\item[(e)] Along the line $x = \ws z$, the company increases its dividend rate via singular control to keep $X_t \le \ws z_t$.
\end{enumerate}

By following the dividend policy above, the historical peak $(z_t)$ can have a jump only at time $t = 0$ and only if rule (d) is applicable, that is, $X_0 > \ws z_0$.  After this possible initial jump, the process $(X_t, z_t)_{t \ge 0}$ will be kept in the domain
\begin{equation}\label{eq:Dc}
	\Dc = \{(x, z): 0 \le x \le \ws z, z > 0 \},
\end{equation}
by applying the above rules.  In particular, after a possible initial jump, $(z_t)$ is only allowed to increase via singular control in order to keep $(X_t, z_t)$ inside $\Dc$.

% \mcomment{Bahman:\\Add an illustration of the optimal policy here.}

\medskip

\begin{remark}\label{rem:MStar}
	Define the running maximum of the surplus process by
	\begin{align}\label{eq:M}
	\begin{cases}
		M_0 = \ws z_0,\\
		\displaystyle
		M_t = \max \Big\{ M_0, ~ \max_{0 \le s <  t} X_s \Big\}; \quad t > 0.
	\end{cases}	
	\end{align}
By the discussion above, we have $M_t = \ws z_t$, for all $t \ge 0$. In other words, the running maximum of surplus, as defined in \eqref{eq:M}, is proportional to the historical peak of the excess dividend rate.  \qed
\end{remark}

We have two main tasks ahead of us.  First, we need to specify the unknowns $\wa$, $\wo$, $\ws$, and $c^*(x,z)$, along with the optimal investment policy $\pi^*(x, z)$. Second, we need to prove that the dividend policy hypothesized above is optimal.

% \begin{wrapfigure}{R}{0.5\textwidth}
% \hfill
% \fbox{\adjustbox{trim={0.05\width} {0\height} {0\width} {0\height},clip}{\includegraphics[scale=.5,page=1]{Phihatz.pdf}}}
% \end{wrapfigure}

\subsection{Reducing the dimension and applying the Legendre transform} \label{sec:41}

The value function $V$ in \eqref{eq:V} is homogeneous of degree $1 - p$ with respect to $x$ and $z$, that is,
\begin{equation}
V(\beta x, \beta z) = \beta^{1-p} V(x, z),
\end{equation}
for all $\beta > 0$.  Thus, if we define the function $U$ by
\begin{equation}\label{eq:U}
U(w) = V(w, 1),
\end{equation}
for $w \ge 0$, then we can recover $V$ from $U$ by
\begin{equation}\label{eq:V_U}
V(x, z) = z^{1-p} U(x/z),
\end{equation}
for all $(x, z) \in \Rb_+^2$.

From \eqref{eq:HJB-FB} and \eqref{eq:V_U}, we deduce the following FBP for $U$ for $w \in [0, \ws]$, in which we identify $V$ with $v$, the solution of \eqref{eq:HJB-FB}:
\begin{equation}\label{eq:HJB-U}
\begin{cases}
\del U = \max \limits_{\pih \in \Rb} \[ \mu \pih U_w + \dfrac{1}{2} \, \sig^2 \pih^2 U_{ww} \] + \max \limits_{\al \le \ch \le 1} \[ \dfrac{\ch^{1-p}}{1-p} - \ch U_w \], \\
U(0) = 0, \\
(1 - p) U(\ws) - \ws U_w(\ws) = 0, \\
p U_w(\ws) + \ws U_{ww}(\ws) = 0.
\end{cases}
\end{equation}
After we obtain $\pih^*$ and $\ch^*$, then we will be able to get $\pi^*$ and $c^*$ for $V$'s problem via $\pi^*(x, z) = \pih^*(x/z) z$ and $c^*(x, z) = \ch^*(x/z) z$.

Because $V$ is increasing and concave with respect to $x$, $U$ is increasing and concave with respect to $w$; thus, we  rewrite the HJB equation in \eqref{eq:HJB-U} as follows:
\begin{equation}\label{eq:HJB-U_de}
\dfrac{1}{\kap} \, \dfrac{U_w^2}{U_{ww}} + \del U =
\begin{cases}
\dfrac{\al^{1-p}}{1-p} - \al U_w, &\quad 0 \le w \le \wa, \vspace{1ex} \\
\dfrac{p}{1-p} \, \big(U_w(w) \big)^{-\frac{1-p}{p}}, &\quad \wa < w < \wo, \vspace{1ex} \\
\dfrac{1}{1-p} - U_w, &\quad \wo \le w \le \ws,
\end{cases}
\end{equation}
in which $\kap$ is defined in \eqref{eq:kap}.  Because of the non-linear terms $U_w^2/U_{ww}$ and $\big(U_w(w) \big)^{-\frac{1-p}{p}}$ in \eqref{eq:HJB-U_de}, it is natural to apply the Legendre transform to linearize this differential equation.  Specifically, define the dual variable $y$ and the corresponding convex dual function $\Uh$ by $y = U_w$ and $\Uh(y) = U(w) - wy$, respectively.  Also, define $y_0 = U_w(0)$ and $\ys = U_w(\ws)$.  Then, the differential equation in \eqref{eq:HJB-U_de} becomes the following \textit{linear} differential equation:
\begin{equation}\label{eq:HJB-U_de_y}
y^2 \Uh_{yy} + \kap \del y \Uh_y - \kap \del \Uh =
\begin{cases}
\kap  \Bigg(\al y - \dfrac{\al^{1-p}}{1-p} \Bigg), &\quad \al^{-p} \le y \le y_0, \vspace{1ex} \\
- \, \dfrac{\kap p}{1-p} \, y^{-\frac{1-p}{p}}, &\quad 1 < y < \al^{-p}, \vspace{1ex} \\
\kap \Bigg( y - \dfrac{1}{1-p} \Bigg), &\quad \ys \le y \le 1.
\end{cases}
\end{equation}
In exchange for linearity, the boundary condition $U(0) = 0$ becomes the free-boundary condition $\Uh(y_0) = 0 = \Uh_y(y_0)$ for the unknown boundary $y_0 > \al^{-p}$.  Furthermore, the smooth-pasting and super-contact conditions in \eqref{eq:HJB-U} become, respectively,
\begin{equation}\label{eq:Uh_ys1}
(1 - p) \Uh(\ys) + p \ys \Uh_y(\ys) = 0,
\end{equation}
and
\begin{equation}\label{eq:Uh_ys2}
\Uh_y(\ys) + p \ys \Uh_{yy}(\ys) = 0,
\end{equation}
at the unknown free-boundary $0 < \ys < 1$.

In the following proposition, we give the solution of this free-boundary problem.

\begin{proposition}\label{prop:Uh}
Suppose $\frac{1}{1 + \kap \del} < p < 1$.  The solution $\Uh$ of the differential equation \eqref{eq:HJB-U_de_y} subject to the free-boundary conditions $\Uh(y_0) = 0 = \Uh_y(y_0)$ and equations \eqref{eq:Uh_ys1} and \eqref{eq:Uh_ys2} is given by
\begin{equation}\label{eq:Uh}
\Uh(y) = 
\begin{cases}
C_1 y + C_2 y^{-\kap \del} + \dfrac{\kap \al}{1 + \kap \del} \, y \ln y + \dfrac{\al^{1-p}}{\del(1 - p)} \, , &\quad \al^{-p} \le y \le y_0, \vspace{1ex} \\
C_3 y + C_4 y^{-\kap \del} + 
\dfrac{\kap}{1 - p} \, \dfrac{p^3}{p(1 + \kap \del) - 1} \, y^{-\frac{1-p}{p}},
&\quad 1 < y < \al^{-p}, \vspace{1ex} \\
C_5 y + C_6 y^{-\kap \del} + \dfrac{\kap}{1 + \kap \del} \, y \ln y + \dfrac{1}{\del(1 - p)} \, , &\quad \ys \le y \le 1, 
\end{cases}
\end{equation}
in which
\begin{align}
C_1 &= - \, \dfrac{\kap \al}{1 + \kap \del} \( \ln \etas - p \ln \al + \dfrac{1}{\etas(1 - p)} + \dfrac{1}{1 + \kap \del} \), \label{eq:C1} \vspace{1ex} \\
C_2 &= \dfrac{\al^{1- p(1 + \kap \del)}}{1 + \kap \del} \( \dfrac{\kap}{1 + \kap \del} \, (\etas)^{1 + \kap \del} - \dfrac{1}{\del(1 - p)} \, (\etas)^{\kap \del} \) > 0, \label{eq:C2} \vspace{1ex} \\
C_3 &= - \, \dfrac{\kap \al}{1 + \kap \del} \( \ln \etas + \dfrac{1}{\etas(1 - p)} - (1 + p) \), \label{eq:C3} \vspace{1ex} \\
C_4 &= \dfrac{\al^{1- p(1 + \kap \del)}}{1 + \kap \del} \( \dfrac{\kap}{1 + \kap \del} \, (\etas)^{1 + \kap \del} - \dfrac{1}{\del(1 - p)} \, (\etas)^{\kap \del} - \dfrac{1}{\del (1 + \kap \del) \big( p(1 + \kap \del) - 1 \big)} \) < 0, \label{eq:C4} \vspace{1ex} \\
C_5 &=  - \, \dfrac{\kap}{1 + \kap \del} \( \al \ln \etas + \dfrac{\al}{\etas(1 - p)} + (1- \al)(1 + p) + \dfrac{1}{1 + \kap \del} \),  \label{eq:C5} \\
C_6 &= \dfrac{\al^{1- p(1 + \kap \del)}}{1 + \kap \del} \( \dfrac{\kap}{1 + \kap \del} \, (\etas)^{1 + \kap \del} - \dfrac{1}{\del(1 - p)} \, (\etas)^{\kap \del} \) - \dfrac{\al^{1- p(1 + \kap \del)} - 1}{\del (1 + \kap \del)^2 \big( p(1 + \kap \del) - 1 \big)} > 0 ,  \label{eq:C6}
\end{align}
and $\etas = y_0 \al^p > 1$ and $0 < \ys < 1$ uniquely solve the following system of two equations:
\begin{equation}\label{eq:etas_ys}
\begin{cases}
\ln \dfrac{\eta^\al}{y} + \dfrac{\al}{\eta(1 - p)} - \dfrac{1}{y} = \al(1 + p) - 1, \vspace{1ex} \\
\al^{1- p(1 + \kap \del)} \big( p(1 + \kap \del) - 1 \big) \(  \dfrac{\kap}{1 + \kap \del} \, \eta^{1 + \kap \del} - \dfrac{1}{\del(1 - p)} \, \eta^{\kap \del} \) + \( \dfrac{\kap}{1 + \kap \del} \, y^{1 + \kap \del} - \dfrac{1}{\del} \, y^{\kap \del} \) \\
\quad = \dfrac{\al^{1- p(1 + \kap \del)} - 1}{\del (1 + \kap \del)}.
\end{cases}
\end{equation}
Moreover, $\Uh$ is strictly decreasing and strictly convex with continuous second derivative on $(\ys, y_0)$.  \qed
\end{proposition}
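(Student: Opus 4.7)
The plan is to treat \eqref{eq:HJB-U_de_y} as a piecewise linear second-order Cauchy--Euler ODE with free boundaries and solve it by patching. The homogeneous equation $y^2 \Uh_{yy} + \kap\del\, y\,\Uh_y - \kap\del\,\Uh = 0$ has indicial polynomial $m(m-1) + \kap\del\, m - \kap\del = (m-1)(m+\kap\del)$, so its fundamental solutions are $y$ and $y^{-\kap\del}$; this accounts for the $C\,y + C'\,y^{-\kap\del}$ parts of \eqref{eq:Uh} in each region. Particular solutions would be obtained by undetermined coefficients. In the regions $[\al^{-p}, y_0]$ and $[\ys, 1]$ the forcing is affine in $y$; since $y$ itself is homogeneous, the correct ansatz is $A y \ln y + B$, producing $A = \kap\al/(1+\kap\del)$ and $B = \al^{1-p}/\big(\del(1-p)\big)$ (and likewise with $\al$ replaced by $1$ in the $[\ys,1]$ region). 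In the middle region $(1, \al^{-p})$ the forcing is a monomial $y^{-(1-p)/p}$ whose exponent is neither $1$ nor $-\kap\del$ under the standing assumption $p > 1/(1+\kap\del)$, so the ansatz $B y^{-(1-p)/p}$ yields the coefficient $\kap p^{3}/\big[(1-p)\big(p(1+\kap\del)-1\big)\big]$ stated in the middle line of \eqref{eq:Uh}.

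Next I would determine the eight unknowns $C_1,\ldots,C_6,y_0,\ys$ from eight conditions: $\Uh(y_0)=0$ and $\Uh_y(y_0)=0$; $C^1$ continuity at the two interior breakpoints $y=\al^{-p}$ and $y=1$ (four equations); and the smooth-pasting \eqref{eq:Uh_ys1} and super-contact \eqref{eq:Uh_ys2} at $\ys$. A short check shows the three pieces of the right-hand side of \eqref{eq:HJB-U_de_y} take the common values $-\kap p\al^{1-p}/(1-p)$ and $-\kap p/(1-p)$ at $y = \al^{-p}$ and $y = 1$, respectively, so the ODE automatically upgrades $C^{1}$-matching to $C^{2}$-matching. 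I would solve the system sequentially: first use the two conditions at $y_0$ to express $C_1, C_2$ in terms of $\etas := y_0\al^{p}$; next use \eqref{eq:Uh_ys1}--\eqref{eq:Uh_ys2} to express $C_5, C_6$ in terms of $\ys$; then impose $C^{1}$-matching at $y=\al^{-p}$ and $y=1$ to solve for $C_3, C_4$. The two remaining compatibility relations then reduce precisely to the system \eqref{eq:etas_ys} in the unknowns $(\etas,\ys)$, and back-substitution recovers the closed-form expressions \eqref{eq:C1}--\eqref{eq:C6}.

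The main obstacle is establishing existence and uniqueness of $(\etas,\ys)\in(1,\infty)\times(0,1)$ solving \eqref{eq:etas_ys}. My plan is to use the first equation to define $\ys$ implicitly as a smooth function $\ys = \phi(\etas)$ on the relevant range, by verifying that the partial derivative in $y$ of its left-hand side has a definite sign on $(0,1)$, which gives strict monotonicity via the implicit function theorem. Substituting into the second equation then yields a single-variable equation $F(\etas)=0$; I would show that $F$ is strictly monotone on $(1,\infty)$ and changes sign there (by computing the limits as $\etas \downarrow 1$ and $\etas \uparrow \infty$, using the parameter constraint $p(1+\kap\del)>1$), so the intermediate-value theorem locates a unique root.

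Finally, the sign claims $C_2, C_6 > 0$ and $C_4 < 0$ would be verified using $\etas > 1$, $0<\ys<1$, and the parameter constraint $p(1+\kap\del)>1$ from \eqref{eq:p_constraint}, exploiting the defining relations for $(\etas, \ys)$ when needed. Strict monotonicity and strict convexity of $\Uh$ on $(\ys, y_0)$ can be read off from the Legendre-transform identities $\Uh_y(y) = -w$ and $\Uh_{yy}(y) = -1/U_{ww}(w)$: once $U$ is strictly increasing and strictly concave in $w$ on the continuation region $[0,\ws]$ (the natural property inherited from the value function, to be verified together with the expressions for $C_1,\ldots,C_6$), these identities force $\Uh_y < 0$ and $\Uh_{yy} > 0$, completing the proof.
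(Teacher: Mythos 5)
Your patching scheme for the Cauchy--Euler equation (fundamental solutions $y$ and $y^{-\kap\del}$, particular solutions $Ay\ln y + B$ and $By^{-(1-p)/p}$, eight conditions for the eight unknowns $C_1,\dots,C_6,y_0,\ys$, with the $C^1$ matching upgraded to $C^2$ because the forcing is continuous at $y=\al^{-p}$ and $y=1$) is exactly the paper's route and is fine. The gap is in your uniqueness step. The first equation of \eqref{eq:etas_ys} does define a smooth $\ys=\phi(\etas)\in(0,1)$, since its left side is strictly increasing in $y$ on $(0,1)$; but $\phi$ is \emph{not} monotone on $(1,\infty)$: implicit differentiation gives $\phi'(\eta)>0$ for $1<\eta<\tfrac{1}{1-p}$ and $\phi'(\eta)<0$ for $\eta>\tfrac{1}{1-p}$, because $\partial_\eta$ of the left side is proportional to $1-\tfrac{1}{\eta(1-p)}$. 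Likewise the $\eta$-part of the second equation, $\eta\mapsto\tfrac{\kap}{1+\kap\del}\eta^{1+\kap\del}-\tfrac{1}{\del(1-p)}\eta^{\kap\del}$, decreases up to $\eta=\tfrac{1}{1-p}$ and increases thereafter, while the $y$-part $B(y)=\tfrac{\kap}{1+\kap\del}y^{1+\kap\del}-\tfrac{1}{\del}y^{\kap\del}$ is decreasing on $(0,1)$. Consequently your composite $F$ is strictly decreasing on $\big(1,\tfrac{1}{1-p}\big)$ and strictly increasing on $\big(\tfrac{1}{1-p},\infty\big)$, so the step ``show $F$ is strictly monotone on $(1,\infty)$'' fails as stated; by itself it only yields ``at most two roots.'' The missing idea—precisely the extra step in the paper's proof—is to first rule out solutions with $1<\eta\le\tfrac{1}{1-p}$: for $y\in[0,1)$ one has $B(y)\le 0$, and at $\eta=1$ the left side of the second equation already lies strictly below the right side (the paper's computation around \eqref{eq:etay_y0}, using $\tfrac{\kap}{1+\kap\del}<\tfrac{1}{\del}<\tfrac{1}{\del(1-p)}$ and $\al^{p(1+\kap\del)-1}<1$); since the $\eta$-part decreases on $\big(1,\tfrac{1}{1-p}\big)$, no root of $F$ can occur there. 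Restricted to $\eta>\tfrac{1}{1-p}$, your IVT-plus-monotonicity argument does go through ($F<0$ at $\eta=\tfrac{1}{1-p}$, $F\to+\infty$ since the $\eta$-term dominates and $B(\phi(\eta))$ is bounded), and it then coincides with the paper's picture of one increasing and one decreasing curve crossing once in the $y\eta$-plane with $\etas>\tfrac{1}{1-p}$ and $\ys<1$.

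A second, smaller issue: you propose to obtain $\Uh_y<0$ and $\Uh_{yy}>0$ from the identities $\Uh_y=-w$ and $\Uh_{yy}=-1/U_{ww}$ together with concavity of $U$. At this stage that is circular: $U$ is \emph{defined} as the Legendre transform of $\Uh$, and concavity/monotonicity of the value function in $x$ is only established later via the verification theorem, which itself relies on the convexity of $\Uh$ asserted in this proposition. So strict decrease and strict convexity of $\Uh$ on $(\ys,y_0)$ must be checked directly from the closed form \eqref{eq:Uh}, using the signs $C_2>0$, $C_4<0$, $C_6>0$ and $p(1+\kap\del)>1$, which is the ``straightforward'' verification the paper performs.
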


\begin{proof}
The expression for $\Uh$ in \eqref{eq:Uh} and the values of $C_i$ for $i = 1, 2, \dots, 6$ in \eqref{eq:C1}-\eqref{eq:C6} follow readily by solving \eqref{eq:HJB-U_de_y}, by imposing the free-boundary condition at $y = y_0$, and by requiring that $\Uh$ have continuous derivative at the boundary points.  Furthermore, we get the two equations in \eqref{eq:etas_ys} by imposing the free-boundary condition at $y = \ys$, by solving for $C_5$ and $C_6$ in terms of $\ys$, and by then equating those expressions with the ones in \eqref{eq:C5} and \eqref{eq:C6}, respectively.  For the reader's convenience, we write $C_5$ and $C_6$ in terms of $\ys$.
\begin{equation}\label{eq:C5_ys}
C_5 =  - \, \dfrac{\kap}{1 + \kap \del} \( \ln \ys + \dfrac{1}{\ys} + p + \dfrac{1}{1 + \kap \del} \),
\end{equation}
and
\begin{equation}\label{eq:C6_ys}
C_6 = \dfrac{1}{\del(1 + \kap \del) \big( p(1 + \kap \del) - 1 \big)} \( (\ys)^{\kap \del} - \dfrac{\kap \del}{1 + \kap \del} (\ys)^{1 + \kap \del} \).
\end{equation}
Checking that $\Uh$ is strictly decreasing and strictly convex with continuous second derivative on $(\ys, y_0)$ is straightforward.

\begin{figure}[t!]
\centerline{
% \fbox{
\adjustbox{trim={0.01\width} {0.09\height} {0.6\width} {0.15\height},clip}{\includegraphics[scale=.5,page=1]{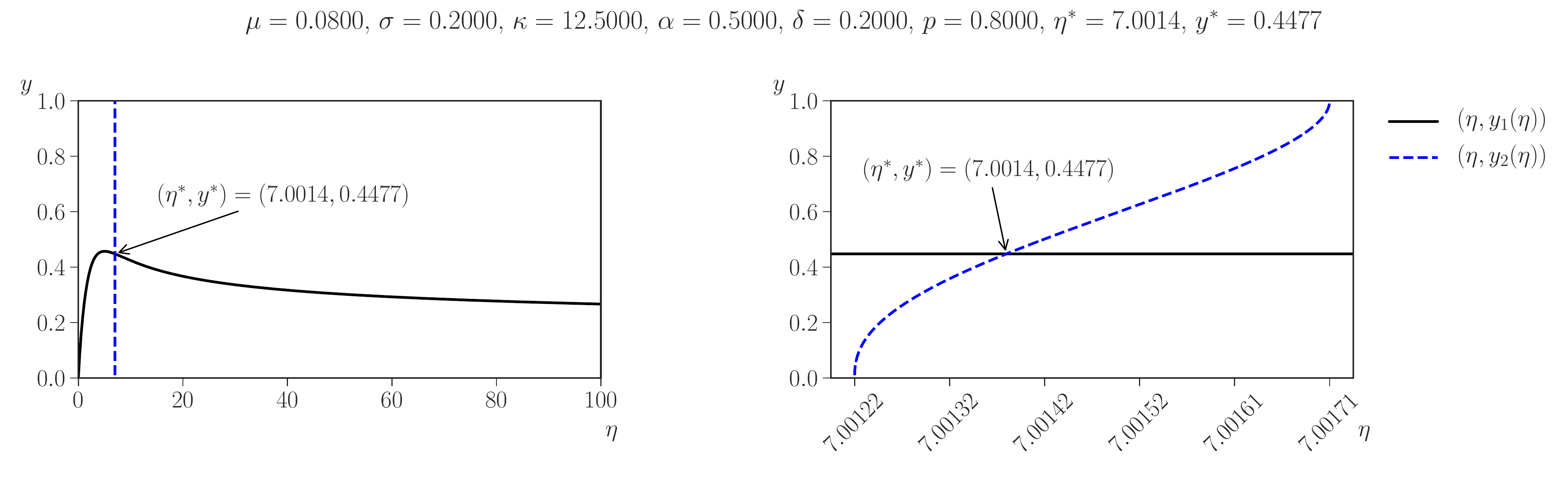}}
% }
% \fbox{
\adjustbox{trim={0.89\width} {0.09\height} {0.0\width} {0.15\height},clip}{\includegraphics[scale=.5,page=1]{EtaYSTAR.pdf}}
% }
% \fbox{
\adjustbox{trim={0.49\width} {0.09\height} {0.125\width} {0.15\height},clip}{\includegraphics[scale=.5,page=1]{EtaYSTAR.pdf}}
% }
}
\caption{A representative graph of the two curves defined by the equations in \eqref{eq:etas_ys}. The black solid curve (resp., the blue dotted curve) is implicitly defined by the first (resp., second) equation. The right plot is an enlargement of the left plot over the horizontal axis, which better shows the solution of the system of equations. The chosen values of the parameters are $\mu = 0.08$, $\sig = 0.2$, $\delta = 0.2$, $\al = 0.5$, and $p=0.8$.}
\label{fig:EtaY_star}
\end{figure}

It remains to show that the two equations in \eqref{eq:etas_ys} have a unique solution $\etas > 1$ and $0 < \ys < 1$.  To that end, consider the second equation in \eqref{eq:etas_ys}.  When $y = 0$, we can rewrite this equation as
\begin{equation}\label{eq:etay_y0}
 \(  \dfrac{\kap}{1 + \kap \del} \, \eta^{1 + \kap \del} - \dfrac{1}{\del(1 - p)} \, \eta^{\kap \del} \) - \dfrac{1 - \al^{p(1 + \kap \del) - 1}}{\del (1 + \kap \del)\big( p(1 + \kap \del) - 1 \big)} = 0.
\end{equation}
When $\eta = 1$, the left side of \eqref{eq:etay_y0} is negative.  Furthermore, as $\eta$ increases from $1$ to $\frac{1}{1-p}$, the left side decreases; then, as $\eta$ increases from $\frac{1}{1 - p}$ to $\infty$, the left side increases from a negative number to positive $\infty$.  Thus, when $y = 0$, there exists a unique solution $\eta >  1$ of \eqref{eq:etay_y0}; moreover, this unique solution is greater than $\frac{1}{1-p}$.

Next, differentiate the second equation in \eqref{eq:etas_ys} with respect to $y$, treating $\eta$ as a function of $y$, to obtain
\begin{equation}
\al^{1- p(1 + \kap \del)} \big( p(1 + \kap \del) - 1 \big) \eta^{\kap \del} \( 1 - \dfrac{1}{\eta(1 - p)} \) \dfrac{d \eta}{dy} = y^{\kap \del} \( \dfrac{1}{y} - 1 \).
\end{equation}
Thus, $\frac{d \eta}{dy} < 0$ for $1 < \eta < \frac{1}{1 - p}$, and $\frac{d \eta}{dy} > 0$ for $\eta > \frac{1}{1 - p}$.  We showed in the previous paragraph that, when $y = 0$, the unique solution of \eqref{eq:etay_y0} is greater than $\frac{1}{1 - p}$; thus, as $y$ increases, the unique solution $\eta$ of the second equation in \eqref{eq:etas_ys} increases.  It follows that we can restrict our attention to the range $\eta > \frac{1}{1 - p}$, on which the curve defined by the second equation in \eqref{eq:etas_ys} increases with respect to $y$ in the $y\eta$-plane, and as $y$ increases from $0$ to $1$, $\eta$ increases from some finite number greater than $\frac{1}{1 -p}$ to another finite number.

Finally, differentiate the first equation in \eqref{eq:etas_ys} with respect to $y$ to obtain
\begin{equation}
\dfrac{\al}{\eta} \( 1 - \dfrac{1}{\eta(1 - p)} \) \dfrac{d \eta}{dy} = \dfrac{1}{y} \( 1 - \dfrac{1}{y} \).
\end{equation}
Thus, for the first equation, $\frac{d \eta}{dy} < 0$ when $\eta > \frac{1}{1 - p}$.  Moreover, as $y$ increases from $0$ to some number $y_p < 1$, $\eta$ decreases from $\infty$ to $\frac{1}{1 - p}$.  In other words, this curve fully covers the range of $\eta$ from $\frac{1}{1 - p}$ to $\infty$.

It follows from the above observations that there exists a unique point $(\ys, \etas) \in (0, 1) \times (1, \infty)$ of intersection of the two curves in \eqref{eq:etas_ys}; moreover, $\ys < y_p$ and $\etas > \frac{1}{1 - p}$.
\end{proof}

In the following corollary, we show that $\Uh$ decreases with respect to $\al$, and we compute $\Uh$ as $\al \to 0+$; we will use this result in the proof of Theorem \ref{thm:V} to show that the candidate value function satisfies the so-called transversality condition, see Lemma \ref{lem:V_al}.

\begin{corollary}\label{cor:Uh_al}
$\Uh(y)$ is non-decreasing with respect to $\al$, for any fixed value of $y \in (\ys, y_0)$.
\end{corollary}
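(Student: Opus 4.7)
The plan is to verify the claim by direct computation from the explicit representation in Proposition \ref{prop:Uh}, since that representation gives $\Uh(y)$ piecewise in closed form in terms of $\al$, $\etas(\al)$, and $\ys(\al)$.

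First I would implicitly differentiate the system \eqref{eq:etas_ys} in $\al$ to obtain $d\etas/d\al$ and $d\ys/d\al$. The coefficient matrix of the resulting $2\times 2$ linear system, as well as its sign structure, can be read off directly from the curve-intersection analysis carried out in the proof of Proposition \ref{prop:Uh} (invertibility is guaranteed by the transversal crossing of the two curves at $(\ys,\etas)$, and the one-sided slope information there pins down the signs).

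Next, for fixed $y \in (\ys, y_0)$, I would identify which of the three branches of \eqref{eq:Uh} contains $y$ (depending on how $y$ compares to $1$ and $\al^{-p}$) and differentiate that branch in $\al$, treating the coefficients $C_1,\ldots,C_6$ as functions of $\al$ through $\etas$ and $\ys$ via \eqref{eq:C1}--\eqref{eq:C6}. The alternative form \eqref{eq:C5_ys}--\eqref{eq:C6_ys} is useful because it exposes the $\ys$-only dependence of $C_5$ and $C_6$ on the rightmost branch $[\ys, 1]$. Continuity of $\Uh$ and $\Uh_y$ across the internal junctions $y=1$ and $y=\al^{-p}$ (enforced by construction of the $C_i$) provides an internal sign-consistency check across sub-intervals, so that it suffices to do the computation on one branch and propagate.

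The main obstacle is algebraic bookkeeping: the explicit expressions for the $C_i$ are intricate, and one must carefully track which coefficients depend on $\al$ through which free boundary. A conceptually cleaner alternative uses the Legendre-transform envelope identity
\begin{equation*}
\frac{\partial \Uh}{\partial \al}(y) \;=\; \frac{\partial U}{\partial \al}\bigl(w(y;\al);\al\bigr),
\end{equation*}
obtained by differentiating $\Uh(y;\al) = U(w(y;\al);\al) - w(y;\al)\,y$ with $w(y;\al)$ implicitly defined by $U_w=y$; since the $\partial_\al w$ cross-terms cancel at the dual-optimal $w$, the sign of $\partial_\al \Uh$ is determined by $\partial_\al U$ alone. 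This shortcut, however, collapses to the nested-admissibility direction flagged in the feedback, so the explicit-formula route above is the one actually needed here; the payoff of the explicit computation is that the sign of $\partial_\al \Uh$ emerges from the already-established signs of $d\ys/d\al$ and $d\etas/d\al$ together with the structural form of $C_5,C_6$ in \eqref{eq:C5_ys}--\eqref{eq:C6_ys}, which matches the direction asserted in the corollary and is exactly what the transversality step in Lemma \ref{lem:V_al} requires.
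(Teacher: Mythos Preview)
Your plan is not wrong in spirit, but it is both a different route from the paper's and has real gaps. The paper does \emph{not} differentiate the explicit coefficients $C_1,\ldots,C_6$ at all. Instead it differentiates the linear ODE \eqref{eq:HJB-U_de_y} itself with respect to $\al$, obtaining a second-order equation for $\Uh_\al$ of the form $y^2(\Uh_\al)_{yy}+\kap\del y(\Uh_\al)_y-\kap\del\Uh_\al = \kap(y-\al^{-p})\mathbf{1}_{\{y\ge\al^{-p}\}}\ge 0$. Because the operator $G(y,\phi,\phi_y,\phi_{yy})=\kap\del\phi-\kap\del y\phi_y-y^2\phi_{yy}$ is proper (monotone in $\phi$, antitone in $\phi_{yy}$), the comparison principle of \cite{CrandallIshiiLions1992} applies: it suffices to check $\Uh_\al\le 0$ at the two endpoints $y=\ys$ and $y=y_0$. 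These endpoint values are computed in one line each from the free-boundary conditions (the $\partial_\al\ys$ and $\partial_\al y_0$ terms cancel for exactly the envelope reason you mention), giving $\Uh_\al(\ys)=-\kap p^2/\bigl(p(1+\kap\del)-1\bigr)<0$ and $\Uh_\al(y_0)=0$. No knowledge of $d\etas/d\al$ or $d\ys/d\al$ is needed anywhere.

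By contrast, your route requires the signs of $d\etas/d\al$ and $d\ys/d\al$, which you claim are ``already established'' by the curve-intersection analysis in Proposition~\ref{prop:Uh}. They are not: that proof studies the slopes $d\eta/dy$ along each curve at \emph{fixed} $\al$, which says nothing about how the intersection point moves as $\al$ varies. The sign of $d\ys/d\al$ is only derived later, in Corollary~\ref{cor:ws_al}, via a page of nontrivial manipulation of \eqref{eq:etas_ys}; you would have to reproduce that here, and then still combine it with the explicit $\al$-dependence of each $C_i$ branch by branch. Also note a sign issue in the statement: the proof in the paper actually concludes $\Uh_\al\le 0$ (non-increasing), which is the direction Lemma~\ref{lem:V_al} uses; the word ``non-decreasing'' in the corollary is a typo, so your remark that the direction ``matches the corollary and is exactly what Lemma~\ref{lem:V_al} requires'' cannot be simultaneously true.
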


\begin{proof}
Clearly $\Uh$ is differentiable with respect to $\al$.  To find the differential equation that $\Uh_\al$ solves, differentiate \eqref{eq:HJB-U_de_y} with respect to $\al$ to obtain
\begin{equation}\label{eq:Uh_al}
y^2 (\Uh_\al)_{yy} + \kap \del y (\Uh_\al)_y - \kap \del \Uh_\al =
\begin{cases}
\kap \big(y - \al^{-p} \big), &\quad \al^{-p} \le y \le y_0, \vspace{1ex} \\
0, &\quad \ys \le y < \al^{-p} .
\end{cases}
\end{equation}
Thus, if we define $G$ by
\begin{equation}
G(y, \phi, \phi_y, \phi_{yy}) = \kap \del \phi - \kap \del y \phi_y - y^2 \phi_{yy},
\end{equation}
then $G$ is increasing in $\phi$ and decreasing in $\phi_{yy}$.  In other words, it satisfies the monotonicity requirement (0.1) in \cite{CrandallIshiiLions1992}.  Also, note that $G(y, \Uh_\al, (\Uh_\al)_y, (\Uh_\al)_{yy}) \le 0$; thus, if we show $\Uh_\al(\ys) \le 0$ and $\Uh_\al(y_0) \le 0$, then, from Theorem 3.3 of \cite{CrandallIshiiLions1992}, we can deduce that $\Uh_\al \le 0$ for $\al \in (0, 1)$ and all $\ys \le y \le y_0$.

To show $\Uh_\al(\ys) \le 0$, use the free-boundary conditions in \eqref{eq:Uh_ys1} and \eqref{eq:Uh_ys2}, along with the differential equation for $\Uh$ in \eqref{eq:HJB-U_de_y}, to compute
\begin{equation}
\Uh(\ys) = \dfrac{\kap p^2}{p(1 + \kap \del) - 1} \( \dfrac{1}{1 - p} - \ys \).
\end{equation}
Differentiate this expression with respect to $\al$ and cancel factors of $\frac{\partial \ys}{\partial \al}$ to obtain
\begin{equation}
\Uh_\al(\ys) = - \, \dfrac{\kap p^2}{p(1 + \kap \del) - 1} < 0.
\end{equation}
Similarly, the free-boundary condition at $y_0$, namely, $\Uh(y_0) = 0$ implies that $\Uh_\al(y_0) = 0$.  Thus, it follows that $\Uh(y)$ is non-increasing with respect to $\al$ for any fixed value of $\ys < y < y_0$.
\end{proof}

\begin{figure}[t!]
\centerline{
% \fbox{
\adjustbox{trim={0.06\width} {0.15\height} {0.675\width} {0.12\height},clip}{\includegraphics[scale=.4,page=1]{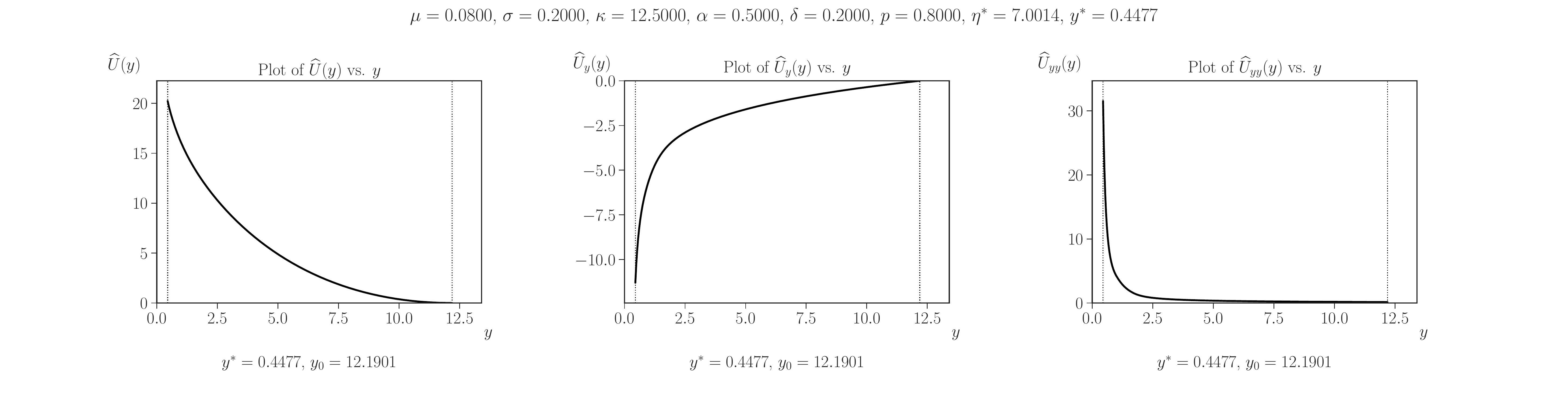}}
% }
%\fbox{
\adjustbox{trim={0.365\width} {0.15\height} {0.375\width} {0.12\height},clip}{\includegraphics[scale=.4,page=1]{Uhat.pdf}}
%}
%\fbox{
\adjustbox{trim={0.66\width} {0.15\height} {0.08\width} {0.12\height},clip}{\includegraphics[scale=.4,page=1]{Uhat.pdf}}
%}
}
\caption{A representative graph of $\Uh(y)$ and its first two derivatives. The parameters are as in Figure \ref{fig:EtaY_star}. The vertical dotted lines are $y=y^*=0.04477$ and $y=y_0=12.1901$, respectively.}
\label{fig:Uhat}
\end{figure}

Figure \ref{fig:Uhat} illustrates $\Uh$ given by \eqref{eq:Uh} for specific values of the parameters. Note that $\Uh$ is a decreasing and convex function, as expected. In the next section, we describe the solution of the HJB equation \eqref{eq:HJB-U} in terms of $\Uh$.

\subsection{Reversing the Legendre transform to obtain $V$ and the optimal policy}\label{sec:42}

Because we obtained $\Uh$'s FBP by applying the convex Legendre transform to $U$'s FBP, we now reverse this process and obtain $U$ as the concave Legendre transform of $\Uh$.  Specifically, $U(w) = \Uh(y) - y \Uh_y(y)$, in which $y \in [\ys, y_0]$ uniquely solves $w = - \Uh_y(y)$.  We are able to solve $w = - \Uh_y(y)$ because $\Uh$ is strictly decreasing and convex, and it follows that $U$ is increasing and concave, as expected.   The following theorem summarizes these (relatively straightforward) computations; in its statement, we rely on the notation of Proposition \ref{prop:Uh}.

\begin{theorem}\label{thm:V}
Suppose $\frac{1}{1 + \kap \del} < p < 1$. Let $\etas$ and $\ys$ be the solution of \eqref{eq:etas_ys} and define $y_0:=\eta^*\alpha^{-p}$.  Furthermore, define $\wa$, $\wo$, and $\ws$ by
\begin{equation}\label{eq:wa}
\wa = \dfrac{\kap \al}{1 + \kap \del} \left\{ \ln \etas + \( \dfrac{\kap \del}{1 + \kap \del} - \dfrac{1}{\etas(1 - p)} \) \( (\etas)^{1+ \kap \del}  - 1 \) \right\},
\end{equation}
\vskip 1pt
\begin{equation}\label{eq:wo}
\wo =  \dfrac{\kap}{1 + \kap \del} \left\{ \ln \ys + p + \( \dfrac{1}{\ys} - \dfrac{\kap \del}{1 + \kap \del} \) \( 1 + \dfrac{(\ys)^{1 + \kap \del}}{p(1 + \kap \del) - 1} \) \right\},
\end{equation}
and
\begin{equation}\label{eq:ws}
\ws = \frac{\kap p}{p(1 + \kap \del) -1} \left\{ \dfrac{1}{\ys} - (1 - p) \right\}.
\end{equation}
If $(x, z) \in \Dc$, that is, $0 \le x \le \ws z$, the value function $V$ in \eqref{eq:V} is given by \smallskip
\begin{equation}\label{eq:V-sol}
V(x, z) =
\begin{cases}
 \dfrac{\kap \al z^{1-p} y}{1 + \kap \del} \left\{ \( \dfrac{y_0}{y} \)^{1 + \kap \del} - 1 \right\}\\
 \hspace{2em}- \dfrac{(\al z)^{1 - p}}{\del(1 - p)} \left\{ \( \dfrac{y_0}{y} \)^{\kap \del} - 1 \right\}, & 0 \le x \le \wa z, \vspace{1.5ex} \\ 
\left\{ \dfrac{\kap \del p^2}{1 - p} \, y^{- \, \frac{1 - p}{p}} - \( \dfrac{\kap \del (\ys)^{1 + \kap \del}}{1 + \kap \del} - (\ys)^{\kap \del} + \dfrac{1}{1 + \kap \del} \) y^{-\kap \del} \right\}\\
 \hspace{2em}\times\dfrac{z^{1-p}}{\del \big(p(1 + \kap \del) - 1 \big)}, & \wa z < x < \wo z, \vspace{1.5ex} \\
\dfrac{z^{1-p}}{p(1 + \kap \del) - 1} \( \dfrac{1}{\del} - \dfrac{\kap \ys}{1 + \kap \del} \) \( \dfrac{\ys}{y} \)^{\kap \del} \\
 \hspace{2em}+ \dfrac{z^{1 - p}}{\del(1 - p)} - \dfrac{\kap z^{1 - p} y}{1 + \kap \del} \, , & \wo z \le x \le \ws z.
\end{cases}
\end{equation}
Here, $y \in [\ys, y_0]$ uniquely solves $- \Uh_y(y) = x/z$.

Moreover, when the optimally controlled surplus and peak excess dividend rate lie in $\Dc$, that is, $0 \le X^*_t \le \ws z^*_t$, then the optimal policies are given in feedback form by $( \pis(X^*_t, z^*_t), \cs(X^*_t, z^*_t) \big)$, in which
\begin{equation}\label{eq:pis}
\pis(x, z) = - \, \dfrac{\mu}{\sig^2} \, \dfrac{z U_w(x/z)}{U_{ww}(x/z)} = \dfrac{\mu}{\sig^2} \, zy \Uh_{yy}(y),
\end{equation}
and
\begin{equation}\label{eq:cs}
\cs(x, z) =
\begin{cases}
\al z, &\quad 0 \le x \le \wa z, \vspace{1ex} \\ 
y^{- \, \frac{1}{p}} \, z, &\quad \wa z < x < \wo z, \vspace{1ex} \\
z, &\quad \wo z \le x < \ws z.
\end{cases}
\end{equation}
Along the line $x = \ws z$, the company increases the excess dividend rate via singular control to keep $(X^*_t, z^*_t)$ within $\Dc$, that is, $X^*_t \le \ws z^*_t$.

If $x > \ws z$, then the company sets its initial investment policy to 
\begin{align}\label{eq:pis_xbig}
	\pis(x,z) = - \, \dfrac{\mu}{\sig^2} \, \dfrac{U_w(\ws)}{\ws U_{ww}(\ws)} \, x = \dfrac{\mu}{\sig^2} \, \dfrac{\ys}{\ws} \, \Uh_{yy}(\ys) \, x,
\end{align}
and immediately increases its excess dividend rate to 
\begin{equation}\label{eq:cs_xbig}
\cs(x, z) = \dfrac{x}{\ws} \, ,
\end{equation}
with corresponding value function
\begin{equation}\label{eq:V_xbig}
V(x, z) = \dfrac{x^{1 - p}}{1 - p} \, \dfrac{\kap p (\ws)^p}{\big(p(1 + \kap \del) - 1\big)\ws + \kap p(1 - p)} \, .
\end{equation}
Thereafter, the company increases the excess dividend rate only as needed to keep $X^*_t \le \ws z^*_t$.
\end{theorem}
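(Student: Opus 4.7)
The plan is to split the argument into two parts: (i) a direct computation that inverts the Legendre transform used in Proposition \ref{prop:Uh} and produces the candidate value function and the candidate feedback policies, and (ii) a verification argument (deferred to Appendix \ref{app:Verify}) confirming that the candidate indeed equals $V$.

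For part (i), I would start from the relationship $U(w) = \Uh(y) - y\Uh_y(y)$ with $w = -\Uh_y(y)$, valid on $[y^*, y_0]$ because Proposition \ref{prop:Uh} gives that $\Uh$ is strictly decreasing and strictly convex on that interval. Differentiating $\Uh$ piecewise from \eqref{eq:Uh} and using the expressions for $C_1,\dots,C_6$, the critical surplus-to-peak ratios are then just the values of $-\Uh_y$ at the junction points: $\wa = -\Uh_y(\alpha^{-p})$, $\wo = -\Uh_y(1^-)$, and $\ws = -\Uh_y(y^*)$. Computing $-\Uh_y$ in each of the three regions and substituting $C_1,\dots,C_6$ (using the convenient forms \eqref{eq:C5_ys}, \eqref{eq:C6_ys} at the $y^*$-boundary) yields formulas \eqref{eq:wa}--\eqref{eq:ws}. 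The value function $V(x,z) = z^{1-p}U(x/z)$ then comes by evaluating $\Uh(y) - y\Uh_y(y)$ in each region and grouping terms; this is straightforward algebra and produces the three-branch formula \eqref{eq:V-sol}. Continuity and continuous differentiability of $U$ across $\wa$ and $\wo$ follow from the corresponding continuity of $\Uh$ and $\Uh_y$ at $y = \alpha^{-p}$ and $y = 1$, and the super-contact condition at $\ys$ gives $C^2$-regularity at $\ws$.

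The feedback policies come from the first-order conditions in the HJB equation \eqref{eq:HJB-U}. The investment maximizer is $\pih^*(w) = -\mu U_w/(\sigma^2 U_{ww})$, which, after using $U_w = y$ and $U_{ww} = -1/\Uh_{yy}$, becomes \eqref{eq:pis}. The dividend maximizer is $\ch^*(w) = (U_w)^{-1/p}$ projected onto $[\al,1]$: when $y = U_w \ge \alpha^{-p}$ (i.e.\ $w \le \wa$) the projection pins $\ch^*$ to $\alpha$; when $y \le 1$ (i.e.\ $w \ge \wo$) the projection pins $\ch^*$ to $1$; in between we get the interior value $y^{-1/p}$. Rescaling from $w$ to $(x,z)$ by $c^*(x,z) = z\,\ch^*(x/z)$ produces \eqref{eq:cs}. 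The singular-control prescription along $x = \ws z$ is read off from the super-contact condition $U_w(\ws) + \ws U_{ww}(\ws)/p$-type identity, which ensures the running maximum $z$ is increased only at this boundary.

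For the region $x > \ws z$, the homogeneity of degree $1-p$ of the problem under the scaling $(x,z,c,\pi) \mapsto (\beta x,\beta z,\beta c,\beta\pi)$ together with the fact that the drawdown constraint $c_t \ge \al z_t$ is preserved by this rescaling forces that, whenever $X_0/z_0 > \ws$, the company can do no better than to instantaneously raise $z_0$ to $X_0/\ws$ (a free move, since increasing $z$ is allowed by \eqref{eq:z}); after that jump the state lies on the boundary $x = \ws z$ where \eqref{eq:V-sol} applies. Substituting $z = x/\ws$ into the third branch of \eqref{eq:V-sol} and simplifying yields \eqref{eq:V_xbig}, and \eqref{eq:cs_xbig}--\eqref{eq:pis_xbig} follow by the same rescaling. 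I expect the main obstacle to be the verification step: showing that the candidate $V$ is in fact the supremum in \eqref{eq:V}. This requires an It\^o/Dynkin argument on $e^{-\delta t}v(X_t,z_t)$ for an arbitrary admissible pair $(\pi,c) \in \Cb(\al,z)$, using the HJB inequality \eqref{eq:HJB-FB} to obtain a submartingale property, together with a transversality estimate $\Eb[e^{-\delta \tau_n}v(X_{\tau_n},z_{\tau_n})] \to 0$ along a localizing sequence. The transversality bound is precisely where Corollary \ref{cor:Uh_al} is used, via the comparison with the $\al \to 0^+$ problem (Lemma \ref{lem:V_al} referenced in the excerpt), giving a tractable upper envelope independent of $\al$; this portion of the argument is carried out in Appendix \ref{app:Verify}.
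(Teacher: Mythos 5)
Your overall route --- recover $U$ as the concave Legendre transform of $\Uh$, read off $\wa$, $\wo$, $\ws$ as $-\Uh_y$ at the junction points $y=\al^{-p}$, $y=1$, $y=\ys$, obtain the feedback controls from the first-order conditions, and defer optimality to a verification argument --- is the same as the paper's, and the computational part (i) is essentially right. The gaps are in your sketch of part (ii), and they are substantive. First, the supermartingale/comparison argument is not powered by the HJB inequality alone: because $z_t$ is a nondecreasing singular control with a possible initial jump, the It\^o expansion of $\ee^{-\del t}v(X_t,z_t)$ contains the terms $\int \ee^{-\del t} v_z\,\dd z^c_t$ and the jump increments $v(X_t,z_{t^+})-v(X_t,z_t)$, and one needs $v_z\le 0$ on all of $\Dc$ (and $v_z=0$ off $\Dc$), not merely the smooth-fit condition $v_z(\ws z,z)=0$ on the free boundary. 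For the candidate this amounts to proving $(1-p)\Uh+py\Uh_y\le 0$ on $[\ys,y_0]$, which is the longest computation in the paper's verification (three separate regimes, with $C_3,\dots,C_6$ rewritten via \eqref{eq:etas_ys}); your outline never mentions it. Likewise the HJB inequality must be checked on $\Rb_+^2-\Dc$, which the paper does by differentiating $V(x,x/\ws)$ and using \emph{both} free-boundary conditions $v_z=v_{xz}=0$ there.

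Second, for the reverse inequality ($V\le\Vt$, i.e.\ attainability) you must show that the candidate feedback pair generates an admissible state process. Since $\pis$ and $\cs$ depend on the running maximum of $X^*$, the controlled SDE is path-dependent; the paper proves that $\pis$ and $\cs$ are Lipschitz (Lemma \ref{lem:Lipschitz}), deduces the functional-Lipschitz property of the resulting coefficients, and invokes Protter's existence and uniqueness theorem. Nothing in your outline addresses existence, uniqueness, or admissibility of $(X^*_t,z^*_t)$. Finally, your claim that for $x>\ws z$ the jump of $z_0$ to $x/\ws$ is a ``free move'' so the firm ``can do no better'' is circular: raising $z$ permanently raises the dividend floor $\al z_t$, so the optimality of that initial jump is precisely what the verification theorem establishes (the paper simply \emph{defines} the candidate on $x>\ws z$ by $V(x,x/\ws)$ and then proves both inequalities for it). You do correctly identify the transversality step and its reliance on the monotonicity in $\al$ from Corollary \ref{cor:Uh_al} and Lemma \ref{lem:V_al}.
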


\begin{proof}
	See Appendix \ref{app:Verify}.
\end{proof}

% \medskip

\begin{figure}[t!]
\centerline{
% \fbox{
\adjustbox{trim={0.06\width} {0.16\height} {0.675\width} {0.13\height},clip}{\includegraphics[scale=.4,page=1]{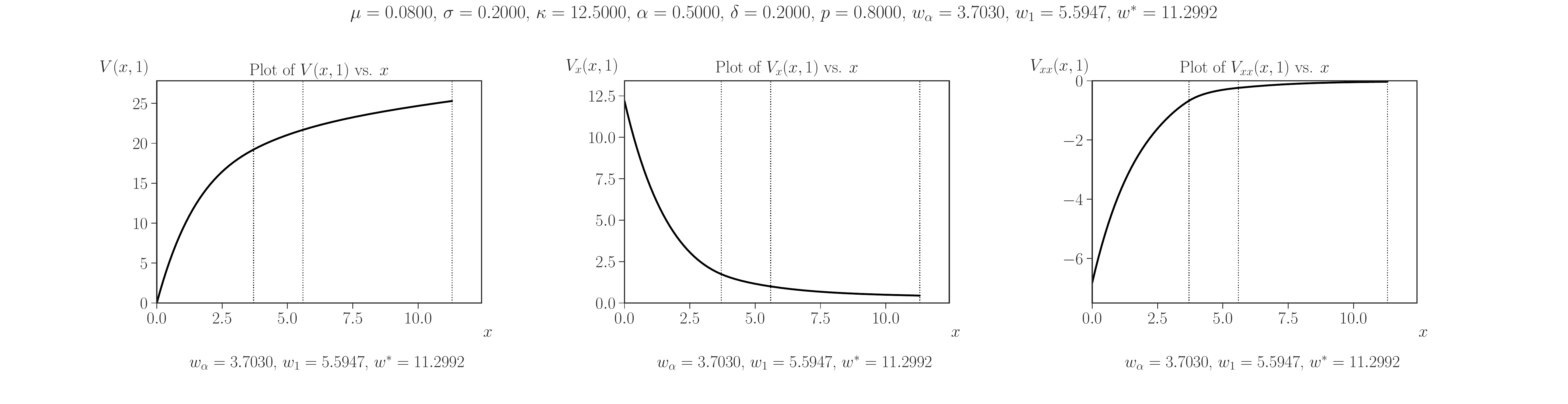}}
% }
%\fbox{
\adjustbox{trim={0.36\width} {0.16\height} {0.375\width} {0.13\height},clip}{\includegraphics[scale=.4,page=1]{U.pdf}}
%}
%\fbox{
\adjustbox{trim={0.655\width} {0.16\height} {0.08\width} {0.13\height},clip}{\includegraphics[scale=.4,page=1]{U.pdf}}
%}
}
\caption{A representative graph of $V(x, 1)$ and its first two derivatives with respect to $x$. The vertical dotted lines correspond to $x=\wa=3.703$, $x=\wo=5.5947$, and $x=\ws=11.2992$, respectively. The parameters are as in Figure \ref{fig:EtaY_star}.}
\label{fig:U}
\end{figure}

Figure \ref{fig:U} illustrates $V(x,1)=U(x)$ given by Theorem \ref{thm:V}, along with its derivatives with respect to $x$. Note that $U$ is the concave Legendre transform of $\Uh$ illustrated by Figure \ref{fig:Uhat}.

\begin{remark}
For the reader's reference, we now give the explicit equation solved by $y$, namely, $- \Uh_y(y) = x/z$:
\begin{align}\label{eq:y-sol}
\dfrac{x}{z} =
\begin{cases}
\dfrac{\kap \al}{1 + \kap \del} \left\{ \ln \dfrac{y_0}{y} + \( \dfrac{\kap \del}{1 + \kap \del} - \dfrac{\al^{-p}}{y_0(1 - p)} \) \( \( \dfrac{y_0}{y} \)^{1 + \kap \del} - 1 \) \right\}, & 0 \le x \le \wa z, \vspace{1.5ex} \\ 
\dfrac{\kap}{1 + \kap \del} \( \ln \ys + \dfrac{1}{\ys} - 1 \) + \dfrac{\kap p^2}{p(1 + \kap \del) - 1} \, y^{- \, \frac{1}{p}} \vspace{1.5ex} \\
\quad  - \, \dfrac{\kap y^{-(1 + \kap \del)}}{(1 + \kap \del) \big( p(1 + \kap \del) - 1 \big)} \left\{ \dfrac{\kap \del (\ys)^{1 + \kap \del}}{1 + \kap \del} - (\ys)^{\kap \del} + \dfrac{1}{1 + \kap \del} \right\} , & \wa z < x < \wo z, \vspace{1.5ex} \\
\dfrac{\kap}{1 + \kap \del} \left\{ \ln \dfrac{\ys}{y} + p + \( \dfrac{1}{\ys} - \dfrac{\kap \del}{1 + \kap \del} \) \( 1 + \dfrac{ \( \ys/y \)^{1 + \kap \del} }{p(1 + \kap \del) - 1}\) \right\}, & \wo z \le x \le \ws z .
\end{cases}
\end{align}
The intervals $0 \le x \le \wa z$, $\wa z < x < \wo z$, and $\wo z \le x \le \ws z$ correspond to $\al^{-p} \le y \le y_0$, $1 < y < \al^{-p}$, and $\ys \le y \le 1$, respectively.  Also, $\pis(x, z)$ is given by
\begin{equation}
\pis(x, z) =
\begin{cases}
\dfrac{2 \al z}{\mu} \left\{ \( \dfrac{\kap \del}{1 + \kap \del} - \dfrac{\al^{-p}}{y_0(1 - p)} \) \( \dfrac{y_0}{y} \)^{1 + \kap \del} + \dfrac{1}{1 + \kap \del} \right\}, & 0 \le x \le \wa z, \vspace{1ex} \\ 
\dfrac{2z}{\mu \big(p(1 + \kap \del) - 1\big)} \left\{ p y^{- \, \frac{1}{p}} - \dfrac{y^{-(1 + \kap \del)}}{1 + \kap \del} + \( \dfrac{1}{\ys} - \dfrac{\kap \del}{1 + \kap \del} \) \( \dfrac{\ys}{y} \)^{1 + \kap \del} \right\}, & \wa z < x < \wo z, \vspace{1ex} \\
\dfrac{2z}{\mu} \left\{ \dfrac{1}{1 + \kap \del} + \dfrac{1}{p(1 + \kap \del) - 1} \( \dfrac{1}{\ys} - \dfrac{\kap \del}{1 + \kap \del} \) \( \dfrac{\ys}{y} \)^{1 + \kap \del} \right\}, & \wo z \le x \le \ws z,
\end{cases}
\end{equation}
\end{remark}

\section{Properties of the optimal policy and the value function}
\label{sec:Prop}
In the final section of the paper, we investigate further properties of the optimal policy and the value function. 

In the next corollary, we examine how $\pis$ and $\cs$ change with respect to the state variables $x$ and $z$. Figures \ref{fig:piSTAR} and \ref{fig:cSTAR} illustrate these properties.

\begin{corollary}\label{cor:prop}
The optimal feedback functions $\pis = \pis(x, z)$ and $\cs = \cs(x, z)$ from \eqref{eq:pis} and \eqref{eq:cs}, respectively, satisfy the following properties.
\begin{enumerate}
%\item[$(a)$] If we allow $x \in (0, \ws z)$ to change so that $x/z$ is fixed, then $\pis$ and $\cs$ increase linearly with respect to $z$.

\item[$(a)$] For $\wa z < x < \wo z$ and $z > 0$, $\cs$ is increasing and convex with respect to $x$.

\item[$(b)$] For $0 < x < \wa z$ and $\wo z < x < \ws z$ and for $z > 0$, $\pis$ is increasing and convex with respect to $x$.

\item[$(c)$] For $\wa z < x < \wo z$ and $z > 0$, $\pis$ is increasing and concave with respect to $x$.

\end{enumerate}
\end{corollary}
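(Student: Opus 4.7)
The plan is to reduce to a one-variable calculation via $V(x,z) = z^{1-p} U(x/z)$ and the Legendre dual of Proposition~\ref{prop:Uh}. Writing $w = x/z$ and letting $y = y(w) \in [\ys, y_0]$ denote the unique solution of $-\Uh_y(y) = w$, the feedback controls factor as $\cs(x,z) = z\, \bar c(w)$ with $\bar c(w) = y(w)^{-1/p}$ on the middle segment (and the constants $\al$, $1$ on the two side segments), and $\pis(x,z) = z\, \bar\pi(w)$ with $\bar\pi(w) = (\mu/\sig^2)\, y(w)\, \Uh_{yy}(y(w))$. Strict convexity of $\Uh$ gives $y'(w) = -1/\Uh_{yy}(y) < 0$, so every claim becomes a sign check on a function of $y$ at fixed $z$. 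The main algebraic tool will be the identity obtained by differentiating the ODE in \eqref{eq:HJB-U_de_y} once in $y$: on each of the three segments it writes $y\,\Uh_{yyy}(y)$ as an explicit linear combination of $\Uh_{yy}(y)$ and a pure power of $y$, allowing $\Uh_{yyy}$ to be eliminated from every second-order computation.

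For part (a), $\bar c = y^{-1/p}$ is increasing because $y$ decreases in $w$; for convexity, a chain-rule computation yields
\begin{equation*}
\frac{d^2 \bar c}{dw^2} = \frac{\phi''(y)\Uh_{yy}(y) - \phi'(y)\Uh_{yyy}(y)}{\Uh_{yy}(y)^3}, \qquad \phi(y) := y^{-1/p},
\end{equation*}
and substituting $y\Uh_{yyy}(y) = \kap y^{-(1+p)/p} - (2+\kap\del)\Uh_{yy}(y)$ together with the explicit formula for $\Uh_{yy}$ from Proposition~\ref{prop:Uh} collapses the numerator to a positive multiple of $-C_4$, which is positive since $C_4 < 0$. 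For part (b), the explicit formulas give, in each side region,
\begin{equation*}
y\,\Uh_{yy}(y) = \kap\del(\kap\del+1)\, C\, y^{-\kap\del-1} + \frac{\kap\beta}{1+\kap\del},
\end{equation*}
with $(C,\beta)=(C_2,\al)$ for $\al^{-p}\le y\le y_0$ and $(C,\beta)=(C_6,1)$ for $\ys\le y\le 1$; in both cases $C > 0$, so $y\,\Uh_{yy}$ is strictly decreasing in $y$, giving the monotonicity of $\bar\pi$. Differentiating once more produces $d\bar\pi/dw = (\mu/\sig^2)\kap\del(\kap\del+1)^2 C\cdot y^{-\kap\del-2}/\Uh_{yy}(y)$, and the factor $y^{-\kap\del-2}/\Uh_{yy}(y) = [\kap\del(\kap\del+1)C + \kap\beta(1+\kap\del)^{-1}y^{\kap\del+1}]^{-1}$ is strictly decreasing in $y$, yielding convexity upon composition with $y'(w) < 0$.

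For part (c), the same framework produces
\begin{equation*}
g'(y) := \bigl(y\,\Uh_{yy}(y)\bigr)' = -\kap\del(\kap\del+1)^2 C_4\, y^{-\kap\del-2} - \frac{\kap}{p(1+\kap\del)-1}\, y^{-(1+p)/p},
\end{equation*}
whose two terms have opposite sign ($C_4<0$ makes the first positive). Increase of $\bar\pi$ is equivalent to $g'(y) < 0$, which I would verify by combining two observations: the ratio of the two terms equals $y^{-(\kap\del+1-1/p)}$ with positive exponent (from $p(1+\kap\del)>1$), and is therefore decreasing in $y$ on $(1,\al^{-p})$; and the $C^2$-matching at $y=1$ from Proposition~\ref{prop:Uh} supplies the identity $C_6 - C_4 = \{\del(\kap\del+1)^2[p(1+\kap\del)-1]\}^{-1}$, which forces $g'(1) = -\kap\del(\kap\del+1)^2 C_6 < 0$. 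Concavity of $\bar\pi$ then follows by differentiating once more and again eliminating $\Uh_{yyy}$ via the HJB-differentiated identity; the sign check reduces to an algebraic inequality controlled by the same two ingredients. The principal obstacle is this middle-region analysis: unlike the side regions, where the substitution of the HJB-differentiated identity produces a telescoping simplification that leaves a single manifestly-signed power term, in the middle region two incommensurate power terms with opposite-sign coefficients persist, and the required sign is obtained only by combining the boundary-matching from Proposition~\ref{prop:Uh} at $y=1$ (and at $y=\al^{-p}$ to handle the junction with the region of part~(b)) with the feasibility constraint $p(1+\kap\del) > 1$.
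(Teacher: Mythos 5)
Your argument is correct and, at its core, it is the same dual-variable computation as the paper's proof: everything is reduced to sign checks in $y$ using the strict convexity of $\Uh$ and the signs of $C_2$, $C_4$, $C_6$ from Proposition \ref{prop:Uh}, and your treatment of part (a) and of the monotonicity/convexity of $\pis$ in the two side regions matches the paper's after unwinding proportionality constants. The differences are in the bookkeeping. For the monotonicity of $\pis$ on $(\wa z,\wo z)$ the paper substitutes the representation \eqref{eq:C4_ys} of $C_4$ in terms of $\ys$ (derived in the appendix) and verifies positivity of $-(\Uh_{yy}+y\Uh_{yyy})$ directly, whereas you use the $C^2$-pasting identity at $y=1$, namely $C_6-C_4=\bigl[\del(1+\kap\del)^2\bigl(p(1+\kap\del)-1\bigr)\bigr]^{-1}$, to get $g'(1)=-\kap\del(1+\kap\del)^2C_6<0$ and then propagate the sign through $(1,\al^{-p})$ by your monotone-ratio observation; both routes are valid, and yours avoids importing the appendix formula at the cost of invoking $C_6>0$. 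In the side regions you obtain convexity of $\pis$ without ever touching $\Uh_{yyyy}$, by writing $d\bar{\pi}/dw$ as an explicitly decreasing function of $y$, while the paper computes $\Uh_{yy}(\Uh_{yyy}+y\Uh_{yyyy})-y\Uh_{yyy}^2\propto C_2$ (resp.\ $C_6$). Finally, the one step you leave as a sketch, concavity of $\pis$ on $(\wa z,\wo z)$, closes more easily than you predict: eliminating $\Uh_{yyy}$ and $\Uh_{yyyy}$ with your differentiated-ODE identity reduces the sign of $\Uh_{yy}(\Uh_{yyy}+y\Uh_{yyyy})-y\Uh_{yyy}^2$ to that of $\frac{p(1+\kap\del)-1}{p}\,\Uh_{yy}-\kap y^{-(1+p)/p}$, which is a positive multiple of $C_4<0$ (exactly the negative of the quantity you already signed in part (a)); no boundary matching at $y=1$ or $y=\al^{-p}$ is needed there, only $C_4<0$ and the feasibility condition $p(1+\kap\del)>1$.
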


\begin{proof}
For $\wa z < x < \wo z$, $\cs$ is proportional to $y^{- \, \frac{1}{p}}$.  Recall that $y = U_w$; thus,
\begin{equation}
\dfrac{\partial \cs}{\partial x} \propto - \, \dfrac{1}{p} \, y^{- \, \frac{1}{p} - 1} \, \dfrac{\partial y}{\partial x} \propto - y^{- \, \frac{1}{p} - 1} U_{ww} \propto \dfrac{y^{- \, \frac{1}{p} - 1}}{\Uh_{yy}} > 0.
\end{equation}
We have shown that $\cs$ increases with $x$.  Moreover,
\begin{align}
\dfrac{\partial^2 \cs}{\partial x^2} &\propto \dfrac{\partial}{\partial x} \dfrac{y^{- \, \frac{1}{p} - 1}}{\Uh_{yy}} \propto  \left\{- \, \dfrac{1 + p}{p} \, y^{- \, \frac{1}{p} - 2} \, \Uh_{yy}  - y^{- \, \frac{1}{p} - 1} \, \Uh_{yyy} \right\} \dfrac{\partial y}{\partial x} \\
&\propto \dfrac{1 + p}{p} \, \Uh_{yy}  + y \Uh_{yyy} \propto \big( (1+ p) - p(2 + \kap \del) \big) C_4 \propto p(1 + \kap \del) - 1 > 0,
\end{align}
because $C_4 < 0$.  We have shown that $\cs$ is convex in $x$.

% \mcomment{Bahman: Add plots of $\pis(\cdot,z)$ and $\cs(\cdot,z)$ for two values of $z$, say, $z=1$ and $z=10$.}

\begin{figure}[t!]
\centerline{
% \fbox{
\adjustbox{trim={0.04\width} {0.5\height} {0.0\width} {0.1\height},clip}{\includegraphics[scale=.4,page=1]{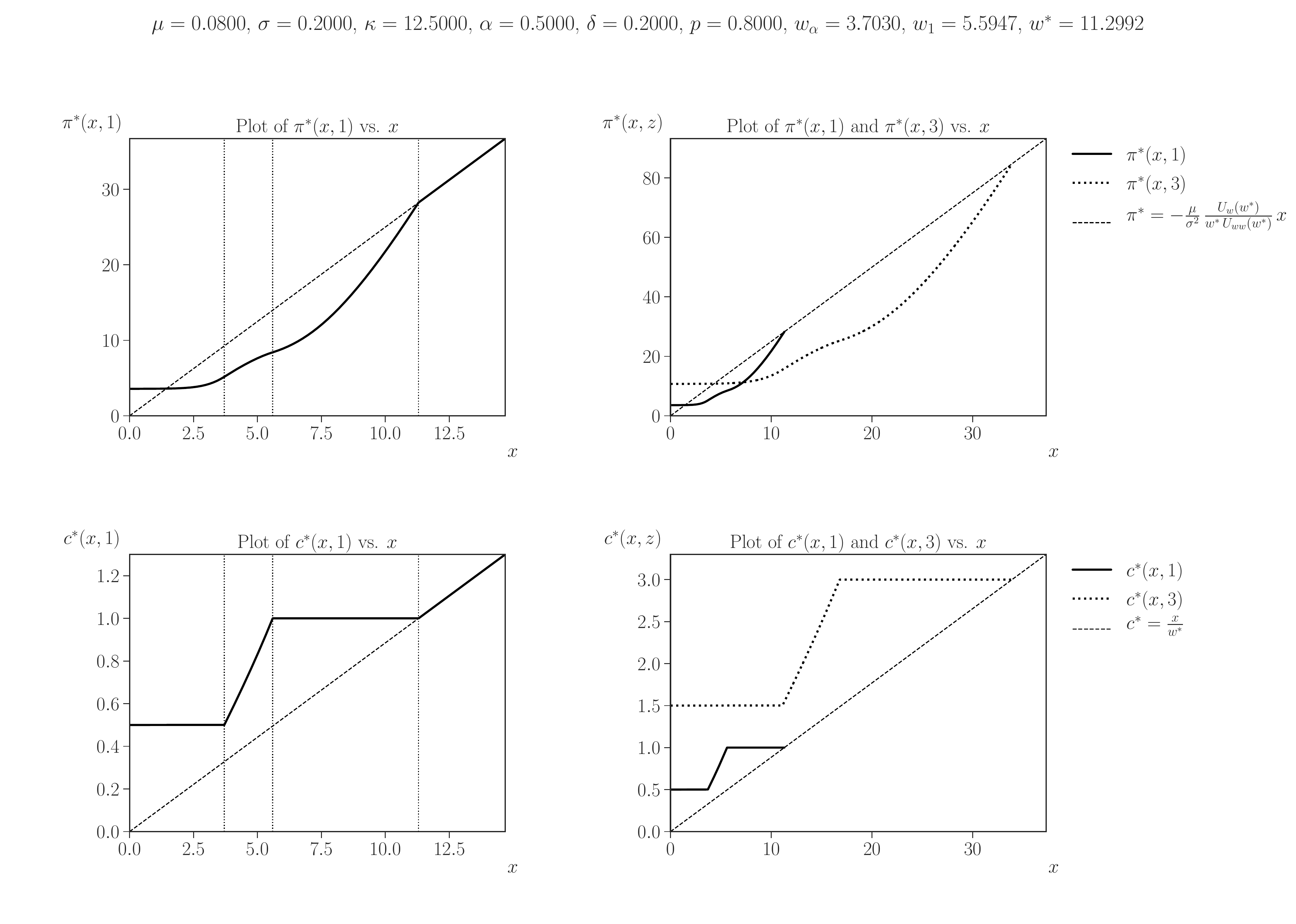}}
% }
}
\caption{Graph of $\pis(x, 1)$ on the left, and graphs of $\pis(x,1)$ and $\pis(x,2)$ on the right. In both graphs, the dashed diagonal line is $\pis(x,z) = - \, \dfrac{\mu}{\sig^2} \, \dfrac{U_w(\ws)}{\ws U_{ww}(\ws)} \, x$, which equals the optimal investment policy when $x > \ws z$. The vertical dotted lines correspond to $x=\wa=3.703$, $x=\wo=5.5947$, and $x=\ws=11.2992$, respectively. The parameters are as in Figure \ref{fig:EtaY_star}. Note that changing $z$ would simply scale the graph. Furthermore, the graphs clearly show that $\pis(x,z)$ satisfies Conditions (b) and (c) of Corollary \ref{cor:prop}.
}
\label{fig:piSTAR}
\end{figure}

\begin{figure}[t!]
\centerline{
% \fbox{
\adjustbox{trim={0.04\width} {0.03\height} {0.08\width} {0.55\height},clip}{\includegraphics[scale=.4,page=1]{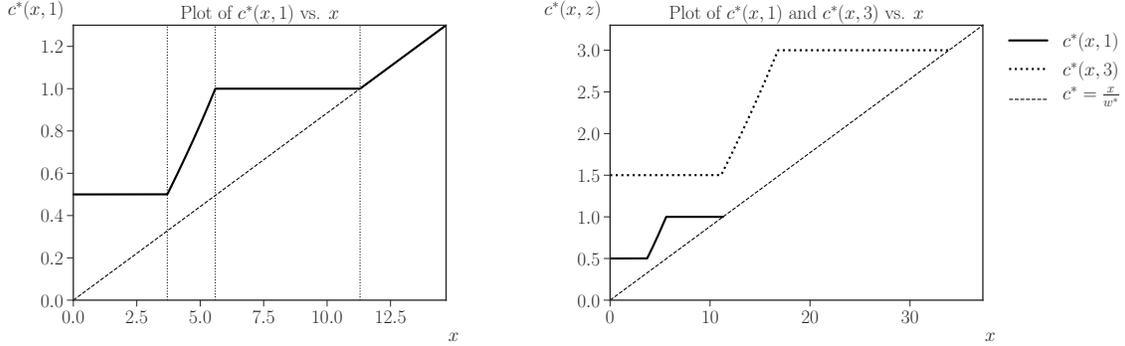}}
% }
}
\caption{Graph of $\cs(x, 1)$ on the left, and graphs of $\cs(x,1)$ and $\cs(x,2)$ on the right. In both graphs, the dashed diagonal line is $\cs(x,z) = \dfrac{x}{\ws}$, which equals the optimal dividend policy when $x > \ws z$. The vertical dotted lines correspond to $x=\wa=3.703$, $x=\wo=5.5947$, and $x=\ws=11.2992$, respectively. The parameters are as in Figure \ref{fig:EtaY_star}. Note that changing $z$ would simply scale the graphs. Furthermore, the graphs clearly show that $\cs(x,z)$ satisfies Condition (a) of Corollary \ref{cor:prop}. Finally, despite appearances, $c^*(x,z)$ is not linear in $x$ over the range $x \in (\wa z, \wo z)$. The reason that it appears almost linear is that, in equation \eqref{eq:y-sol}, the term involving $\big(y(x/z)\big)^{-1/p}$ dominates the term involving $y^{-(1 + \kap \del)}$.
% See the next figure.
% \comment{Bahman: $\cs(x,1)$ seems to be linear in x for $\wa<x<\wo$, regardless of the choice of parameters. I cannot establish if this is due to small curvature, or the graph is really a line. Any comment?}
}
\label{fig:cSTAR}
\end{figure}

For $0 < x < \ws z$, $\pis$ is proportional to $y \Uh_{yy}$; thus, except at $x = \wa z$ and $x = \wo z$,
\begin{equation}
\dfrac{\partial \pis}{\partial x} \propto \( \Uh_{yy} + y \Uh_{yyy} \) \dfrac{\partial y}{\partial x} \propto - \, \dfrac{\Uh_{yy} + y \Uh_{yyy}}{\Uh_{yy}}.
\end{equation}
For $0 < x < \wa z$, from the expression for $\Uh$ in \eqref{eq:Uh}, we obtain
\begin{equation}
- \( \Uh_{yy} + y \Uh_{yyy} \) \propto C_2 > 0.
\end{equation}
For $\wa z < x < \wo z$, or equivalently, $1 < y < \al^{-p}$,
\begin{align}
- \( \Uh_{yy} + y \Uh_{yyy} \) &\propto \kap \del (1 + \kap \del)^2 C_4 y^{-(1 + \kap \del)} + \dfrac{\kap}{p(1 + \kap \del) - 1} \, y^{- \, \frac{1}{p}} \\
&\propto (1 + \kap \del) \left\{ - \, \dfrac{1}{1 + \kap \del} + (\ys)^{\kap \del} - \dfrac{\kap \del}{1 + \kap \del} (\ys)^{1 + \kap \del} \right\} y^{-(1 + \kap \del)} + y^{- \, \frac{1}{p}} \\
&= \( y^{- \, \frac{1}{p}} - y^{-(1 + \kap \del)} \) + \( \dfrac{1 + \kap \del}{\ys} - \kap \del \) \( \dfrac{\ys}{y} \)^{1 + \kap \del} > 0,
\end{align}
in which the second line follows from the expression for $C_4$ in \eqref{eq:C4_ys}.  In the last line, the expression in the first set of parentheses is positive because $y > 1$ and $p(1 + \kap \del) > 1$, and the expression in the second set of parentheses is positive because $0 < \ys < 1$.  Finally, for $\wo z < x < \ws z$,
\begin{equation}
- \( \Uh_{yy} + y \Uh_{yyy} \) \propto C_6 > 0,
\end{equation}
We have shown that $\pis$ increases with $x$.

For $0 < x < \ws z$, except at $x = \wa z$ and $x = \wo z$,
\begin{equation}
\dfrac{\partial^2 \pis}{\partial x^2} \propto - \, \dfrac{\partial}{\partial x} \dfrac{y \Uh_{yyy}}{\Uh_{yy}} \propto - \left\{ \Uh_{yy} \( \Uh_{yyy} + y \Uh_{yyyy} \) - y \Uh^2_{yyy} \right\}\dfrac{\partial y}{\partial x} \propto \Uh_{yy} \( \Uh_{yyy} + y \Uh_{yyyy} \) - y \Uh^2_{yyy}.
\end{equation}
For $0 < x < \wa z$, from the expression for $\Uh$ in \eqref{eq:Uh}, we obtain
\begin{equation}
\Uh_{yy} \( \Uh_{yyy} + y \Uh_{yyyy} \) - y \Uh^2_{yyy} \propto C_2 > 0,
\end{equation}
because $C_2 > 0$.  We obtain a similar expression for $\wo z < x < \ws z$, except with $\al$ and $C_2$ replaced by $1$ and $C_6 > 0$, respectively.  Thus, $\pis$ is convex in $x$ for $0 < x < \wa z$ and $\wo z < x < \ws z$.  For $\wa z < x < \wo z$,
\begin{equation}
\Uh_{yy} \( \Uh_{yyy} + y \Uh_{yyyy} \) - y \Uh^2_{yyy} \propto C_4 < 0;
\end{equation}
thus, $\pis$ is concave in $x$ for $\wa z < x < \wo z$.
\end{proof}

\begin{remark}
It is interesting that $\pis$ is convex in $x$ when the constraints $\al z \le c \le z$ bind in $\Dc$.  It's as if the company invests more aggressively with increasing surplus so that surplus will further increase and the company can either avoid the lower constraint $\al z$ or increase its dividend rate beyond the upper constraint $z$ via singular control along the free-boundary $x = \ws z$.  \qed
\end{remark}

Next, we examine the effect of changing the drawdown parameter $\al$ on the value function and the optimal policy. Figure \ref{fig:wVSal} illustrates the sensitivity of the free boundaries $\wa$, $\wo$, and $\ws$ to $\al$.  The graph indicates that all three are are increasing with respect to $\al$.  In the next corollary, we prove the result for $\ws$.

\begin{figure}[b!]
\vspace{10pt}
\centerline{
% \fbox{
\adjustbox{trim={0\width} {0.05\height} {0.0\width} {0.2\height},clip}{\includegraphics[scale=.4,page=1]{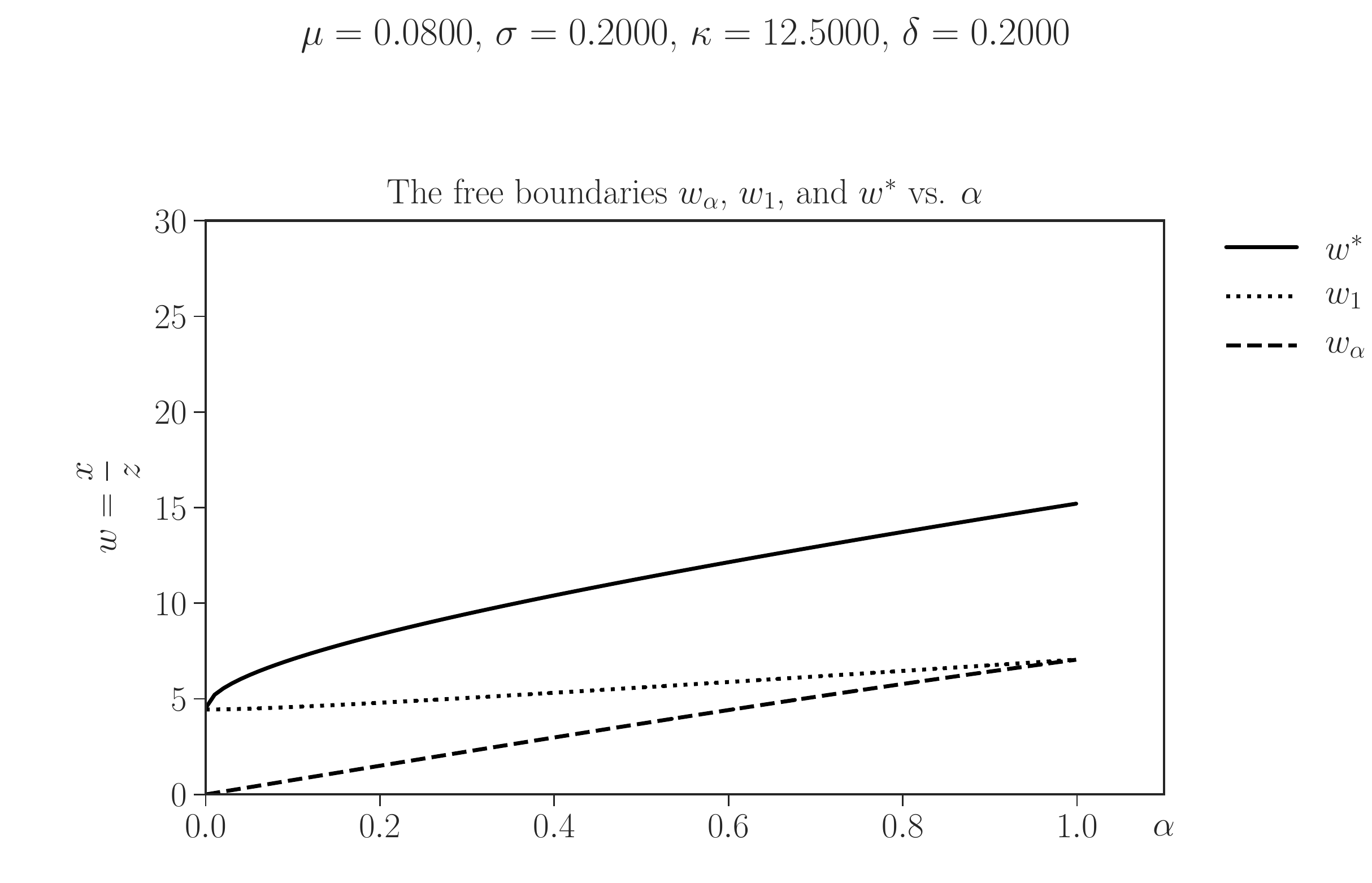}}
% }
}
\caption{Sensitivity of the free boundaries $\wa$, $\wo$, and $\ws$ with respect to $\al$.  As proved in Corollary \ref{cor:ws_al}, $\ws$ is increasing in $\al$. It appears that the other two boundaries are also increasing in $\al$.}
\label{fig:wVSal}
\end{figure}

% \pagebreak

\begin{corollary}\label{cor:ws_al}
The free boundary $\ws$ given in \eqref{eq:ws} increases with respect to $\al$.
\end{corollary}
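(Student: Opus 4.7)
The plan is to reduce the claim to a monotonicity statement about $\ys$ and then invoke Corollary~\ref{cor:Uh_al}. First, I would observe from \eqref{eq:ws} that
$\ws = \tfrac{\kap p}{p(1+\kap\del)-1}\bigl(1/\ys - (1-p)\bigr)$
is a strictly decreasing function of $\ys$, so the corollary amounts to showing that $\ys$ is non-increasing in $\al$.

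Next, I would take the total derivative in $\al$ of the identity
\begin{equation*}
\Uh(\ys(\al),\al) \;=\; \tfrac{\kap p^2}{p(1+\kap\del)-1}\bigl(\tfrac{1}{1-p} - \ys(\al)\bigr),
\end{equation*}
which is derived within the proof of Corollary~\ref{cor:Uh_al} from the ODE \eqref{eq:HJB-U_de_y} and the free-boundary conditions \eqref{eq:Uh_ys1}--\eqref{eq:Uh_ys2}. The left-hand side differentiates to $\Uh_\al(\ys,\al) + \Uh_y(\ys,\al)\,\tfrac{d\ys}{d\al}$, and the smooth-fit condition \eqref{eq:Uh_ys1} combined with the above formula yields $\Uh_y(\ys) = -\ws$. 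Equating this with the derivative of the right-hand side and substituting \eqref{eq:ws} produces, after simplification,
\begin{equation*}
\Uh_\al(\ys) \;=\; \Bigl(\ws - \tfrac{\kap p^2}{p(1+\kap\del)-1}\Bigr)\tfrac{d\ys}{d\al} \;=\; \tfrac{\kap p}{p(1+\kap\del)-1}\Bigl(\tfrac{1}{\ys} - 1\Bigr)\tfrac{d\ys}{d\al}.
\end{equation*}
The prefactor of $d\ys/d\al$ on the right is strictly positive, since $0 < \ys < 1$ and $p(1+\kap\del) > 1$ by \eqref{eq:p_constraint}.

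Finally, Corollary~\ref{cor:Uh_al} gives $\Uh_\al(y) \le 0$ on $(\ys,y_0)$, and since $\Uh_\al$ satisfies the linear second-order ODE \eqref{eq:Uh_al} with smooth coefficients on $[\ys,y_0]$, it extends continuously up to the boundary so that $\Uh_\al(\ys) \le 0$. Combined with the strictly positive prefactor above, this forces $\tfrac{d\ys}{d\al} \le 0$, yielding that $\ws$ is non-decreasing in $\al$, as claimed.

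The only real delicacy is the careful bookkeeping of total versus partial derivatives at the free boundary. If strict monotonicity of $\ws$ in $\al$ is desired, one would upgrade $\Uh_\al(\ys) \le 0$ to a strict inequality by invoking the comparison principle already employed in the proof of Corollary~\ref{cor:Uh_al}. A more computational alternative would be to apply the implicit function theorem directly to the $2\times 2$ defining system \eqref{eq:etas_ys} for $(\etas,\ys)$, but the route via Corollary~\ref{cor:Uh_al} reuses work already done and gives a noticeably cleaner argument.
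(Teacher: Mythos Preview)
Your argument is circular. The identity you derive,
\[
\Uh_\al(\ys)\;=\;\frac{\kap p}{p(1+\kap\del)-1}\Bigl(\tfrac{1}{\ys}-1\Bigr)\frac{d\ys}{d\al},
\]
shows that the sign of $\Uh_\al(\ys)$ is \emph{equivalent} to the sign of $d\ys/d\al$. You then invoke Corollary~\ref{cor:Uh_al} to conclude $\Uh_\al(\ys)\le 0$. But look at how that corollary is proved: it is a comparison-principle argument on $[\ys,y_0]$ that \emph{takes $\Uh_\al(\ys)\le 0$ as a boundary input} (``if we show $\Uh_\al(\ys)\le 0$ and $\Uh_\al(y_0)\le 0$, then \dots\ we can deduce that $\Uh_\al\le 0$''). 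The paper's verification of that boundary input---``differentiate \dots\ and cancel factors of $\partial\ys/\partial\al$''---is, once the total-versus-partial derivative bookkeeping is done carefully, exactly the displayed identity above, so its sign is the sign of $d\ys/d\al$. You are therefore assuming what you set out to prove: the inequality $\Uh_\al(\ys)\le 0$ used inside Corollary~\ref{cor:Uh_al} and the inequality $d\ys/d\al\le 0$ are the same statement.

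This is precisely why the paper's own proof of Corollary~\ref{cor:ws_al} does \emph{not} route through Corollary~\ref{cor:Uh_al}, but instead takes the ``more computational alternative'' you dismiss at the end: it differentiates the two defining equations \eqref{eq:etas_ys} in $\al$, eliminates $\partial\etas/\partial\al$, and reduces the sign of $\partial\ys/\partial\al$ to an explicit one-variable inequality in $\etas$ that is then checked directly. That independent computation is what breaks the circle; only afterward can one legitimately read off $\Uh_\al(\ys)<0$ from your identity. Your proposed shortcut does not save work---the real content of the corollary is hidden inside the boundary step of Corollary~\ref{cor:Uh_al}, and that step is exactly the monotonicity of $\ys$ in $\al$.
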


\begin{proof}
By differentiating the expression in \eqref{eq:ws} with respect to $\al$, we obtain
\begin{equation}
\dfrac{\partial \ws}{\partial \al} = - \, \dfrac{\kap p}{p(1 + \kap \del) -1} \, \dfrac{1}{(\ys)^2} \, \dfrac{\partial \ys}{\partial \al} \propto - \, \dfrac{\partial \ys}{\partial \al}.
\end{equation}
Thus, if we show that $\ys$ decreases with respect to $\al$, then we are done.  To that end, differentiate the first equation in \eqref{eq:etas_ys} with respect to $\al$ to obtain
\begin{equation}\label{eq:etas_diff}
\( \dfrac{\al}{\etas} - \dfrac{\al}{(\etas)^2(1 - p)} \) \dfrac{\partial \etas}{\partial \al} = \( \dfrac{1}{\ys} - \dfrac{1}{(\ys)^2} \) \dfrac{\partial \ys}{\partial \al} - \dfrac{1}{\al} \( \ln \ys + \dfrac{1}{\ys} \).
\end{equation}
Next, differentiate the second equation in \eqref{eq:etas_ys} with respect to $\al$ and rearrange the resulting equation to obtain
\begin{align}
&(\etas)^{1 + \kap \del} \( \dfrac{\al}{\etas} - \dfrac{\al}{(\etas)^2(1 - p)} \) \dfrac{\partial \etas}{\partial \al} + \dfrac{\al^{p(1 + \kap\del) - 1}}{\kap \del} \( \dfrac{\kap \del}{1 + \kap \del} (\ys)^{1 + \kap \del} - (\ys)^{\kap \del} \) \\
&\quad + \dfrac{\al^{p(1 + \kap\del) - 1}}{p(1 + \kap \del) - 1} (\ys)^{1 + \kap \del} \( \dfrac{1}{\ys} - \dfrac{1}{(\ys)^2} \) \dfrac{\partial \ys}{\partial \al} = - \, \dfrac{\al^{p(1 + \kap\del) - 1}}{\kap \del(1 + \kap \del)} \, .
\end{align}
Into the above expression, substitute for the left side of \eqref{eq:etas_diff} and rearrange to get
\begin{equation}
\dfrac{\partial \ys}{\partial \al} \propto \dfrac{\al^{p(1 + \kap\del) - 1}}{\kap \del} \(  \dfrac{1}{1 + \kap \del} + \dfrac{\kap \del}{1 + \kap \del} (\ys)^{1 + \kap \del} - (\ys)^{\kap \del} \) - \dfrac{(\etas)^{1 + \kap \del}}{\al} \( \ln \ys + \dfrac{1}{\ys} \).
\end{equation}
Use the equations in \eqref{eq:etas_ys} to write this expression in terms of $\etas$; specifically,
\begin{align}
\dfrac{\partial \ys}{\partial \al} &\propto \dfrac{1}{\kap \del(1 + \kap \del)} - \dfrac{p(1 + \kap \del) - 1}{\kap \del} \( \dfrac{\kap \del}{1 + \kap \del} (\etas)^{1 + \kap \del} - \dfrac{1}{1 - p} (\etas)^{\kap \del} \) \\
& \quad - \dfrac{(\etas)^{1 + \kap \del}}{\al} \left\{ \al \( \ln \etas + \dfrac{1}{\etas(1 - p)} \) - \al(1 + p) + 1 \right\} \\
&\propto \dfrac{(\etas)^{-(1 + \kap \del)}}{\kap \del(1 + \kap \del)} + \( \dfrac{1}{1 + \kap \del} + 1 - \dfrac{1}{\al} \) - \dfrac{1 + \kap \del}{\kap \del \etas} - \ln \etas. 
\end{align}
Define $f$ by
\begin{equation}
f(\eta) = \dfrac{\eta^{-(1 + \kap \del)}}{\kap \del(1 + \kap \del)} + \( \dfrac{1}{1 + \kap \del} + 1 - \dfrac{1}{\al} \) - \dfrac{1 + \kap \del}{\kap \del \eta} - \ln \eta.
\end{equation}
for $\eta \ge 1$.  Note that $f(1) = - 1/\al < 0$, and
\begin{equation}
f'(\eta) \propto \dfrac{1 + \kap \del}{\eta} - \dfrac{1}{\eta^{1 + \kap \del}} - \kap \del =: g(\eta).
\end{equation}
Note that $g(1) = 0$ and
\begin{equation}
g'(\eta) \propto 1 - \eta^{\kap \del} < 0, \qquad \hbox{for } \eta > 1.
\end{equation}
Thus, $g(\eta) \le 0$ for $\eta \ge 1$, which implies that $f'(\eta) \le 0$ for $\eta \ge 1$, which further implies that $f(\eta) < 0$ for $\eta \ge 1$.  Therefore, because $f$ is proportional to $\frac{\partial \ys}{\partial \al}$, we have shown that $\ys$ decreases with $\al$, and $\ws$ increases with $\al$.
\end{proof}

\begin{remark}
It is intuitively pleasing that $\ws$ increases with the drawdown parameter $\al$ because as $\al$ increases, the lower bound on the excess dividend rate increases, and we expect the company to be less willing to increase its historical peak.   \qed
\end{remark}

Figures \ref{fig:VF_al}, \ref{fig:piSTAR_al}, and \ref{fig:cSTAR_al} illustrate, respectively, how the value function $V(x,1)$, the optimal investment policy $\pis(x,1)$, and the optimal dividend policy $\cs(x,1)$ change with respect to $\al\in(0,1)$.

In the final two corollaries, we consider limiting cases of our problem, namely, $\al \to 0+$ and $\al \to 1-$, respectively.  Without working through all the details (that is, providing an explicit verification theorem and proving that the proposed solution is the classical solution of the resulting variational inequality), one can show that the limit of the expression in \eqref{eq:V-sol} is the solution of the limiting problem.  In other words, our solution is continuous with respect to the drawdown parameter $\al$.

\begin{figure}[t!]
\centerline{
% \fbox{
\adjustbox{trim={0.08\width} {0.\height} {0.03\width} {0.15\height},clip}{\includegraphics[scale=.35,page=1]{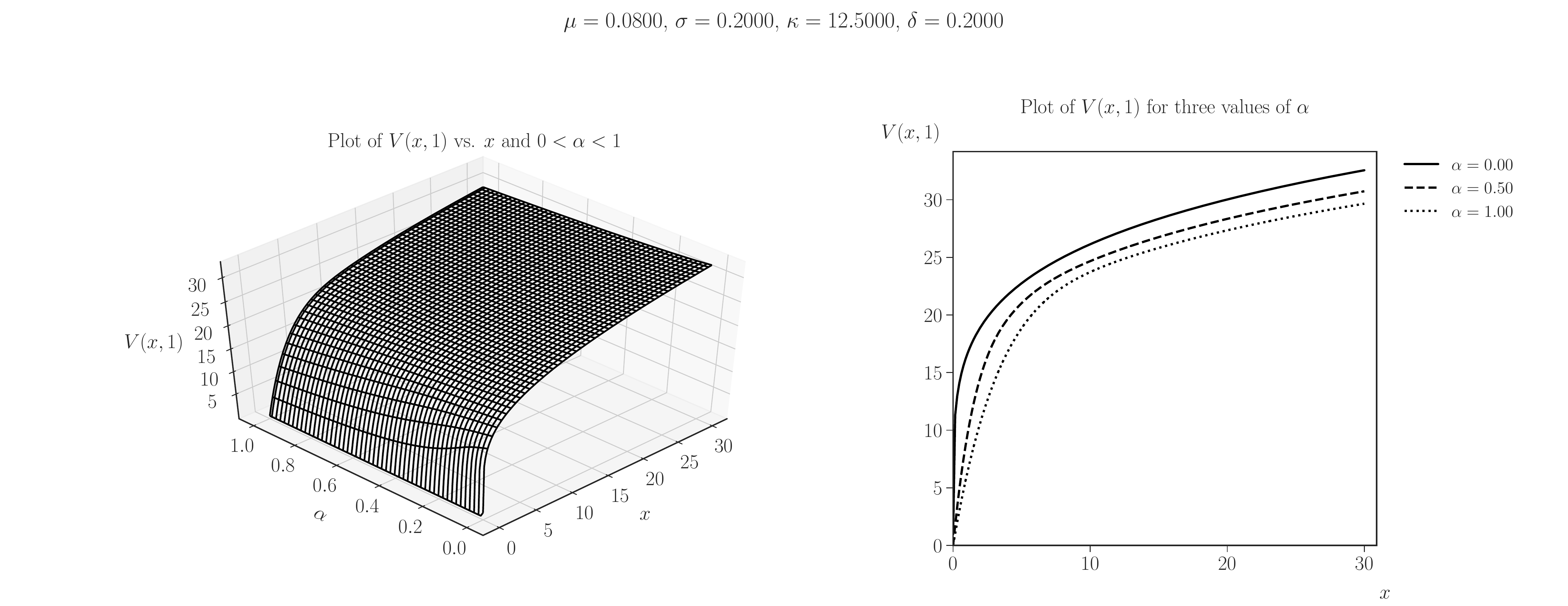}}
% }
}
\caption{On the left: the value function $V(x,1)$ for different values of $\al$. As expected, increasing $\al$ decreases the value function, since this reduces the set of admissible policies. On the right, three value functions are plotted, namely, the value function of the Merton problem $\al=0$ (the solid curve), the value function for the ratcheting problem $\al=1$ (the dotted curve), and the value function for a drawdown problem with $\al=0.5$ (the dashed curve). 
}
\label{fig:VF_al}
\end{figure}
\begin{figure}[t!]
\centerline{
% \fbox{
\adjustbox{trim={0.08\width} {0.\height} {0.03\width} {0.15\height},clip}{\includegraphics[scale=.35,page=1]{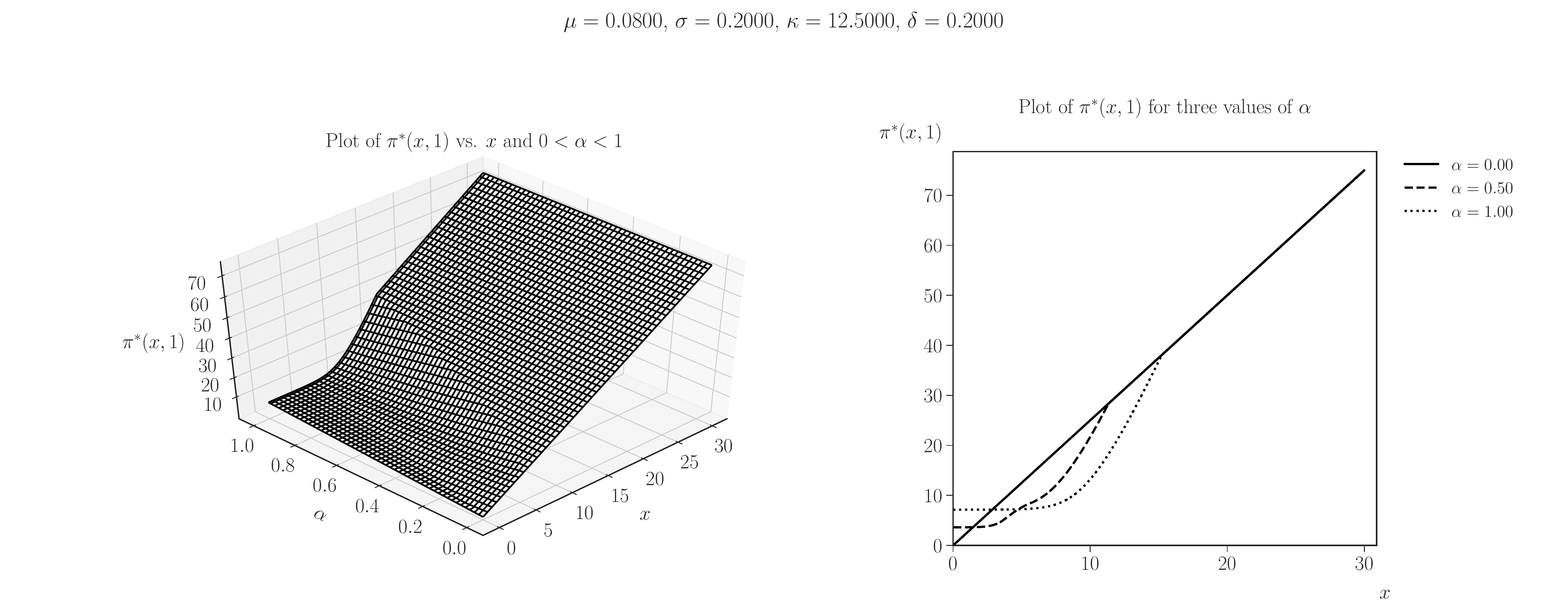}}
% }
}
\caption{On the left: the optimal investment policy $\pis(x,1)$ for different values of $\al$. On the right, the optimal policy $\pis(x,1)$ for the Merton problem (solid curve with $\al=0$), the ratcheting problem (dotted curve for $\al=1$), and a drawdown problem with $\al=0.5$ (the dashed curve). 
}
\label{fig:piSTAR_al}
\end{figure}

\begin{figure}[b!]
\centerline{
% \fbox{
\adjustbox{trim={0.08\width} {0.\height} {0.03\width} {0.15\height},clip}{\includegraphics[scale=.35,page=1]{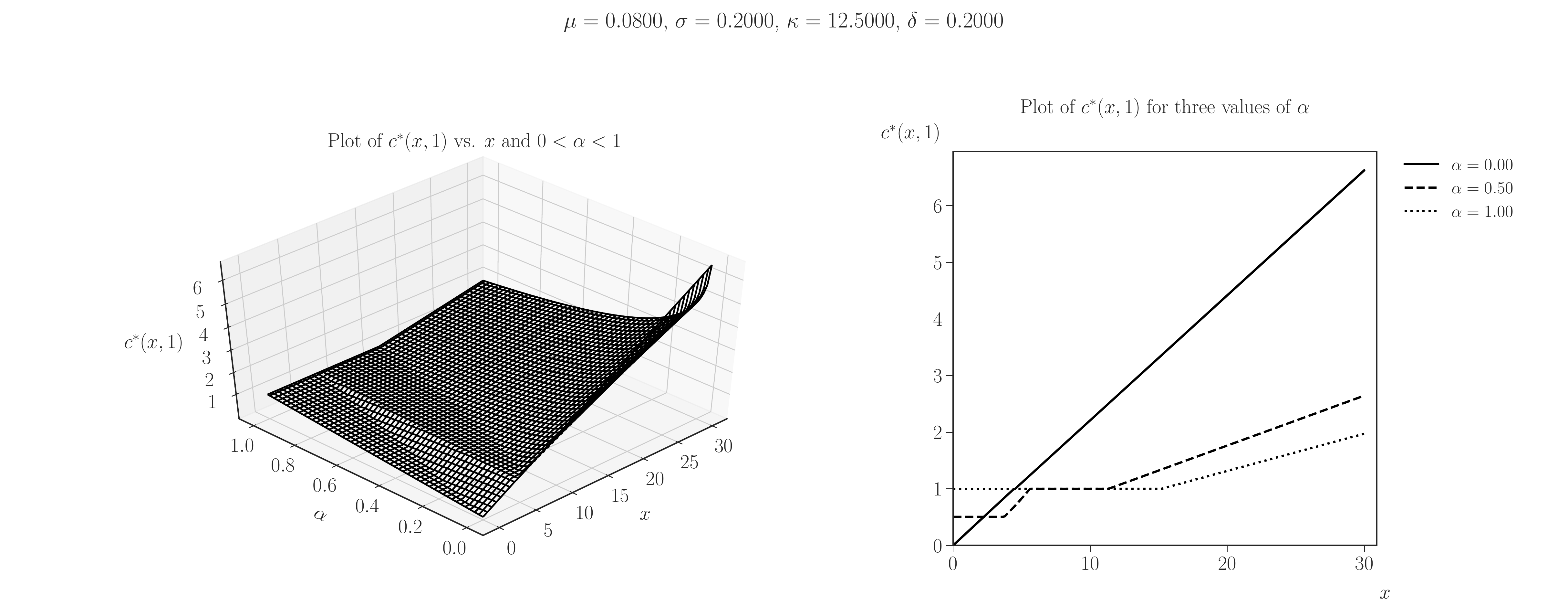}}
% }
}
\caption{On the left: the optimal dividend policy $\cs(x,1)$ for different values of $\al$. On the right, the optimal dividend policy $\cs(x,1)$ for the Merton problem (solid curve with $\al=0$), the ratcheting problem (dotted curve for $\al=1$), and the a drawdown problem with $\al=0.5$ (the dashed curve). 
}
\label{fig:cSTAR_al}
\end{figure}

In the first case, $V$ in \eqref{eq:V} becomes the value function for the following optimization problem:
\begin{equation}\label{eq:V0}
V_0(x) = \sup_{(\pi_t, c_t) \in \Cb_0} \E^x \[ \int_0^\infty \ee^{-\del t} \, \frac{c_t^{1-p}}{1-p} \, \dd t \],
\end{equation}
which is the infinite-horizon, optimal-consumption Merton problem; see Section $VI$ in \cite{Merton1969}.  (Under the optimal policy, $\tau = \infty;$ thus, we may replace $\tau$ with $\infty$ in the integral.)  Recall that $\Cb_0$ is the set of strategies with no drawdown constraint on the excess dividend rate, that is, we only require $c_t \ge 0$.  Thus, the set of admissible strategies $\Cb_0 = \Cb(0,z)$ is independent of the value of $z$, and the optimal policy does not depend on $\{z_t\}$, as we observe below in \eqref{eq:pi0} and \eqref{eq:c0}.

\begin{corollary}\label{cor:al0}
If we let $\al \to 0+$, then \eqref{eq:V-sol} becomes
	\begin{align}\label{eq:V0-sol}
		V_0(x) = \frac{x^{1-p}}{1-p} \(\frac{\kap p^2}{p(1 + \kap \del) - 1}\)^p ,
	\end{align}
for $x > 0$, and the optimal investment and excess dividend feedback functions are given by
\begin{equation}\label{eq:pi0}
\pib(x) = \frac{\mu}{\sig^2 p} \, x,
\end{equation}
and
\begin{equation}\label{eq:c0}
\cb(x) = \frac{p (1 + \kap \del) - 1}{\kap p^2} \, x.
\end{equation}
\end{corollary}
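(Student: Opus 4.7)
The plan is to take the limit $\al \to 0+$ in the formulas of Theorem \ref{thm:V}, exploiting the fact that the middle sub-region $(\wa z, \wo z)$ expands to cover the interior of $\Dc$ in the limit. The argument proceeds in three stages.

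First, I would establish $\ys \to 1$ and $\etas \to \eta_0$ for some finite $\eta_0 > 1$. Sending $\al \to 0$ in the first equation of \eqref{eq:etas_ys} formally yields $\ln y + 1/y = 1$, whose unique solution in $(0,\infty)$ is $y = 1$, because $y \mapsto \ln y + 1/y$ is strictly convex with minimum value $1$ attained at $y = 1$. For the second equation, the factor $\al^{1-p(1+\kap\del)}$ diverges as $\al \to 0$ since $p(1+\kap\del) > 1$; matching orders, the $O(1)$ piece vanishes at $\ys = 1$ via the algebraic identity $\frac{\kap}{1+\kap\del} - \frac{1}{\del} + \frac{1}{\del(1+\kap\del)} = 0$, and the coefficient of the divergent piece must tend to zero, forcing $\etas \to \eta_0$ as the unique solution $>1/(1-p)$ of $(p(1+\kap\del)-1)\bigl(\frac{\kap}{1+\kap\del}\eta^{1+\kap\del} - \frac{1}{\del(1-p)}\eta^{\kap\del}\bigr) = \frac{1}{\del(1+\kap\del)}$. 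The monotonicity of the two defining curves established in the proof of Proposition \ref{prop:Uh} then upgrades this to convergence of the intersection point.

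Second, I would substitute these limits into \eqref{eq:wa}--\eqref{eq:ws}: $\wa \to 0$ since $\etas$ is bounded and $\wa$ carries an overall factor of $\al$, while both $\wo$ and $\ws$ converge to $\frac{\kap p^2}{p(1+\kap\del)-1}$ after using the same vanishing identity at $\ys=1$. Consequently, for any fixed $x,z>0$ with $x/z < \frac{\kap p^2}{p(1+\kap\del)-1}$, the point $(x,z)$ lies in the middle region $(\wa z, \wo z)$ once $\al$ is small enough, so the middle-region formulas from Theorem \ref{thm:V} govern the limit.

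Third, I would take the limit of those middle-region formulas. The only appearance of $\ys$ in the middle pieces of \eqref{eq:V-sol} and \eqref{eq:y-sol} is through $\frac{\kap\del(\ys)^{1+\kap\del}}{1+\kap\del} - (\ys)^{\kap\del} + \frac{1}{1+\kap\del}$, which again vanishes at $\ys=1$ by the same identity. Hence \eqref{eq:y-sol} collapses to $x/z = \frac{\kap p^2}{p(1+\kap\del)-1}\,y^{-1/p}$, which yields the Merton consumption $\cs \to \frac{p(1+\kap\del)-1}{\kap p^2}\,x$ via \eqref{eq:cs}. In \eqref{eq:V-sol} the $y^{-\kap\del}$ term drops out, and the remaining $y^{-(1-p)/p}$ term simplifies via $y^{-(1-p)/p} = (y^{-1/p})^{1-p}$ to $V_0(x)$. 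For $\pis$, the continuity relations at the free boundaries let one rewrite $C_4$ as a multiple of the vanishing expression above; hence $C_4 \to 0$, the $C_4 y^{-\kap\del-2}$ contribution to $\Uh_{yy}$ vanishes, and $\Uh_{yy}(y) \to \frac{\kap p}{p(1+\kap\del)-1}\,y^{-1/p-1}$. Formula \eqref{eq:pis} then delivers $\pis \to \frac{\mu}{\sig^2 p}\,x$.

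The main technical obstacle is the matched-asymptotics argument of Step 1: because $\al^{1-p(1+\kap\del)}\to\infty$, one must carefully track the rate at which the coefficient of this divergent term vanishes in order to conclude that $\etas$ remains bounded and $\ys\to 1$, rather than drifting off to infinity. Once the asymptotics $(\ys,\etas)\to(1,\eta_0)$ are rigorously in hand, Steps 2 and 3 reduce to algebraic simplifications all driven by the single vanishing identity at $\ys=1$.
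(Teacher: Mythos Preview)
Your proposal is correct and follows essentially the same route as the paper. The paper itself provides no detailed proof of this corollary; the only argument it offers appears in Lemma~\ref{lem:V_al}, where it simply asserts that as $\al\to 0+$ one has $\ys\to 1$, $\wa\to 0$, and $\wo,\ws\to\frac{\kap p^2}{p(1+\kap\del)-1}$, and then substitutes these limits into \eqref{eq:V-sol} and \eqref{eq:V_xbig}. Your three-step outline is precisely this substitution argument, carried out in full, and your identification of the key algebraic identity $\frac{\kap\del}{1+\kap\del}-1+\frac{1}{1+\kap\del}=0$ (equivalently, the vanishing of $C_4$ at $\ys=1$ via \eqref{eq:C4_ys}) is exactly what makes all the middle-region computations collapse to the Merton formulas.
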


\smallskip

The second limiting case, $\al \to 1-$, means that we impose a \textit{ratcheting} constraint on the excess dividend rate, that is, the excess dividend rate is never allowed to decrease below its current level, as in \cite{Dybvig1995}.  However, \cite{Dybvig1995} frames his problem in such a way that the probability of ruin was $0$, that is, he required that the dividend (or consumption) rate $C_t \le rX_t$.  By contrast, we require that $C_t \ge rX_t$.

% \mcomment{Bahman: Plot $V(x,1)$, $\pis(x,1)$, and $\cs(x,1)$ for $\al=0,0.5, 1$.}

\begin{corollary}\label{cor:al1}
If we let $\al \to 1-$, then \eqref{eq:V-sol} becomes
\begin{equation}\label{eq:V1-sol}
V(x, z) = 
\begin{cases}
\dfrac{z^{1-p}}{p(1 + \kap \del) - 1} \( \dfrac{1}{\del} - \dfrac{\kap \ys}{1 + \kap \del} \) \( \dfrac{\ys}{y} \)^{\kap \del} + \dfrac{z^{1 - p}}{\del(1 - p)} - \dfrac{\kap z^{1 - p} y}{1 + \kap \del} \, , &\quad 0 \le x \le \ws z, \vspace{2ex} \\
\dfrac{x^{1 - p}}{1 - p} \, \dfrac{\kap p}{p(1 + \kap \del) - 1} \, \dfrac{(\ws)^p}{\ws + \kap p(1 - p)} \, , &\quad x > \ws z,
\end{cases}
\end{equation}
and the optimal investment and excess dividend feedback functions are given by
\begin{equation}\label{eq:pi1}
\pi_1(x, z) = 
\begin{cases}
\dfrac{2z}{\mu} \left\{ \dfrac{1}{1 + \kap \del} + \dfrac{1}{p(1 + \kap \del) - 1} \( \dfrac{1}{\ys} - \dfrac{\kap \del}{1 + \kap \del} \) \( \dfrac{\ys}{y} \)^{1 + \kap \del} \right\}, &\quad 0 \le x \le \ws z, \vspace{2ex} \\
\dfrac{2x}{\mu} \, \dfrac{1}{\kap p \big(p(1 + \kap \del) - 1 \big)} \, ,  &\quad x > \ws z,
\end{cases}
\end{equation}
and
\begin{equation}\label{eq:c1}
c_1(x, z) = 
\begin{cases}
z,  &\quad  0 \le x \le \ws z, \vspace{1ex} \\
\dfrac{x}{\ws} \, , &\quad x > \ws z.
\end{cases}
\end{equation}
In the above expressions, given $0 \le x \le \ws z$, $y \in [\ys, y_0]$ uniquely solves
\begin{equation}
\dfrac{x}{z} = \dfrac{\kap}{1 + \kap \del} \left\{ \ln \dfrac{\ys}{y} + p + \( \dfrac{1}{\ys} - \dfrac{\kap \del}{1 + \kap \del} \) \( 1 + \dfrac{ \( \ys/y \)^{1 + \kap \del} }{p(1 + \kap \del) - 1}\) \right\}.
\end{equation}
\end{corollary}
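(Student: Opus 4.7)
The plan is to verify Corollary \ref{cor:al1} by passing to the limit $\al \to 1-$ in the explicit formulas established in Theorem \ref{thm:V}. The economic picture is that, as the drawdown floor rises to the running maximum, the constraint $c_t \ge \al z_t$ merges with the interior upper bound $c_t \le z_t$, forcing $c_t = z_t$ throughout $0 < X_t < \ws z_t$. Consequently, the ``low'' region $\{0 \le x \le \wa z\}$ (on which $\cs = \al z$) and the ``intermediate'' region $\{\wa z < x < \wo z\}$ (on which $\cs \in (\al z, z)$) should collapse, leaving only $\{0 \le x \le \ws z\}$ with $\cs = z$, together with the singular region $\{x > \ws z\}$.

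First, I would analyze the system \eqref{eq:etas_ys} in the limit $\al \to 1-$. The natural guess is $\etas \to 1$, since $\etas = y_0 \al^p > 1$ measures the separation between the ``lower-consumption'' free boundary $\al^{-p}$ and the ruin boundary $y_0$, and this separation must vanish when the two regions merge. Substituting $\al = 1$ and $\eta = 1$ into both equations of \eqref{eq:etas_ys} reduces them, after elementary algebra, to a single equation for $\ys \in (0, 1)$. Since Proposition \ref{prop:Uh} established uniqueness via transversal intersection of two monotone curves, and those curves depend continuously on $\al$, the implicit solution depends continuously on $\al$ and the limit point is $(\ys_\infty, 1)$ with $\ys_\infty$ the unique root of this common equation.

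Next, I would show that $\wa \to 0$ and $\wo \to 0$. For $\wa$, the outer factor in \eqref{eq:wa},
\begin{equation*}
\ln \etas + \Big( \tfrac{\kap \del}{1 + \kap \del} - \tfrac{1}{\etas(1 - p)} \Big) \big( (\etas)^{1 + \kap \del} - 1 \big),
\end{equation*}
visibly tends to $0$ as $\etas \to 1$. For $\wo$, I would substitute the limiting identity from the first equation of \eqref{eq:etas_ys} at $\al = 1, \eta = 1$ (which expresses $\ln \ys + 1/\ys$ in closed form) into \eqref{eq:wo} and exhibit the cancellation; equivalently, the second equation of \eqref{eq:etas_ys} reduces at $(\al, \eta) = (1, 1)$ to precisely the algebraic condition that makes the bracketed expression in \eqref{eq:wo} vanish, confirming consistency of the two routes.

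With $\wa, \wo \to 0$, the third-case formulas of \eqref{eq:V-sol}, \eqref{eq:pis}, \eqref{eq:cs}, and \eqref{eq:y-sol} now cover all of $\{0 \le x \le \ws z\}$ and become the first-case formulas \eqref{eq:V1-sol}, \eqref{eq:pi1}, \eqref{eq:c1} verbatim. The $x > \ws z$ case is obtained by taking $\al \to 1-$ in \eqref{eq:pis_xbig}, \eqref{eq:cs_xbig}, \eqref{eq:V_xbig} and rearranging using the closed-form relation between the limiting $\ys$ and $\ws$ implied by \eqref{eq:ws}. The main obstacle is the first step: justifying $\etas \to 1$ at a fast enough rate that the vanishing prefactors in \eqref{eq:wa} and \eqref{eq:wo} truly dominate the remaining bounded factors. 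Once that is in hand, everything else is book-keeping in formulas that are continuous in $\al$, in line with the authors' remark that the solution is continuous with respect to the drawdown parameter.
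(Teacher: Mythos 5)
Your overall plan---sending $\al\to1-$ in the closed-form expressions of Theorem \ref{thm:V}---is the route the paper has in mind (the corollary is stated without a separate proof, on the strength of the remark that the solution is continuous in the drawdown parameter). The gap is in your first, crucial step: it is not true that $\etas\to1$, nor that $\wa,\wo\to0$. The proof of Proposition \ref{prop:Uh} shows that for every $\al\in(0,1)$ the solution of \eqref{eq:etas_ys} satisfies $\etas>\frac{1}{1-p}>1$, so $\etas$ stays bounded away from $1$ uniformly in $\al$; correspondingly, substituting $\al=1$ and $\eta=1$ into \eqref{eq:etas_ys} produces two generically incompatible equations for the single unknown $y$, so $(\ys,1)$ is not the limit point of the solution curve. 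Likewise $\wo$ cannot vanish: since $\ln\ys+1/\ys>1$ and $1/\ys>\frac{\kap\del}{1+\kap\del}$, formula \eqref{eq:wo} gives $\wo>\frac{\kap}{1+\kap\del}\left(p+\frac{1}{1+\kap\del}\right)>0$ uniformly in $\al$; this is also consistent with Corollary \ref{cor:ws_al} and Figure \ref{fig:wVSal}, where the free boundaries \emph{increase} in $\al$, and with Lemma \ref{lem:V_al}, where $\wa\to0$ is the $\al\to0+$ limit, not the $\al\to1-$ one. What actually collapses as $\al\to1-$ is only the middle region: in the dual variable the three branches live on $[\al^{-p},y_0]$, $(1,\al^{-p})$, and $[\ys,1]$, and $\al^{-p}\to1$ squeezes out the middle interval, so $\wa$ and $\wo$ merge at a positive value; region (a) survives, but its policy $c^*=\al z$ coincides with $c^*=z$ because the floor equals the cap.

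The corollary's formula on all of $0\le x\le\ws z$ therefore does not come from the disappearance of regions (a)--(b), but from the fact that at $\al=1$ the first and third branches become the same function of $y$. Indeed, at $\al=1$ one has $y_0=\etas$, the constants satisfy $C_1=C_5$ and $C_2=C_6$ (the terms carrying $1-\al$ and $\al^{1-p(1+\kap\del)}-1$ drop out of \eqref{eq:C5}--\eqref{eq:C6}), and the second equation of \eqref{eq:etas_ys} reads
\begin{equation}
\frac{\kap}{1+\kap\del}\,y_0^{1+\kap\del}-\frac{1}{\del(1-p)}\,y_0^{\kap\del}
=\frac{(\ys)^{\kap\del}}{p(1+\kap\del)-1}\left(\frac{1}{\del}-\frac{\kap\,\ys}{1+\kap\del}\right),
\end{equation}
which is precisely the identity equating the coefficients of $y^{-\kap\del}$ in the first and third branches of \eqref{eq:Uh}, hence of \eqref{eq:V-sol} and \eqref{eq:y-sol}. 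With this gluing, the single third-branch formula is valid for all $y\in[\ys,y_0]$, i.e.\ for all $x\in[0,\ws z]$, which is the first case of \eqref{eq:V1-sol}; the $x>\ws z$ statements are \eqref{eq:pis_xbig}--\eqref{eq:V_xbig} read at $\al=1$, as you say. So your reduction step must be replaced by this verification that the piecewise formulas coincide at $\al=1$; as written, the claims $\etas\to1$ and $\wa,\wo\to0$ are false and the derivation of the $0\le x\le\ws z$ case fails with them.
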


\smallskip

\begin{remark}
In these two corollaries, we see that the drawdown parameter $\al$ provides a link between Merton's optimal consumption problem and the ratcheting problem.  \cite{Arun2012} observes the same connection between Merton's optimal consumption problem and the ratcheting problem of \cite{Dybvig1995}.  Recall that neither the results of \cite{Arun2012} nor of \cite{Dybvig1995} are comparable to ours because they both require the dividend/consumption rate to be such that ruin is impossible, while our model allows ruin to occur.  \qed
\end{remark}

We end the paper by commenting on the effect of the risk aversion parameter on the value function. As $p\to\frac{1}{1+\kap\,\del}$, the value function become arbitrary large and the control problem is ill-posed for $0\le p\le \frac{1}{1+\kap\,\del}$. As pointed out earlier, this phenomenon has been observed in other studies such as \cite{Merton1969}. Furthermore, since the utility function $\frac{c^{1-p}}{1-p}$ goes to infinity as $p\to 1^-$, the value function explodes there, too. Figure \ref{fig:VF_p} illustrates these properties.

\begin{figure}[t!]
\centerline{
% \fbox{
\adjustbox{trim={0.08\width} {0.\height} {0.03\width} {0.15\height},clip}{\includegraphics[scale=.35,page=1]{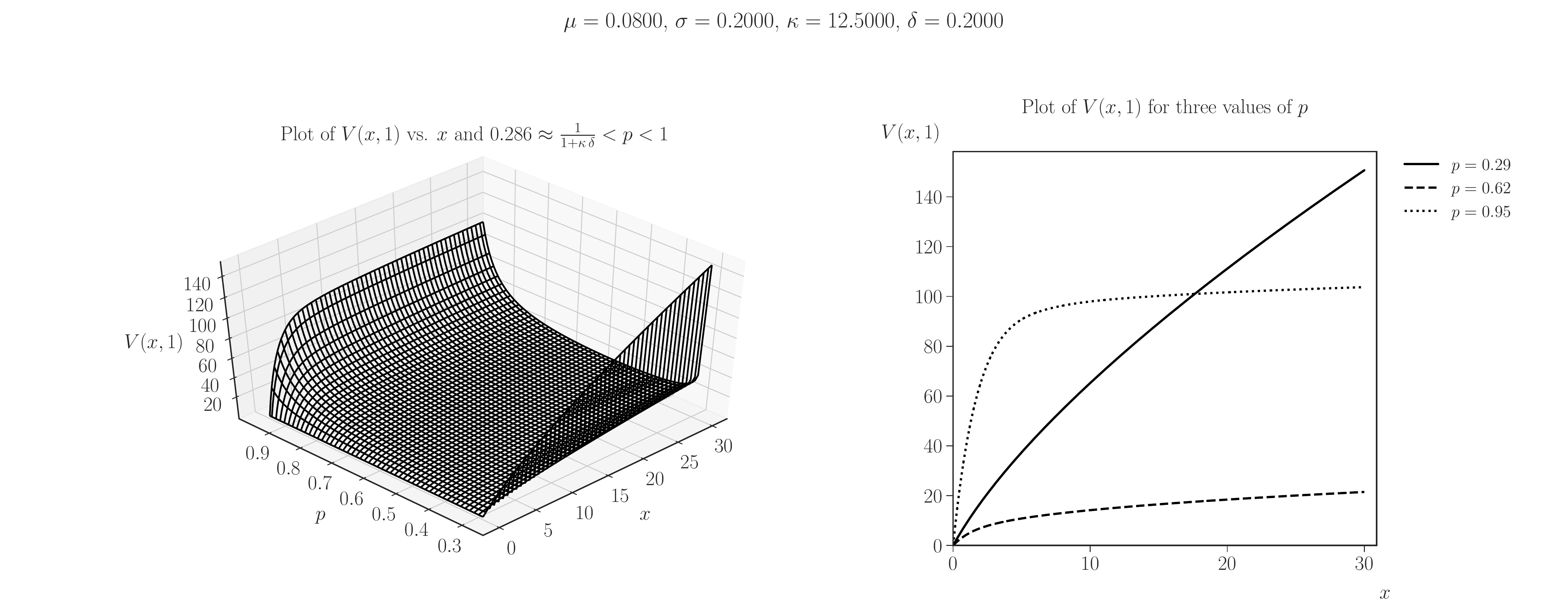}}
% }
}
\caption{
On left: the value function $V(x,1)$ for different values of $p$ between $\frac{1}{1+\kap\,\del}\approx0.286$ and 1.
Note that as $p\to1^-$, the utility function $\frac{c^{1-p}}{1-p}\to\infty$ for all $c$.
Thus, the value function goes to infinity as $p\to1^-$. Furthermore, as in \cite{Merton1969}, the value function also explodes as $p\to \frac{1}{1+\kap\,\delta}$. On the right, the value function $V(x,1)$ is plotted for three values of $p$, namely, $p=0.29$ (the solid curve), $p=0.62$ (the dotted curve), and $p=0.95$ (the dashed curve).
}
\label{fig:VF_p}
\end{figure}

\appendix

\section{Proof of Theorem \ref{thm:V}}\label{app:Verify}

For the clarity of the arguments in this section, we use the notation $\Vt$ to refer to the value function given by \eqref{eq:V}, that is,
\begin{equation}
	\Vt(x, z) := \sup_{(\pi_t, c_t) \in \Cb(\al, z)} \E^x \[ \int_0^\tau \ee^{-\del t} \, \frac{c_t^{1-p}}{1-p} \, \dd t \];\quad (x,z)\in\Rb^2_+.
\end{equation}
We reserve the notation $V$ to refer to the solution of the free-boundary problem \eqref{eq:HJB-FB}, given by \eqref{eq:V-sol} on $\Dc$ and \eqref{eq:V_xbig} on $\Rb_+^2-\Dc$.

The proof is dividend into two parts. For all $x,z\ge0$, we show in Section \ref{app:UBound} that $V(x,z)\ge\Vt(x,z)$. Then, in Section \ref{app:LBound}, we show that $V(x,z)\le\Vt(x,z)$.  We, thereby, prove that $V$ is the value function.

\subsection{Showing that $V(x,z)\ge\Vt(x,z)$}\label{app:UBound}

We start by providing a (smooth) comparison lemma for \eqref{eq:V}.

\begin{lemma}\label{lem:verif}
Suppose that function $v: \Rb^2_+ \to \Rb$ is continuously twice differentiable in $x$ and continuously differentiable in $z$, such that, for all $x ,z \ge 0$, $\pi\in\Rb$, and $c \ge \al z$,
	\begin{enumerate}
		\item[$(i)$] $v_z(x,z) \le 0$, and
		\item[$(ii)$] $\dfrac{1}{2} \, \sig^2 \pi^2 v_{xx}(x,z) + (\mu \pi - c) v_x(x,z) - \del v(x,z) + \dfrac{c^{1-p}}{1-p} \le 0$.
	\end{enumerate}
	Then, $v(x, z) \ge \Vt(x, z)$, for all $x, z \ge0$.
\end{lemma}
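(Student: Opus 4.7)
The plan is a standard stochastic verification argument based on It\^o's formula applied to $e^{-\delta t}v(X_t,z_t)$. Fix any admissible pair $(\pi_t,c_t)\in\Cb(\al,z)$, let $(X_t,z_t)$ be the corresponding surplus and running-peak processes from \eqref{eq:X} and \eqref{eq:z}, and let $\tau$ be the bankruptcy time \eqref{eq:tau}. Note that $z_t$ is non-decreasing and c\`adl\`ag, so I will use the It\^o formula for semimartingales (see \cite{Protter2005}) with the decomposition $z=z^{c}+\sum\Delta z$. Since $X$ is continuous and $v$ is $C^{2,1}$, this yields
\begin{align}
\dd\!\bigl[e^{-\del t}v(X_t,z_t)\bigr]
&= e^{-\del t}\Bigl[\tfrac{1}{2}\sig^2\pi_t^2 v_{xx}+(\mu\pi_t-c_t)v_x-\del v\Bigr]\dd t\\
&\quad + e^{-\del t}v_z(X_t,z_t)\,\dd z^{c}_t+e^{-\del t}\bigl(v(X_t,z_t)-v(X_t,z_{t-})\bigr)+e^{-\del t}\sig\pi_t v_x\,\dd W_t.
\end{align}

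Condition $(ii)$, evaluated along $c=c_t\ge\al z_t$ and $\pi=\pi_t$, shows that the $\dd t$-integrand is bounded above by $-e^{-\del t}\frac{c_t^{1-p}}{1-p}$. Condition $(i)$ together with the monotonicity of $z^{c}$ gives $v_z\,\dd z^{c}_t\le 0$, and the same monotonicity in $z$ (deduced from $v_z\le 0$) renders each jump $v(X_t,z_t)-v(X_t,z_{t-})\le 0$. Introducing a localizing sequence $\tau_n=\inf\{t\ge 0:\int_0^t\sig^2\pi_s^2 v_x(X_s,z_s)^2\,\dd s\ge n\}\nearrow\infty$, integrating from $0$ to $\tau\wedge T\wedge\tau_n$, and taking expectations (which annihilates the stochastic integral), I obtain
\begin{equation}
v(x,z)\ \ge\ \Eb^{x}\!\Bigl[\int_0^{\tau\wedge T\wedge\tau_n}e^{-\del s}\tfrac{c_s^{1-p}}{1-p}\,\dd s+e^{-\del(\tau\wedge T\wedge\tau_n)}v\bigl(X_{\tau\wedge T\wedge\tau_n},z_{\tau\wedge T\wedge\tau_n}\bigr)\Bigr].
\end{equation}

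Passing $n\to\infty$ by monotone convergence on the (nonnegative, since $p<1$) integral and dominated convergence for the terminal term handles the localization. The final step is to send $T\to\infty$ and then to take the supremum over admissible $(\pi,c)$. On the event $\{\tau\le T\}$ the terminal term is $e^{-\del\tau}v(0,z_\tau)$, while on $\{\tau>T\}$ it is $e^{-\del T}v(X_T,z_T)$. The running integral converges monotonically to $\int_0^\tau e^{-\del s}\frac{c_s^{1-p}}{1-p}\,\dd s$, which is exactly the objective; after taking $\sup$ over $\Cb(\al,z)$, the left side is $\Vt(x,z)$ plus the limiting terminal term.

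The main obstacle is therefore the \emph{transversality} step: showing
\begin{equation}
\liminf_{T\to\infty}\Eb^{x}\!\bigl[e^{-\del(\tau\wedge T)}v(X_{\tau\wedge T},z_{\tau\wedge T})\bigr]\ \ge\ 0.
\end{equation}
On $\{\tau<\infty\}$ the state freezes at $(0,z_\tau)$, and the required inequality $v(0,z_\tau)\ge 0$ follows by specializing $(ii)$ at $(x,z)=(0,z_\tau)$ with $\pi=0,\ c=\al z_\tau$, giving $\del v(0,z_\tau)\ge\frac{(\al z_\tau)^{1-p}}{1-p}>0$ since at $x=0$ the drift coefficient of $v_x$ in $(ii)$ has arbitrary sign to be chosen (more precisely, both Conditions $(i)$ and $(ii)$ can be applied at the boundary after a standard localization away from $\{x=0\}$). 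On $\{\tau>T\}$, the exponential decay $e^{-\del T}$ combined with the homogeneity/growth of $v$ of order $1-p<1$ controls the tail; a suitable bound of the form $0\le v(x,z)\le K(x^{1-p}+z^{1-p})$ (derived from the explicit form of $v=V$ on $\Dc$ and from \eqref{eq:V_xbig} off $\Dc$), coupled with the Mertonian moment bound $\Eb\bigl[e^{-\del T}X_T^{1-p}\bigr]\to 0$ obtained from $p>\tfrac{1}{1+\kap\del}$, closes the argument.
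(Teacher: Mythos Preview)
Your overall structure---apply It\^o to $e^{-\delta t}v(X_t,z_t)$, localize, use condition~(i) for the $\dd z$ terms and condition~(ii) for the $\dd t$ term---matches the paper. The gap is in your treatment of the terminal term. You correctly identify the obstacle as showing
\[
\liminf_{T\to\infty}\Eb^x\bigl[e^{-\delta(\tau\wedge T)}v(X_{\tau\wedge T},z_{\tau\wedge T})\bigr]\ge 0,
\]
but then argue via ``the explicit form of $v=V$ on $\Dc$ and from \eqref{eq:V_xbig} off $\Dc$'' and a ``Mertonian moment bound $\Eb[e^{-\delta T}X_T^{1-p}]\to 0$.'' Neither step is legitimate here. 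The lemma is stated for an \emph{arbitrary} $v$ satisfying (i)--(ii), so you may not invoke the explicit formulas for $V$ derived elsewhere; and the moment bound would have to hold uniformly over \emph{every} admissible $(\pi,c)\in\Cb(\al,z)$, not just the optimal one, which you have not established.

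The repair is much simpler than a transversality argument and is exactly what the paper does: conditions (i)--(ii) alone force $v\ge 0$ on all of $\Rb_+^2$, not just at $x=0$. Set $\pi=0$ in (ii). If $v_x(x,z)<0$ at some point, then letting $c\to\infty$ makes the left side of (ii) tend to $+\infty$, a contradiction; hence $v_x\ge 0$. Still with $\pi=0$, maximize $\frac{c^{1-p}}{1-p}-c\,v_x$ over $c\ge 0$ (which dominates the constrained maximum over $c\ge\al z$) to obtain $\delta v(x,z)\ge\frac{p}{1-p}\,v_x(x,z)^{(p-1)/p}\ge 0$. With $v\ge 0$ in hand, the localized terminal term $\Eb^x\bigl[e^{-\delta\tau_n}v(X_{\tau_n},z_{\tau_n})\bigr]$ is nonnegative and can simply be dropped, yielding $v(x,z)\ge\Eb^x\int_0^{\tau_n}e^{-\delta t}\frac{c_t^{1-p}}{1-p}\,\dd t$ directly; then $n\to\infty$ and a supremum over controls finishes. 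You essentially ran this very argument at the single point $(0,z_\tau)$; the point is that it works at every $(x,z)$, which removes the need for any growth or moment estimates.
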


\begin{proof}
	First, note that by Condition (ii),
	\begin{align}\label{eq:LB}
		v(x,z)\ge 0;\quad \forall x,z\ge0.
	\end{align}
	To see this, let $\pi=0$, so that Condition (ii) becomes
	\begin{align}
		- c v_x(x,z) - \del v(x,z) + \dfrac{c^{1-p}}{1-p} \le 0.
	\end{align}
	Note that $v_x(x,z)\ge 0$, otherwise this equation is violated by choosing $c>-\del v(x,z) / v_x(x,z)$. By maximizing over all $c\ge0$, we then obtain
	\begin{align}
		\del v(x,z) \ge \sup_{c\ge0}\left\{\dfrac{c^{1-p}}{1-p} - c\, v_x(x,z)\right\} = \frac{p}{1-p} \big(v_x(x,z)\big)^{\frac{p-1}{p}}\ge0,
	\end{align}
	which, in turn, yields $v(x,z)\ge0$.
	 
Fix arbitrary initial values $x, z \ge 0$, and choose a policy $(\pi_t, c_t) \in  \Cb(\al, z)$. Let the processes $(z_t)$ and $(X_t)$ be given by \eqref{eq:z} and \eqref{eq:X}, respectively. Define $\tau_n$ by
	\begin{equation}
		\tau_n = n \wedge \tau \wedge \inf \left\{t > 0: \int_0^t \ee^{-\del s} \pi_s^2 v_x^2(X_s, z_s) \dd s \ge n \right\}.
	\end{equation}
Recall that $\tau$ depends on the policy $(\pi_t, c_t)$. Applying It\^o's lemma to $\ee^{-\del t} \, v(X_t,z_t)$ on $[0,\tau_n]$ yields
\begin{align*}
	\ee^{-\del {\tau_n}} \, v(X_{\tau_n}, z_{\tau_n}) &= v(x, z) - \int_0^{\tau_n} \ee^{-\del t} \, \frac{c_t^{1-p}}{1-p} \, dt \\
	&\quad + \int_0^{\tau_n} \ee^{-\del t} \Bigg[ \frac{1}{2} \sig^2 \pi^2_t v_{xx} + (\mu \pi_t - c_t) v_x - \del v + \frac{c_t^{1-p}}{1-p}\Bigg] dt 
	+ \int_0^{\tau_n} \sig \, \ee^{-\del t} \pi_t \, v_x \, dW_t \\
	&\quad + \int_0^{\tau_n} \ee^{-\del t} \, v_z \, dz_t^c + \sum_{0 \le t \le {\tau_n}} \ee^{-\del t}\Big(v(X_t,z_{t^+}) - v(X_t,z_t)
	% - v_z(X_t,z_t) \Delta z_t
	\Big),
\end{align*}
in which $(z_t^c)$ is the continuous version of $(z_t)$ obtained by removing the jumps. Taking expectations of both sides yields
\begin{equation}\label{eq:verif1}
\begin{split}
	\E^x \int_0^{\tau_n} \ee^{-\del t} \, \frac{c_t^{1-p}}{1-p} \, \dd t &= 
	v(x,z) - \Eb^x\Big(\ee^{-\del {\tau_n}} \, v(X_{\tau_n}, z_{\tau_n})\Big)\\
	&\quad+ \E^x \int_0^{\tau_n} \ee^{-\del t} \Bigg[ \frac{1}{2} \sig^2 \pi^2_t v_{xx} + (\mu \pi_t - c_t) v_x - \del v + \frac{c_t^{1-p}}{1-p}\Bigg] \dd t\\
	&\quad+ \E^x \bigg(\int_0^{\tau_n} \ee^{-\del t} \, v_z \, \dd z_t^c + \sum_{0 \le t \le {\tau_n}} \ee^{-\del t} \Big(v(X_t,z_{t^+})-v(X_t,z_t) 
	% - v_z(X_t,z_t) \Delta z_t
	\Big) \bigg).
\end{split}
\end{equation}
By \eqref{eq:LB} and Conditions (i) and (ii), the terms on the right side involving expectations are non-positive; therefore,
\begin{equation}
	v(x,z) \ge
	 \E^x \int_0^{\tau_n} \ee^{-\del t} \, \frac{c_t^{1-p}}{1-p} \, \dd t.
\end{equation}
By letting $n\to \infty$, we obtain
\begin{equation}\label{eq:verif2}
	v(x,z) \ge \E^x \int_0^\tau \ee^{-\del t} \, \frac{c_t^{1-p}}{1-p} \, \dd t.
\end{equation}
The inequality $v \ge \Vt$ follows by taking the supremum over all $(\pi_t, c_t) \in \Cb(\al, z)$.		
\end{proof}

Next, we show that $V(x, z)\ge\Vt(x,z)$ for all $x, z \ge0$ by checking that $V$ satisfies the conditions in Lemma \ref{lem:verif}.

The differentiability conditions are readily verified from \eqref{eq:V-sol} and \eqref{eq:V_xbig}. Below, we check Conditions $(i)$ and $(ii)$ of Lemma \ref{lem:verif}.
\medskip

\noindent\textbf{Condition $(i)$:} On $\Rb_+^2 - \Dc$, from \eqref{eq:V_xbig} it follows that $V_z = 0$. It only remains to show that $V_z\le0$ on $\Dc$. Because $V(x, z) = z^{1 - p} U(x/z)$, inequality $V_z \le 0$ is equivalent to $(1 - p)U - w U_w \le 0$.  By rewriting this inequality in terms of $\Uh$, we deduce that $V_z \le 0$ is equivalent to $(1 - p) \Uh + p y \Uh_y \le 0$.  By construction, we know that $(1- p) \Uh(\ys) + p \ys \Uh_y(\ys) = 0$ and $(1 - p) \Uh(y_0) + p y_0 \Uh_y(y_0) = 0$. Thus, if we show that $(1 - p) \Uh + p y \Uh_y$ decreases then increases as $y$ increases from $\ys$ to $y_0$, then we will have shown that $V_z \le 0$ on $\Dc$.

Begin by considering $(1 - p) \Uh + p y \Uh_y$ on the interval $\ys \le y \le 1$; then, $(1 - p) \Uh + p y \Uh_y$ decreases if and only if
\begin{equation}
\dfrac{d}{dy} \( (1 - p) \Uh(y) + p y \Uh_y(y) \) = \Uh_y(y) + py \Uh_{yy}(y) \le 0,
\end{equation}
which, on this interval, is equivalent to
\begin{equation}\label{ineq:Vz_1}
C_5 + \kap \del \big( p(1 + \kap \del) - 1 \big) C_6 y^{-(1 + \kap \del)} + \dfrac{\kap}{1 + \kap \del} \big( 1 + p + \ln y \big) \le 0.
\end{equation}
By substituting the expressions for $C_5$ and $C_6$ given in \eqref{eq:C5_ys} and \eqref{eq:C6_ys}, respectively, and by simplifying the result, inequality \eqref{ineq:Vz_1} becomes
\begin{equation}
\( \dfrac{1}{\ys} - \dfrac{\kap \del}{1 + \kap \del} \) \( 1 - \( \dfrac{\ys}{y} \)^{1 + \kap \del} \) + \ln \dfrac{\ys}{y} \ge 0.
\end{equation}
Define $f$ by
\begin{equation}
f(x) = \( \dfrac{1}{\ys} - \dfrac{\kap \del}{1 + \kap \del} \) \( 1 - x^{1 + \kap \del} \) + \ln x,
\end{equation}
for $\ys \le x \le 1$.  First, show that $f(\ys) > 0$ for \textit{any} $\ys \in (0, 1)$. To that end, define $g$ by
\begin{equation}
g(x) = \( \dfrac{1}{x} - \dfrac{\kap \del}{1 + \kap \del} \) \( 1 - x^{1 + \kap \del} \) + \ln x,
\end{equation}
for $0 < x \le 1$.  It is easy to show that $\lim_{x \to 0+} g(x) = + \infty$ because $1/x$ dominates $\ln x$, $g(1) = 0$, and $g'(x) < 0$ for $0 < x < 1$.  Thus, $g(x) > 0$ for all $0 < x < 1$, which implies that $f(\ys) > 0$.  Next, observe that $f(1) = 0$, and
\begin{equation}
f''(x) = - \kap \del (1 + \kap \del) \( \dfrac{1}{\ys} - \dfrac{\kap \del}{1 + \kap \del} \) x^{\kap \del - 1} - \dfrac{1}{x^2} < 0.
\end{equation}
Because $f$ is concave, we deduce that $f(x) \ge 0$ for all $\ys \le x \le 1$.  Thus, we have shown that $(1 - p) \Uh + p y \Uh_y$ decreases on the interval $\ys \le y \le 1$, which implies that $V_z \le 0$ for $\wo z \le x \le \ws z$.

Next, consider $(1 - p) \Uh + p y \Uh_y$ on the interval $1 < y < \al^{-p}$; then, $(1 - p) \Uh + p y \Uh_y$ decreases on this interval if and only if
\begin{equation}\label{ineq:Vz_2}
C_3 + \kap \del \big( p(1 + \kap \del) - 1 \big) C_4 y^{-(1 + \kap \del)} \le 0.
\end{equation}
Rewrite $C_3$ and $C_4$ from \eqref{eq:C3} and \eqref{eq:C4} in terms of $\ys$ by using the equations in \eqref{eq:etas_ys}, that is,
\begin{equation}\label{eq:C3_ys}
C_3 = - \, \dfrac{\kap}{1 + \kap \del} \left\{ \dfrac{1}{\ys} + \ln \ys - 1 \right\} ,
\end{equation}
and
\begin{equation}\label{eq:C4_ys}
C_4 = \dfrac{1}{\del(1 + \kap \del) \big( p(1 + \kap \del) - 1 \big)} \left\{ - \, \dfrac{1}{1 + \kap \del} + (\ys)^{\kap \del} - \dfrac{\kap \del}{1 + \kap \del} (\ys)^{1 + \kap \del} \right\}.
\end{equation}
Substitute $C_3$ and $C_4$ from \eqref{eq:C3_ys} and \eqref{eq:C4_ys}, respectively, into inequality \eqref{ineq:Vz_2} and simplify the result to obtain
\begin{equation}\label{ineq:Vz_3}
\left\{ \dfrac{1}{\ys} + \ln \ys - 1 \right\} + \left\{ \dfrac{1}{1 + \kap \del} - (\ys)^{\kap \del} + \dfrac{\kap \del}{1 + \kap \del} (\ys)^{1 + \kap \del} \right\} y^{-(1 + \kap \del)} \ge 0.
\end{equation}
We will demonstrate that inequality \eqref{ineq:Vz_3} holds by showing that the expressions in each of the curly brackets is positive for \textit{any} value of $\ys \in (0, 1)$.  First, define $h$ by
\begin{equation}
h(x) = \dfrac{1}{x} + \ln x - 1,
\end{equation}
for $0 < x \le 1$.  Note that $h(0+) = + \infty$, $h(1) = 0$, and $h'(x) < 0$ for $0 < x < 1$; thus, $h(x) > 0$ for $0 < x < 1$.  Second, define $j$ by
\begin{equation}
j(x) = \dfrac{1}{1 + \kap \del} - x^{\kap \del} + \dfrac{\kap \del}{1 + \kap \del} x^{1 + \kap \del},
\end{equation}
for $0 \le x \le 1$.  Note that $j(0) = \frac{1}{1 + \kap \del} > 0$, $j(1) = 0$, and $j'(x) < 0$ for $0 < x < 1$; thus, $j(x) > 0$ for $0 < x < 1$.  Thus, we have shown that $(1 - p) \Uh + p y \Uh_y$ decreases on the interval $1 < y < \al^{-p}$, which implies that $V_z \le 0$ for $\wa z < x < \wo z$.

Finally, consider $(1 - p) \Uh + p y \Uh_y$ on the interval $\al^{-p} \le y \le y_0$.  On this interval, we will show that $(1 - p) \Uh + p y \Uh_y$ is convex (that is, this expression first decreases then increases as $y$ increases from $\al^{-p}$ to $y_0$), which is equivalent to showing that
\begin{equation}\label{ineq:Vz_4}
\dfrac{d}{dy} \( C_1 + \kap \del \big( p(1 + \kap \del) - 1 \big) C_2 y^{-(1 + \kap \del)} + \dfrac{\kap \al}{1 + \kap \del} \big( 1 + p + \ln y \big) \) > 0.
\end{equation}
After substituting the expression for $C_2$ from \eqref{eq:C2}, writing $\etas = y_0 \al^p$, and simplifying the result, inequality \eqref{ineq:Vz_4} becomes
\begin{equation}
\dfrac{1}{1 + \kap \del} - \big( p(1 + \kap \del) - 1 \big) \left\{ \dfrac{\kap \del}{1 + \kap \del} - \dfrac{1}{\etas(1 - p)} \right\} \( \dfrac{y_0}{y} \)^{1 + \kap \del} > 0.
\end{equation}
Define $k$ by
\begin{equation}
k(x) = \dfrac{1}{1 + \kap \del} - \big( p(1 + \kap \del) - 1 \big) \left\{ \dfrac{\kap \del}{1 + \kap \del} - \dfrac{1}{\etas(1 - p)} \right\} x^{1 + \kap \del},
\end{equation}
for $1 \le x \le \etas$.
\begin{equation}
k'(x)= - (1 + \kap \del) \big( p(1 + \kap \del) - 1 \big) \left\{ \dfrac{\kap \del}{1 + \kap \del} - \dfrac{1}{\etas(1 - p)} \right\} x^{\kap \del}.
\end{equation}
$k'(x) < 0$ because $p(1 + \kap \del) - 1 > 0$ and the expression in the curly brackets is positive; the latter follows from the second equation in \eqref{eq:etas_ys}.  Thus, to show that $k(x) > 0$ for $1 \le x \le \etas$, it is enough to show that $k(\etas) > 0$.  After using the second equation in \eqref{eq:etas_ys} to rewrite $k(\etas) > 0$ in terms of $\ys$, we obtain
\begin{equation}
\al^{p(1 + \kap \del) - 1} \( \dfrac{1}{1 + \kap \del} - (\ys)^{\kap \del} + \dfrac{\kap \del}{1 + \kap \del} (\ys)^{1 + \kap \del} \) > 0,
\end{equation}
which we know is true from having shown that $j(x) > 0$ for all $0 < x < 1$.  Thus, we have shown that $(1 - p) \Uh + p y \Uh_y$ is convex on the interval $\al^{-p} \le y \le y_0$, which implies that $V_z \le 0$ for $0 \le x \le \wa z$.
\medskip

\noindent\textbf{Condition $(ii)$:} This condition is satisfied on $\Dc$ because, by construction, the expression for $V$ given in \eqref{eq:V-sol} satisfies the free-boundary problem in \eqref{eq:HJB-FB}. To check the condition on $\Rb_+^2 - \Dc$, we need to show that 
\begin{equation}
\mathcal{L}^{\pi, c} \, V(x, z) := \dfrac{1}{2} \, \sig^2 \pi^2 V_{xx}(x, z) + (\mu \pi - c) V_x(x, z) + \frac{c^{1-p}}{1-p}  - \del V(x, z) \le 0,
\end{equation}
for all $\pi \in \Rb$ and $c \ge \al x/\ws$, and for all $x > \ws z$. Note that on $\Rb_+^2 - \Dc$, we have obtained the expression for $V$ given in \eqref{eq:V_xbig} from $V(x, z) = V(x, x/\ws)$, in which $V(x, x/\ws)$ is given by \eqref{eq:V-sol}.  From the free-boundary conditions $V_z(x, x/\ws) = 0$ and $V_{xz}(x, x/\ws) = 0$, we deduce that, for $x > \ws z$,
\begin{equation}
\dfrac{\partial}{\partial x} V(x, z) = \dfrac{\partial}{\partial x} V(x, x/\ws) = V_x(x, x/\ws) + \dfrac{1}{\ws} V_z(x, x/\ws) = V_x(x, x/\ws),
\end{equation}
and
\begin{equation}
\dfrac{\partial^2}{\partial x^2} V(x, z) = \dfrac{\partial}{\partial x} V_x(x, x/\ws) = V_{xx}(x, x/\ws) + \dfrac{1}{\ws} V_{xz}(x, x/\ws) = V_{xx}(x, x/\ws).
\end{equation}
Thus, for $x > \ws z$, $\mathcal{L}^{\pi, c} \, V(x, z) = \mathcal{L}^{\pi, c} \, V(x, x/\ws) \le 0$ because $(x, x/\ws) \in \Dc$, and $\mathcal{L}^{\pi, c} \, V \le 0$ on $\Dc$.

\subsection{Showing that $V(x,z)\le\Vt(x,z)$}\label{app:LBound}

Let the functions $\pi^*(x, z)$ and $c^*(x,z)$ be given by \eqref{eq:pis}--\eqref{eq:cs_xbig} on $\Rb^2_+$. To show that $V(x,z)\le \Vt(x,z)$ for all $x,z\ge0$, it suffices to show that the following two conditions hold. For all $x,z\ge0$,
\begin{enumerate}
	\item[($iii$)] the following stochastic differential equation (SDE) has a unique strong solution $(X^*_t)_{t\ge0}$,
	\begin{align}\label{eq:XSTAR}
	\begin{cases}
		\dd X_t^* = \Big(\mu\,\pi^*\left(X^*_t, \frac{M^*_t}{\ws}\right) - c^*\left(X^*_t, \frac{M^*_t}{\ws}\right)\Big) \dd t + \sig \pi^*\left(X^*_t, \frac{M^*_t}{\ws}\right) \, \dd W_t;\quad t\ge0,\\
		M^*_t = \max \left\{\ws\,z, \displaystyle\sup_{0 \le s < t} X^*_s\right\};\quad t\ge0,\\
		X^*_0=x.
	\end{cases}
	\end{align}
	Furthermore, the feedback investment and dividend policies $(\pi^*_t,c^*_t):=\big(\pi^*(X^*_t, z^*_t), c^*(X^*_t, z^*_t)\big)$ are admissible.
	
	\item[($iv$)] $V(x, z) = \E^x \displaystyle \int_0^{\tau} \ee^{-\del t} \, \frac{(c_t^*)^{1-p}}{1-p} \, \dd t$.
\end{enumerate}

	We prove these conditions below.\medskip

\noindent	\textbf{Condition $(iii)$:} By construction of $\pis$ and $\cs$, if the processes $(X^*_t)$ and $(z^*_t)$ satisfying \eqref{eq:XSTAR} exist, then $(\pi^*_t,c^*_t):=\big(\pi^*(X^*_t, z^*_t), c^*(X^*_t, z^*_t)\big)$ are admissible. To show this, we exploit the property of the feedback control functions $\pi^*$ and $c^*$ that was explained in Remark \ref{rem:MStar}. Namely, following these feedback controls, we have that the historical peak of the dividend rate $(z^*_t)$ satisfies  $z^*_t = \frac{M^*_t}{\ws}$ for all $t\ge0$, where $(M^*_t)$ is given by
\begin{align}
	M_t^* = \max \Big\{ M_0^*, ~ \max_{0 \le s <  t} X_s \Big\}; \quad t > 0,
\end{align}
and $M^*_0 = \ws z_0$. Therefore, it suffices to
% show that, for all $x,z>0$, the following stochastic differential equation (SDE) has a unique strong solution $(X^*_t)_{t\ge0}$,
% \begin{align}
% \begin{cases}
% 	\dd X_t^* = \Big(\mu\,\pi^*\left(X^*_t, \frac{M^*_t}{\ws}\right) - c^*\left(X^*_t, \frac{M^*_t}{\ws}\right)\Big) \dd t + \sig \pi^*\left(X^*_t, \frac{M^*_t}{\ws}\right) \, \dd W_t;\quad t\ge0,\\
% 	M^*_t = \max \left\{\ws\,z, \displaystyle\sup_{0 \le s < t} X^*_s\right\};\quad t\ge0,\\
% 	X^*_0=x.
% \end{cases}
% \end{align}
% It only remains to
show that \eqref{eq:XSTAR} has a unique strong solution.
This is a path-dependent SDE. By Theorem 7 in Section 3 of Chapter 5 of \cite{Protter2005}, it suffices to show that the functionals
\begin{align}
	\Gv(t, \xv) := \pi^*\Bigg(\xv(t),\frac{1}{\ws}\, \max \left\{\ws\,z, \displaystyle\sup_{0 \le s < t} \xv(s)\right\} \Bigg)
\intertext{and}
	\Fv(t, \xv) := c^*\Bigg(\xv(t), \frac{1}{\ws}\, \max \left\{\ws\,z, \displaystyle\sup_{0 \le s < t} \xv(s)\right\}\Bigg),
\end{align}
defined for $t\ge0$ and for continuous functions $\xv:\Rb_+\to\Rb_+$, are functional Lipschitz in the sense of \cite{Protter2005}. This property follows from the Lipschitz property of the $\pi^*$ and $c^*$ given in Lemma \ref{lem:Lipschitz} below. In particular, for all $t\ge0$ and continuous functions $\xv$ and $\yv$, we have
\begin{align}
	|\Gv(t,\xv) - \Gv(t,\yv)| &\le K\left[ \big| \xv(t) - \yv(t) \big|
	+ \left| \max \left\{\ws\,z, \displaystyle\sup_{0 \le s < t} \xv(s)\right\}
		 - \max \left\{\ws\,z, \displaystyle\sup_{0 \le s < t} \yv(s)\right\}\right| \right]\\*
	&\le 2\,K\, \displaystyle\sup_{0 \le s \le t} \big| \xv(s)-\yv(s) \big|.
\end{align}
Thus, $\Gv$ is functional Lipschitz. That $\Fv$ is functional Lipschitz follows similarly.
\begin{lemma}\label{lem:Lipschitz}
	The functions $\pis(x, z)$ and $\cs(x, z)$ are Lipschitz on $\Rb_+^2$.
\end{lemma}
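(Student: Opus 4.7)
The plan is to exploit the fact that, as one sees directly from \eqref{eq:pis}--\eqref{eq:cs_xbig} together with the implicit equation $-\Uh_y(y)=x/z$, both $\pis$ and $\cs$ are positively homogeneous of degree one in $(x,z)$. Setting $\hat c(w):=\cs(w,1)$ and $\hat\pi(w):=\pis(w,1)$ for $w\ge 0$, one can write $\cs(x,z)=z\,\hat c(x/z)$ and $\pis(x,z)=z\,\hat\pi(x/z)$ for $z>0$, extended by continuity at $z=0$ via $\cs(x,0)=x/\ws$ and the analogous expression for $\pis$. I will then show that both $\cs$ and $\pis$ are continuous on $\Rb_+^2$ and have uniformly bounded classical gradients on the interior of each of the four conical regions $\{0<x<\wa z\}$, $\{\wa z<x<\wo z\}$, $\{\wo z<x<\ws z\}$, and $\{x>\ws z\}$. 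Since these cones share rays through the origin as common boundaries and the functions match continuously across them, a standard argument (connecting any two points of $\Rb_+^2$ by a broken line meeting each cone in at most one segment) promotes the cone-wise bound to a global Lipschitz bound on $\Rb_+^2$.

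Wherever $\hat c$ is $C^1$ one has $\partial_x \cs(x,z)=\hat c'(x/z)$ and $\partial_z \cs(x,z)=\hat c(x/z)-(x/z)\,\hat c'(x/z)$, and analogously for $\pis$; so the task reduces to showing that $\hat c,\hat c',\hat\pi,\hat\pi'$, together with $w\hat c'(w)$ and $w\hat\pi'(w)$, are uniformly bounded in $w\in\Rb_+$. On the two intervals $[0,\wa]$ and $[\wo,\ws]$, the function $\hat c$ is the constant $\alpha$ and $1$ respectively, so its derivative vanishes; on the cone $(\ws,\infty)$, formulas \eqref{eq:pis_xbig}--\eqref{eq:cs_xbig} give $\hat c(w)=w/\ws$ and $\hat\pi(w)$ linear in $w$, so the combinations $\hat c-w\hat c'$ and $\hat\pi-w\hat\pi'$ vanish identically, and $\hat c'$, $\hat\pi'$ are constants there.

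The main technical step is the analysis on the middle region $w\in(\wa,\wo)$, and also for $\hat\pi$ on $[0,\wa]$ and $[\wo,\ws]$, where \eqref{eq:pis} and \eqref{eq:cs} involve the implicitly defined $y(w)$ satisfying $-\Uh_y(y)=w$. Proposition \ref{prop:Uh} tells us that $\Uh$ is strictly convex with continuous second derivative on $(\ys,y_0)$, so on each of the compact subintervals $[\ys,1]$, $[1,\alpha^{-p}]$, $[\alpha^{-p},y_0]$ the quantity $\Uh_{yy}$ attains a positive minimum. The implicit function theorem then yields $y'(w)=-1/\Uh_{yy}(y(w))$ uniformly bounded on each of the three intervals $[0,\wa]$, $[\wa,\wo]$, $[\wo,\ws]$; since $\Uh$ is given by the explicit analytic expression \eqref{eq:Uh} on each piece, $\Uh_{yyy}$ is continuous and bounded there as well, and the chain rule delivers the required bounds on $\hat c'$ and $\hat\pi'$. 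This is the step I expect to be the most delicate, as it is the one that genuinely exploits the regularity and strict convexity of $\Uh$ from Proposition \ref{prop:Uh}.

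Finally, continuity of $\hat c$ and $\hat\pi$ across the junctions at $\wa$, $\wo$, $\ws$ is verified from $y(\wa)=\alpha^{-p}$, $y(\wo)=1$, $y(\ws)=\ys$, the trivial matching $\hat c(\ws^-)=1=\ws/\ws$, and the analogous matching for $\hat\pi$ at $\ws$ built into \eqref{eq:pis_xbig}. Combining this continuity with the cone-wise gradient bounds above and the broken-line argument mentioned in the first paragraph yields global Lipschitz constants for $\cs$ and $\pis$ on $\Rb_+^2$, completing the proof.
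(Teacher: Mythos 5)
Your proposal is correct and takes essentially the same route as the paper: exploit degree-one homogeneity to write $\pis(x,z)=z\,\hat\pi(x/z)$ and $\cs(x,z)=z\,\hat c(x/z)$, bound the one-variable profiles and their derivatives piecewise (linear beyond $\ws$, with the combination $\hat\pi-w\hat\pi'$ vanishing there), and conclude Lipschitz continuity from bounded partial derivatives. The only difference is that you spell out, via the dual variable $y(w)$ and the strict convexity of $\Uh$, the derivative bounds and the gluing across the rays $w=\wa,\wo,\ws$ that the paper compresses into the assertion that $f(w)=\pis(w,1)$ is smooth on $[0,\ws]$ with bounded first and second derivatives.
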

\begin{proof}
	We prove the statement for $\pis$; the Lipschitz property of $\cs$ follows similarly. By \eqref{eq:pis} and \eqref{eq:pis_xbig}, we have
	\begin{align}\label{eq:Aux1}
		\pis(x,z) = z\, f\bigg(\frac{x}{z}\bigg),
	\end{align}
	in which $f$ is defined by $f(w) = \pis(w,1)$. Note that $f$ is continuously twice differentiable on $[0, \ws]$, and $f(w) = K w$ for $w\ge \ws$, in which $K$ equals
	\begin{align}
		K = - \, \dfrac{\mu}{\sig^2} \, \dfrac{U_w(\ws)}{\ws U_{ww}(\ws)}.
	\end{align}
It follows that $f$ has bounded first and second derivatives. Differentiating \eqref{eq:Aux1} with respect to $x$ yields
	\begin{align}
		\Big| \pis_x(x,z) \Big| = \left| f' \bigg(\frac{x}{z}\bigg) \right| \le \sup_{w\ge0} \Big| f'(w) \Big| < \infty.
	\end{align}
	Furthermore, by differentiating \eqref{eq:Aux1} with respect to $z$, we obtain
	\begin{align}
		\Big| \pis_z(x,z) \Big| = \left| f\bigg(\frac{x}{z}\bigg) - \frac{x}{z}\,f'\bigg(\frac{x}{z}\bigg) \right| 
		\le \frac{(\ws)^2}{2}\sup_{0\le w\le \ws} \Big| f''(w) \Big| < \infty.
	\end{align}
	To obtain the inequality, we used the fact that, for $\frac{x}{z}\ge \ws$, 
	\begin{align}
		f\left(\frac{x}{z}\right) - \frac{x}{z}\,f'\left(\frac{x}{z}\right) = 0,
	\end{align}
	and that, by Taylor's approximation,
	\begin{align}
		\left| f\bigg(\frac{x}{z}\bigg) - \frac{x}{z}\,f'\bigg(\frac{x}{z}\bigg) \right| \le \frac{(\ws)^2}{2}\sup_{0\le w\le \ws} \Big| f''(w) \Big|,
	\end{align}
	for all $0<\frac{x}{z}\le \ws$. Finally, we deduce that $\pis$ is Lipschitz since it has bounded derivatives.
\end{proof}
\medskip

\noindent\textbf{Condition $(iv)$:} Define
	\begin{equation}
		\tauh_n = n \wedge \tau \wedge \inf \left\{t > 0: \int_0^t \ee^{-\del s} (\pi_s^*)^{2} V_x^2(X_s^*,z^*_s) \dd s \ge n\text{ or } z^*_t \ge n \right\},
	\end{equation}
	and note that $\tau_n\to \tau$ a.s.\ because of continuity of $(X^*_t, z^*_t)_{t>0}$.
	% \begin{equation}
	% 	\tau^* = \inf \big\{ t \ge 0 : X^*_t \le 0 \big\}.
	% \end{equation}
	Repeating the argument in the proof of Lemma \ref{lem:verif} for $\pi=\pis$, $c=\cs$, and $\tau_n=\tauh_n$ yields
	\begin{equation}\label{eq:optim-1}
	\begin{split}
		 V(x,z)&= 
		 \E^x \int_0^{\tauh_n} \ee^{-\del t} \, \frac{(c_t^*)^{1-p}}{1-p} \, \dd t
		 + \Eb^x\Big(\ee^{-\del {\tauh_n}} \, V(X^*_{\tauh_n}, z^*_{\tauh_n})\Big)\\
		&\quad- \E^x \int_0^{\tauh_n} \ee^{-\del t} \Bigg[ \frac{1}{2} \sig^2 (\pi^*)^{2}_t V_{xx} + (\mu \pis_t - \cs_t) V_x - \del V + \frac{(c_t^*)^{1-p}}{1-p}\Bigg] \dd t\\
		&\quad- \E^x \bigg(\int_0^{\tauh_n} \ee^{-\del t} \, V_z \, \dd (z_t^*)^{c} + \sum_{0 \le t \le {\tauh_n}} \ee^{-\del t} \Big(V(X^*_t,z^*_{t^+}) - V(X^*_t,z^*_t)
		\Big) \bigg).
	\end{split}
	\end{equation}
	
	As stated in Remark \ref{rem:MStar} and the discussion that precedes it, the process $(z^*_t)$ defined in \eqref{eq:XSTAR} can only have a jump at $t=0$ and only if $x>\ws z$. After this possible initial jump, the process $(X^*_t, z^*_t)_{t > 0}$ will be kept in the domain $\Dc$. Because $V$ satisfies the free-boundary problem in \eqref{eq:HJB-FB} and $(X^*_t, z^*_t)\in\Dc$ for $t>0$, we have
	\begin{align}
		\E^x \int_0^{\tauh_n} \ee^{-\del t} \Bigg[ \frac{1}{2} \sig^2 (\pi^*)^{2}_t V_{xx} + (\mu \pis_t - \cs_t) V_x - \del V + \frac{(c_t^*)^{1-p}}{1-p}\Bigg] \dd t=0.
	\end{align}
	Furthermore, because $V_z(x,z)=0$ on $\Rb_+^2 - \Dc$ and $(z^*_t)$ can only jump at $t=0$, we have
	\begin{align}
		\E^x \bigg(\int_0^{\tauh_n} \ee^{-\del t} \, V_z \, \dd (z_t^*)^{c} + \sum_{0 \le t \le {\tauh_n}} \ee^{-\del t} \Big(V(X^*_t,z^*_{t^+})-V(X^*_t,z^*_t)
				\Big) \bigg)
		= V(x,z^*_{0^+})-V(x,z) = 0.
	\end{align}
	From \eqref{eq:optim-1}, it then follows that
	\begin{align}\label{eq:optim-2}
		 V(x,z)= 
		 \E^x \int_0^{\tauh_n} \ee^{-\del t} \, \frac{(c_t^*)^{1-p}}{1-p} \, \dd t
		 + \Eb^x\Big(\ee^{-\del {\tauh_n}} \, V(X^*_{\tauh_n}, z^*_{\tauh_n})\Big).
	\end{align}
	
	Next, we prove that
	\begin{align}\label{eq:transversality}
		\underset{n\to\infty}{\lim\inf}\, \Eb^x \Big(\ee^{-\del \tauh_n} V(X^*_{\tauh_n}, z^*_{\tauh_n}) \Big) = 0.
	\end{align}
	For $\al=0$, this equation follows from the so-called \emph{transversality condition} of the value function in the classical Merton's problem. By Lemma \ref{lem:V_al} below, $V$ decreases as $\al$ increases. Thus, \eqref{eq:transversality} is also satisfied for all $\al \in (0, 1)$.
	
\begin{lemma}\label{lem:V_al}
The expression for $V$ in \eqref{eq:V-sol} and \eqref{eq:V_xbig} decreases with respect to $\al$, and as $\al$ approaches $0$, $V$ in \eqref{eq:V-sol} and \eqref{eq:V_xbig} approaches
\begin{equation}\label{eq:V_al0}
\dfrac{x^{1-p}}{1 - p} \( \dfrac{\kap p^2}{p(1 + \kap \del) - 1} \)^p,
\end{equation}
for all $x \ge 0$, independent of $z$.
% Moreover, the expression in \eqref{eq:V_al0} satisfies the transversality condition, that is, Condition $(iv)$ in Corollary \ref{cor:Optimal}.
\end{lemma}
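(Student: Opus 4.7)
The plan is to split the proof into the two regions on which $V$ has different explicit forms, namely $\Dc$ and $\Rb_+^2 - \Dc$, handling monotonicity in $\al$ on each, and then separately computing the limit as $\al \to 0+$. For monotonicity on $\Dc$, the key tool is the Legendre duality $V(x,z) = z^{1-p} U(x/z)$ with $U(w) = \Uh(y) - y\Uh_y(y)$ where $y$ is determined by $-\Uh_y(y) = w$. I would perform an envelope-type calculation: differentiating $U$ in $\al$ at fixed $w$, implicit differentiation of the constraint $-\Uh_y(y) = w$ yields $\Uh_{yy}\partial_\al y + \partial_\al \Uh_y = 0$, so the terms in $\partial_\al y$ cancel, leaving $\partial_\al U(w) = \partial_\al \Uh(y,\al)$ evaluated at fixed $y$. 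Corollary \ref{cor:Uh_al} then gives $\partial_\al U \le 0$ and hence $\partial_\al V \le 0$ on $\Dc$.

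For $(x,z) \in \Rb_+^2 - \Dc$, the expression \eqref{eq:V_xbig} depends on $\al$ only through $\ws$, so a logarithmic differentiation will produce
\[
\frac{\partial_\al V}{V} \propto \frac{\ws_\al \,(1-p)}{\ws\bigl[(p(1+\kap\del)-1)\ws + \kap p(1-p)\bigr]}\,\bigl[\kap p^2 - (p(1+\kap\del)-1)\ws\bigr].
\]
Substituting $\ws$ from \eqref{eq:ws} collapses the bracket to $\kap p(1 - 1/\ys)$, which is negative since $0 < \ys < 1$. Combined with $\ws_\al > 0$ from Corollary \ref{cor:ws_al}, this yields $\partial_\al V \le 0$ in this region too.

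For the limit as $\al \to 0+$, the plan is to first extract the asymptotics of $(\ys,\etas,y_0)$ from the system \eqref{eq:etas_ys}. Letting $\al \to 0$ in the first equation reduces it to $\ln y + 1/y = 1$, whose only solution in $(0,1]$ is $y = 1$, so $\ys \to 1$. Dividing the second equation by $\al^{1-p(1+\kap\del)} \to \infty$ (using $p > 1/(1+\kap\del)$) yields a finite limiting equation for $\etas$, so $\etas \to \eta_0 \in (1/(1-p), \infty)$ and $y_0 = \etas \al^{-p} \to \infty$. Substituting $\ys = 1$ into \eqref{eq:ws} and \eqref{eq:wo} shows $\ws, \wo \to \ws_0 := \kap p^2/(p(1+\kap\del)-1)$, while $\wa = O(\al) \to 0$ from \eqref{eq:wa}. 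Consequently Regions~1 and~3 of \eqref{eq:V-sol} degenerate and, for every $(x,z)$ with $0 < x/z < \ws_0$, the ratio $x/z$ eventually falls in Region~2. On Region~2, I expect the algebraic identity $\kap\del/(1+\kap\del) - 1 + 1/(1+\kap\del) = 0$ to force the coefficient of $y^{-\kap\del}$ to vanish at $\ys = 1$, reducing the formula to a pure multiple of $y^{-(1-p)/p}$; simultaneously, the corresponding equation in \eqref{eq:y-sol} collapses to $x/z = [\kap p^2/(p(1+\kap\del)-1)]\, y^{-1/p}$. Substituting these into Region~2 produces \eqref{eq:V_al0}. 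On $\Rb_+^2 - \Dc$, direct substitution $\ws \to \ws_0$ into \eqref{eq:V_xbig} simplifies the denominator to $\kap p$ and gives the same limit, so the answer is independent of $z$.

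The main obstacle will be the matched asymptotic analysis in the third paragraph: verifying that $(\ys,\etas)$ have the claimed limits from the implicit system \eqref{eq:etas_ys}, and then checking the crucial algebraic cancellation that makes the $y^{-\kap\del}$ term in Region~2 vanish as $\ys \to 1$—this is precisely what allows the Region~2 formula to degenerate into the Merton power function and is not obvious by inspection of \eqref{eq:V-sol}.
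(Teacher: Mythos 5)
Your proposal is correct and takes essentially the same route as the paper: monotonicity on $\Dc$ via Corollary \ref{cor:Uh_al} (your envelope computation just makes explicit the step, asserted in the paper, that $\Uh_\al\le 0$ implies $U_\al\le 0$), monotonicity on $\Rb_+^2-\Dc$ by differentiating \eqref{eq:V_xbig} through $\ws$ and invoking Corollary \ref{cor:ws_al}, and the limit $\al\to 0+$ via $\ys\to 1$, $\wa\to 0$, $\wo,\ws\to \kap p^2/\big(p(1+\kap\del)-1\big)$ followed by substitution. The only differences are matters of detail: you derive these limits from the system \eqref{eq:etas_ys} and verify the vanishing of the $y^{-\kap\del}$ coefficient, where the paper simply asserts the limits and substitutes, and your identity $\kap p^2-\big(p(1+\kap\del)-1\big)\ws=\kap p\,(1-1/\ys)<0$ replaces the paper's appeal to $\lim_{\al\to 0+}\ws$ together with the monotonicity of $\ws$.
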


\begin{proof}
From Corollary \ref{cor:Uh_al}, we know that $\Uh$ decreases with $\al$, which implies that $U$ decreases with $\al$; thus, the expression in \eqref{eq:V-sol} also decreases with $\al$.  To see that the expression in \eqref{eq:V_xbig} when $x > \ws z$ decreases with $\al$, differentiate to obtain
\begin{equation}
\dfrac{\partial V}{\partial \al} \propto \dfrac{\kap p^2}{p(1 + \kap \del) - 1} - \ws,
\end{equation}
because $\ws$ increases with $\al$, as we show in Corollary \ref{cor:ws_al} below.  This expression is negative for $\al \in (0, 1)$ because $\lim \limits_{\al \to 0+} \ws = \frac{\kap p^2}{p(1 + \kap \del) - 1}$.

Next, if we allow $\al$ to approach $0$, $\ys$ approaches $1$, $\wa$ approaches $0$, and $\wo$ and $\ws$ both approach $\frac{\kap p^2}{p(1 + \kap \del) - 1}$.  By substituting these limits in $V$ in \eqref{eq:V-sol} and \eqref{eq:V_xbig}, we obtain the expression in \eqref{eq:V_al0}.
% which satisfies the transversality Condition $(iv)$ in Corollary \ref{cor:Optimal} because it is the value function of the corresponding Merton problem, which is well known to satisfy the transversality condition.
\end{proof}	
	
Finally, by letting $n\to\infty$ in \eqref{eq:optim-2} and using $V(0,z) = 0$, \eqref{eq:transversality}, and the monotone convergence theorem, we obtain $V(x,z) = \E^x \displaystyle \int_0^{\tau} \ee^{-\del t} \, \frac{(c_t^*)^{1-p}}{1-p} \, \dd t$.

\end{document}